\titleformat*{\section}{\Large\bfseries}
\titleformat*{\subsection}{\large\sc}
\titleformat*{\subsubsection}{\itshape}
\begin{document}

\title{{\bf A dynamic that evolves toward a Nash equilibrium}}

\author{\large{Ioannis Avramopoulos}\thanks{RelationalAI, Inc.}}

\date{}

\maketitle

\thispagestyle{empty} 

\newtheorem{definition}{Definition}
\newtheorem{proposition}{Proposition}
\newtheorem{theorem}{Theorem}
\newtheorem*{theorem*}{Theorem}
\newtheorem{corollary}{Corollary}
\newtheorem{lemma}{Lemma}
\newtheorem{axiom}{Axiom}
\newtheorem{thesis}{Thesis}

\vspace*{-0.2truecm}

\begin{abstract}
In this paper, we study an exponentiated multiplicative weights dynamic based on Hedge, a well-known algorithm in theoretical machine learning and algorithmic game theory. The empirical average (arithmetic mean) of the iterates Hedge generates is known to approach a minimax equilibrium in zero-sum games. We generalize that result to show that a weighted version of the empirical average converges to an equilibrium in the class of symmetric bimatrix games for a diminishing learning rate parameter. Our dynamic is the first dynamical system (whether continuous or discrete) shown to evolve toward a Nash equilibrium without assuming monotonicity of the payoff structure or that a potential function exists. Although our setting is somewhat restricted, it is also general as the class of symmetric bimatrix games captures the entire computational complexity of the PPAD class (even to approximate an equilibrium).
\end{abstract}

\section{Introduction}


{\em Game theory} is a mathematical discipline concerned with the study of algebraic, analytic, and other objects that abstract the physical world, especially social interactions. The most important {\em solution concept} in game theory is the {\em Nash equilibrium} \citep{Nash}, a strategy profile (combination of strategies) in an $N$-player game such that no unilateral player deviations are profitable. The Nash equilibrium is an attractive solution concept, for example, as Nash showed, an equilibrium is guaranteed to exist in any $N$-person game. Over time this concept has formed a basic cornerstone of {\em economic theory,} but its reach extends beyond economics to the natural sciences and biology.

One of the limitations of Nash equilibrium as a plausible solution concept is that we do not have an efficient algorithm for computing one. In fact, it has been conjectured that Nash equilibrium computation is intractable as it is complete for the complexity class PPAD \citep{Daskalakis, CDT}. This class, introduced by \cite{PPAD}, contains a variety of related problems (such as computing Brouwer fixed points) that we don't have efficient algorithms for. Thus, there is a gap between game theory and the theory of computing. In this paper, we take a step toward reconciling these disciplines using {\em dynamical systems theory} as an intermediate step. 

The notion of {\em equilibrium} admits various definitions in the mathematical sciences. One such standard definition is as a {\em fixed point} of a dynamical system. Research in dynamical systems is hardly content with identifying the fixed points of a (continuous) flow or a (discrete) map. What is ultimately important in this mathematical branch is to understand the dynamic evolution of system trajectories whether near fixed points (or near, for example, {\em limit cycles}) or globally. The Nash equilibrium can also be understood as a fixed point, for example, of the {\em best response correspondence} of a game (and other dynamics). But dynamical systems (as, for example, studied in theory of learning in games \citep{Fudenberg-Levine, Cesa-Bianchi} or evolutionary game theory \citep{Weibull, PopulationGames}) whose trajectories evolve toward a Nash equilibrium (generically, without restrictive assumptions on the payoff structure) elude us. Since dynamical systems are algorithms, progress in this direction evidently informs algorithmic research.

\subsection{Our question}

There are classes of games where equilibrium computation is known to be tractable: {\em Zero-sum games} are equivalent to linear programming and thus minimax equilibrium computation admits a polynomial-time algorithm. A bimatrix game is called {\em symmetric} if the payoff matrix of each player is the transpose of that of other. Every symmetric $N$-person game (and, thus every symmetric bimatrix game) admits a symmetric Nash equilibrium \citep{Nash2}.  If the payoff matrix of the symmetric game is also symmetric, then the game is called {\em doubly symmetric}. Symmetric equilibrium computation in doubly symmetric games admits a fully polynomial time approximation scheme (FPTAS) \citep{Ye}. In this paper, we consider the problem of computing (using the term {\em computation} in a broad sense as a dynamical system is not necessarily a Turing machine) a symmetric equilibrium in a symmetric bimatrix game (that is not necessarily doubly symmetric).

\cite{CDT} show that finding a Nash equilibrium in a $2$-person game is a PPAD-complete problem and that an equilibrium FPTAS in these games (under either of ``additive" or ``multiplicative" notions of payoff approximation) implies P = PPAD. \cite{Avramopoulos2}, drawing on \citep{Jurg}, shows that an FPTAS for a symmetric equilibrium in symmetric bimatrix games also implies that P=PPAD. Thus approximating an equilibrium in the class of games we consider is conjectured to be a hard problem. In this paper, we question that belief. Our result was motivated by an elementary question concerning {\em multiplicative weights dynamics}.

An important result at the intersection of theoretical machine learning and game theory is that the {\em empirical average} of the strategies generated by a well-known multiplicative weights algorithm, namely, Hedge \citep{FreundSchapire1, FreundSchapire2}, approaches the minimax strategy of a respective zero-sum game. In doubly symmetric bimatrix games, Hedge is a multiplicative version of {\em gradient ascent} (with an additional gradient exponentiation step). Gradient ascent is known to converge to critical points of nonlinear optimization problems (for example, see \citep[p. 48]{Bertsekas}) and symmetric equilibrium computation in a doubly symmetric game is a special case of a quadratic programming. It is natural to expect that by tuning Hedge's learning-rate parameter, obtaining convergence in doubly symmetric games falls within the realm of possibilities of this algorithm.

But every symmetric bimatrix game is the sum of a doubly symmetric game and a symmetric zero-sum game as every matrix can be decomposed as (see \citep{Horn}): 
\begin{align*}
C = \frac{1}{2} (C+C^T) + \frac{1}{2} (C-C^T).
\end{align*}
Since the iterates of Hedge are expected to converge to a symmetric equilibrium in a doubly symmetric game and the empirical average of the iterates is known to converge to such an equilibrium in symmetric zero-sum games,
is it then possible that the empirical average of iterates converges to a symmetric equilibrium in the general class of symmetric bimatrix games? Our answer is affirmative.

\subsection{Our results and techniques}

Our analysis rests on using relative entropy and the related logarithmic function as potentials. That relative entropy facilitates analyzing multiplicative weights (and its continuous approximation, namely, the {\em replicator dynamic}) is well known (for example, see \citep{Weibull, FreundSchapire2}). Our use of the logarithmic function as a potential is related to a proof that the time average of the trajectory of the replicator dynamic in a symmetric bimatrix game with an interior equilibrium converges to that equilibrium \citep{Wolff} (see also \citep[p. 91]{Weibull}). Our main tool is that the composition of the relative entropy function with Hedge is a convex function of its learning rate parameter a result appears for the first time in \citep{Avramopoulos1}.\footnote{In that manuscript, I erroneously believed to have shown P = PPAD. The error is in Lemma 10.} 

\subsubsection{Evolution under a diminishing learning rate}

Our main result is Theorem \ref{asymptotic_convergence_theorem} showing that, under standard scheduling conditions for the learning rate, every limit point of the sequence of {\em weighted} averages of the iterates is a symmetric equilibrium strategy. (To compute the empirical average, our algorithm weighs each iterate with the value of the learning rate in the respective iteration.) The {\em Bolzano-Weierstrass theorem} then yields as a corollary that a symmetric equilibrium always exists in symmetric bimatrix games. Since our equilibrium computation problem is PPAD-complete, we obtain alternative proofs of existence of equilibria (fixed points) in a variety of related problems (such as $N$-person games). 

The rest of our paper is a deeper exploration of the ideas and techniques used in Theorem \ref{asymptotic_convergence_theorem}. Natural questions to ask in this setting are: (1) Are the elements of the sequence of empirical averages approximate equilibria? (2) Can we be more precise on the structure of the limit set? 

Question (1) is answered in Theorem \ref{payoff_of_average_strategy_theorem} and Corollary \ref{approximate_equilibria_corollary} showing that for any approximation error, say $\epsilon$, every element far enough in the sequence of empirical averages is an $\epsilon$-approximate equilibrium. This is an intuitively pleasing result that the sequence of empirical averages is ``tame.'' 

Question (2) is answered in Theorem \ref{equilibrium_polytope_limitset_theorem} showing that the affine hull of the limit set is a connected subset of a convex polytope of equilibria. To prove that the limit set is connected, we show that the distance between successive elements of the sequence of empirical averages converges to zero (Lemma \ref{consecutive_empirical_averages_lemma}) and then invoke standard results from topology (Theorem \ref{limitpoint_connectedness_theorem}). To prove that a convex polytope of equilibria contains the limit set we rely on a structural property of equilibria (Lemma \ref{equilibrium_set_affine_hull_lemma}) that to the extent of the our knowledge does not appear in that form in the literature.

\subsubsection{Equilibrium computation using side information}
 
The results we have insofar discussed derive an (approximate) equilibrium by inspecting the empirical average (or an iterate). Our final thread of results (cf. Section \ref{side_information}) explores how ``side information'' can facilitate equilibrium computation. For example, if the sequence of empirical averages converges, then the probability masses of those pure strategies that support the limit eventually dominate (are greater than) the probability masses of strategies outside that support giving rise to a ``separation phenomenon'' that yields a finite algorithm to compute an (exact) equilibrium through checking whether the ``heaviest'' strategies support an equilibrium.\footnote{An operation that can be performed in polynomial time using linear programming \citep{CDT}} There are a variety of ways to take these observations further: We explore ranking pure strategies based on average probability masses (which can be effective even if the limit set is nontrivial) (Theorem\ref{recurrent_average_probability_masses}), based on the payoffs pure strategies yield against the empirical average (Theorem \ref{equalizer_separation_theorem}), and based on the probability masses of the iterates (Corollary \ref{probability_mass_ranking_corollary}). Our goal is to reveal structure in the sequences of iterates and empirical averages toward basic computational goals (but our results that may also bear relevance to natural and social phenomena where game theory applies). In this vein, we also explore phenomena of recurrence and ``clean separation'' using best responses (e.g., Theorem \ref{persistent_best_responses}).

\subsection{Other related work regarding equilibrium computation algorithms}

There is a significant amount of work on Nash equilibrium computation especially in the setting of $2$-player games. We simply mention a boundary of those results. The Lemke-Howson algorithm for computing an equilibrium in a bimatrix game is considered by many to be the state of the art in exact equilibrium computation but it has been shown to run in exponential time in the worst case \citep{Savani-Stengel}. There is a quasi-polynomial algorithm for additively approximate Nash equilibria in bimatrix games due to \cite{LMM} (based on looking for equilibria of small support on a grid). The best polynomial-time approximation algorithm for a Nash equilibrium achieves a 0.3393 approximation \citep{Tsaknakis-Spirakis-journal}.

\subsection{Roadmap for the rest of the paper}

We start with preliminary background on game theory in Section \ref{preliminaries}. The convexity lemma of multiplicative weights and various implications is discussed in Section \ref{convexity_lemma_section}. Section \ref{asymptotic_convergence_section} contains our main result that, under a diminishing learning rate, the limit points of the sequence of empirical averages are symmetric equilibrium strategies. In Section \ref{equilibrium_approximation_VLR}, we consider the problem of equilibrium approximation under a finite time horizon (under the same assumption that the learning rate obeys a diminishing schedule). In Section \ref{structure_of_limit_set_section}, we shed light on the structure of the limit set of the sequence of weighted averages proving it is a connected set inside a convex equilibrium polytope. Finally, in Section \ref{side_information}, we study how side information can inform the equilibrium computation process.

\section{Equilibrium computation background and preliminary results}
\label{preliminaries}

\subsection{Symmetric bimatrix games}

A $2$-player (bimatrix) game in normal form is specified by a pair of $n \times m$ matrices $A$ and $B$, the former corresponding to the {\em row player} and the latter to the {\em column player}. A {\em mixed strategy} for the row player is a probability vector $P \in \mathbb{R}^n$ and a mixed strategy for the column player is a probability vector $Q \in \mathbb{R}^m$. The {\em payoff} to the row player of $P$ against $Q$ is $P \cdot A Q$ and that to the column player is $P \cdot B Q$. Let us denote the space of probability vectors for the row player by $\mathbb{P}$ and the corresponding space for the column player by $\mathbb{Q}$. A Nash equilibrium of a $2$-player game $(A, B)$ is a pair of mixed strategies $P^*$ and $Q^*$ such that all unilateral deviations from these strategies are not profitable, that is, for all $P \in \mathbb{P}$ and $Q \in \mathbb{Q}$, we simultaneously have that
\begin{align}
P^* \cdot AQ^* &\geq P \cdot AQ^*\label{eqone}\\
P^* \cdot BQ^* &\geq P^* \cdot BQ.\label{eqtwo}
\end{align}
We denote the set of Nash equilibria of $(A, B)$ by $NE(A, B)$. If $B = A^T$, where $A^T$ is the transpose, $(A, B)$ is called {\em symmetric}. Let $(C, C^T)$ be a symmetric bimatrix game. We call $(P^*, Q^*) \in NE(C, C^T)$  symmetric if $P^* = Q^*$. If $(X^*, X^*)$ is a symmetric equilibrium, we call $X^*$ a symmetric equilibrium strategy. We denote the set of symmetric equilibrium strategies of $(C, C^T)$ by $NE^+(C)$.

\subsubsection{Some further notation}

We denote the space of symmetric bimatrix games by $\mathbb{C}$. $\mathbb{C}^+$ is the subset of $\mathbb{C}$ such that the payoff matrix is nonnegative. If the payoff entries lie in the range $[0, 1]$, we denote the corresponding space by $\mathbb{\hat{C}}$. Given $C \in \mathbb{C}$, we denote the corresponding set of pure strategies by $\mathcal{K}(C) = \{1, \ldots, n\}$. Pure strategies are denoted either as $i$ or as $E_i$, a probability vector whose mass is concentrated in position $i$. $\mathbb{X}(C)$ is the probability simplex (space of mixed strategies) corresponding to $C \in \mathbb{C}$. We denote the (relative) interior of $\mathbb{X}(C)$ by $\mathbb{\mathring{X}}(C)$ (every pure strategy in $\mathbb{\mathring{X}}(C)$ has probability mass). Let $X \in \mathbb{X}(C)$. We define the {\em support} or {\em carrier} of $X$ by
\begin{align*}
\mathcal{C}(X) \equiv \{ i \in \mathcal{K}(C) | X(i) > 0\}.
\end{align*}

\subsubsection{Approximate equilibria and best responses}

Conditions \eqref{eqone} and \eqref{eqtwo} simplify as follows for a symmetric equilibrium strategy $X^*$:
\begin{align*}
\forall X \in \mathbb{X}(C) : (X^* - X) \cdot CX^* \geq 0.
\end{align*}
An $\epsilon$-approximate symmetric equilibrium satisfies:
\begin{align*}
\forall X \in \mathbb{X}(C) : (X^* - X) \cdot CX^* \geq -\epsilon.
\end{align*}
We may equivalently write the previous expression as
\begin{align*}
(CX^*)_{\max} - X^* \cdot CX^* \leq \epsilon,
\end{align*}
where
\begin{align*}
(CX^*)_{\max} = \max\{ Y \cdot CX^* | Y \in \mathbb{X}(C) \}.
\end{align*}
If $Y^+ \cdot CX^* = (CX^*)_{\max}$, then $Y^+$ is called a best response to $X^*$. We note in passing the standard property that if $Y^+$ is a best response to $X^*$, for all $i \in \mathcal{C}(Y^+)$, $E_i$ is also a best response to $Y^+$.

\begin{definition}
$Y^+$ is a $\epsilon$-approximate best response to $X$, if $(CX)_{\max} - Y^+ \cdot CX \leq \epsilon$.
\end{definition}

\begin{definition}
\label{cool}
$X^*$ is an $\epsilon$-well-supported symmetric equilibrium strategy of $C \in \mathbb{C}$ if
\begin{align*}
(CX^*)_i > (CX^*)_j + \epsilon \Rightarrow X^*(j) = 0.
\end{align*}
\end{definition}

Well-supported equilibria admit the following characterization:

\begin{lemma}
$X^*$ is an $\epsilon$-well-supported Nash equilibrium of $C \in \mathbb{C}$ if and only if
\begin{align*}
X^*(i) > 0 &\Rightarrow (CX^*)_i \geq (CX^*)_{\max} - \epsilon.
\end{align*}
\end{lemma}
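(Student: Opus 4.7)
The plan is to prove this biconditional by establishing both directions via contrapositive arguments, since the definition of $\epsilon$-well-supported equilibrium and the characterization differ essentially only in how they name the ``pivot'' payoff level: the definition compares arbitrary pairs $(i,j)$ of pure strategies, whereas the characterization pegs the comparison against $(CX^*)_{\max}$.

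For the forward direction, I would assume the hypothesis of Definition \ref{cool} and fix any pure strategy $i$ with $X^*(i) > 0$. Let $i^*$ achieve $(CX^*)_{i^*} = (CX^*)_{\max}$. Arguing by contradiction, suppose $(CX^*)_i < (CX^*)_{\max} - \epsilon$. Then $(CX^*)_{i^*} > (CX^*)_i + \epsilon$, and the contrapositive of the defining implication (instantiated with the pair $(i^*, i)$) forces $X^*(i) = 0$, contradicting the choice of $i$. Hence $(CX^*)_i \geq (CX^*)_{\max} - \epsilon$, as claimed.

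For the reverse direction, I would assume the characterization and prove the implication in Definition \ref{cool}. Fix $i, j$ with $(CX^*)_i > (CX^*)_j + \epsilon$. Since $(CX^*)_{\max} \geq (CX^*)_i$, this gives $(CX^*)_j < (CX^*)_{\max} - \epsilon$, so the contrapositive of the assumed characterization yields $X^*(j) = 0$. This closes the argument in both directions.

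There is no real obstacle here: the proof is entirely bookkeeping with the two contrapositives, together with the trivial observation that the maximum-payoff pure strategy always realizes the bound $(CX^*)_{\max}$ used to bridge the two formulations. The only care needed is to quantify correctly (``for all $i,j$'' versus ``for all $i$ with $X^*(i)>0$''), but once that is made explicit the equivalence is immediate.
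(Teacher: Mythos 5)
Your proof is correct and follows essentially the same route as the paper, which simply observes that the characterization is the contrapositive of Definition \ref{cool}; you merely spell out the bookkeeping (including the harmless point that $(CX^*)_{\max}$ is attained at a pure strategy) that the paper leaves implicit.
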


\begin{proof}
The statement of the lemma is just the contrapositive of Definition \ref{cool}.
\end{proof}

We further introduce the following definition:

\begin{definition}
$Y^+$ is a {\em well-supported} $\epsilon$-approximate best response to $X$, if, for all $i \in \mathcal{C}(Y^+)$, the pure strategies $E_i$ are also $\epsilon$-approximate best responses to $X$.
\end{definition}

\subsection{Equalizers and equilibrium subequalizers}

\begin{definition}
\label{equalizer_definition}
$X^* \in \mathbb{X}(C)$ is called an {\em equalizer} if
\begin{align*}
\forall X \in \mathbb{X}(C) : (X^* - X) \cdot CX^* = 0. 
\end{align*} 
We denote the set of equalizers of $(C, C^T)$ by $\mathbb{E}(C)$.
\end{definition}

Note that $\mathbb{E}(C) \subseteq NE^+(C)$. Equalizers generalize interior symmetric equilibrium strategies, as every such strategy is an equalizer, but there exist symmetric bimatrix games with a non-interior equalizer (for example, if a column of $C$ is constant, the corresponding pure strategy of $(C, C^T)$ is an equalizer of $(C, C^T)$). Note that an equalizer can be computed in polynomial time by solving the linear (feasibility) program (LP)
\begin{align*}
(CX)_1 = \cdots = (CX)_n, \quad \sum_{i=1}^n X(i) = 1, \quad X \geq 0,
\end{align*}
which we may equivalently write as
\begin{align*}
CX = c \mathbf{1}, \quad \mathbf{1}^T X = 1, \quad X \geq 0,
\end{align*}
where $\mathbf{1}$ is a column vector of ones of appropriate dimension. We may write this problem as a standard LP as follows: Letting
\begin{align*}
A \doteq \left[ \begin{array}{cc}
C & - \mathbf{1} \\
\mathbf{1}^T & 0 \end{array} \right]
\mbox{ and }
Y \doteq \left[ \begin{array}{c}
X \\
c \end{array} \right],
\end{align*}
we obtain
\begin{align*}
A Y = \left[ \begin{array}{cc}
C & - \mathbf{1} \\
\mathbf{1}^T & 0 \end{array} \right]
\left[ \begin{array}{c}
X \\
c \end{array} \right] =
\left[ \begin{array}{c}
CX - c \mathbf{1} \\
\mathbf{1}^T X \end{array} \right],
\end{align*}
and the standard form of our LP, assuming $C > 0$, is
\begin{align}
\left[ \begin{array}{c}
CX - c \mathbf{1} \\
\mathbf{1}^T X \end{array} \right] = \left[ \begin{array}{c}
\mathbf{0} \\
1 \end{array} \right], X \geq 0, c \geq 0\label{my_LP}
\end{align}
where $\mathbf{0}$ is a column vector of zeros of appropriate dimension.

\subsubsection{On the structure of equalizers}

From the previous discussion, we immediately obtain that:

\begin{lemma}
$\mathbb{E}(C)$ is a convex set.
\end{lemma}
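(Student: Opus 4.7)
The plan is to exploit the LP characterization just given. The key observation is that $X^* \in \mathbb{E}(C)$ iff there exists a scalar $c$ such that $CX^* = c\mathbf{1}$, together with the probability-simplex constraints $\mathbf{1}^T X^* = 1$ and $X^* \geq 0$. All of these conditions are linear in $(X^*, c)$, so the set
\begin{align*}
\mathcal{F} = \{(X, c) : CX = c\mathbf{1},\ \mathbf{1}^T X = 1,\ X \geq 0\}
\end{align*}
is a polyhedron, in particular convex. Then $\mathbb{E}(C)$ is precisely the image of $\mathcal{F}$ under the linear projection $(X, c) \mapsto X$, and linear (hence affine) images of convex sets are convex. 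This is the ``previous discussion'' the lemma refers to.

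If I wanted to avoid the projection argument, I would give a one-line direct proof: take $X_1, X_2 \in \mathbb{E}(C)$ with $CX_i = c_i \mathbf{1}$ for $i = 1, 2$, and $\lambda \in [0, 1]$. Set $X_\lambda = \lambda X_1 + (1-\lambda) X_2$. Since $\mathbb{X}(C)$ is convex, $X_\lambda \in \mathbb{X}(C)$, and linearity of $C$ gives
\begin{align*}
C X_\lambda = \lambda c_1 \mathbf{1} + (1-\lambda) c_2 \mathbf{1} = \bigl(\lambda c_1 + (1-\lambda) c_2\bigr) \mathbf{1},
\end{align*}
so $X_\lambda$ again satisfies the equalizer condition $(X_\lambda - X) \cdot C X_\lambda = 0$ for every $X \in \mathbb{X}(C)$ (both sides of the inner product are annihilated because $CX_\lambda$ is a constant vector and $X_\lambda - X$ sums to zero).

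There is no real obstacle here; the lemma is essentially a restatement of the fact that equalizers are the feasible points of a linear program. I would simply choose whichever of the two phrasings above matches the paper's preferred level of formality, probably the projection-of-a-polyhedron version since the author has just explicitly written down the LP in standard form.
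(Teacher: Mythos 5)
Your projection argument is exactly the paper's own proof: the paper notes that the feasible set of the linear program \eqref{my_LP} is convex in $(X, c)$ and that dropping the coordinate $c$ (a linear projection) preserves convexity, which is your first phrasing verbatim. Both that version and your direct one-line alternative (using $CX^* = c\mathbf{1}$ and linearity) are correct, so there is nothing to fix.
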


\begin{proof}
The set of feasible/optimal solutions of a linear program is a convex set. Let
\begin{align*}
\mathbb{Y}^* = \left\{ [X^T \mbox{ } c]^T | [X^T \mbox{ } c]^T \mbox{ is a feasible solution to \eqref{my_LP}} \right\}.
\end{align*}
Then $\mathbb{Y}^*$ is convex and therefore the set
\begin{align*}
\mathbb{X}^* = \left\{ X | [X^T \mbox{ } c]^T \mbox{ is a feasible solution to \eqref{my_LP}} \right\}
\end{align*}
is also convex.
\end{proof}

We can actually show something stronger:

\begin{lemma}
\label{lman}
If $X_1^*, X_2^* \in \mathbb{E}(C)$ then $\Lambda(X_1^*, X_2^*) \subseteq \mathbb{E}(C)$, where
\begin{align*}
\Lambda(X_1^*, X_2^*) = \left\{ (1-\lambda) X_1^* + \lambda X^*_2 \in \mathbb{X}(C) | \lambda \in \mathbb{R} \right\},
\end{align*}
that is, $\Lambda(X_1^*, X_2^*)$ is the restriction of the affine hull of $X_1^*$ and $X_2^*$ to the simplex $\mathbb{X}(C)$.
\end{lemma}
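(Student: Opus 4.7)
The plan is to exploit the fact that the equalizer condition is affine-linear in $X$, not merely convex. The previous lemma gives only convexity of $\mathbb{E}(C)$ because it appeals to the convexity of the feasible set of LP \eqref{my_LP}. Here we want to strengthen this to the whole affine hull (intersected with the simplex), so I will bypass the LP viewpoint and work directly with the characterization $CX^* = c\mathbf{1}$ of equalizers.

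First I would establish the equivalence: $X^* \in \mathbb{E}(C)$ if and only if $X^* \in \mathbb{X}(C)$ and there is a scalar $c$ with $CX^* = c\mathbf{1}$. The forward direction follows from Definition \ref{equalizer_definition} by specializing $X$ to each pure strategy $E_i$, which forces $(CX^*)_i = X^* \cdot CX^*$ for every $i \in \mathcal{K}(C)$. The reverse direction is immediate: if $CX^* = c\mathbf{1}$, then for any $X \in \mathbb{X}(C)$,
\begin{align*}
(X^* - X) \cdot CX^* = c\,(\mathbf{1}^T X^* - \mathbf{1}^T X) = c(1 - 1) = 0.
\end{align*}

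Next, given $X_1^*, X_2^* \in \mathbb{E}(C)$, write $CX_1^* = c_1 \mathbf{1}$ and $CX_2^* = c_2 \mathbf{1}$. For any $\lambda \in \mathbb{R}$ such that $X_\lambda := (1-\lambda) X_1^* + \lambda X_2^* \in \mathbb{X}(C)$, linearity of the map $X \mapsto CX$ yields
\begin{align*}
C X_\lambda = (1-\lambda) c_1 \mathbf{1} + \lambda c_2 \mathbf{1} = \bigl((1-\lambda) c_1 + \lambda c_2\bigr) \mathbf{1},
\end{align*}
so $CX_\lambda$ is again a scalar multiple of $\mathbf{1}$. By the characterization above, $X_\lambda \in \mathbb{E}(C)$, and since this holds for every admissible $\lambda \in \mathbb{R}$, we get $\Lambda(X_1^*, X_2^*) \subseteq \mathbb{E}(C)$.

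There is essentially no hard step here: the only thing to be careful about is the characterization of equalizers as $CX^* \propto \mathbf{1}$, which is what makes the condition affine rather than merely convex. Once that is in place, the argument is a one-line computation, and the restriction to $\mathbb{X}(C)$ in the definition of $\Lambda(X_1^*, X_2^*)$ takes care of the feasibility constraint that prevented the earlier lemma from extending beyond $\lambda \in [0,1]$.
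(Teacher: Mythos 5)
Your proof is correct, and it takes a slightly different route from the paper's. The paper argues directly from Definition \ref{equalizer_definition}: it expands the quadratic form $Y^* \cdot CY^*$ for $Y^* = (1-\lambda)X_1^* + \lambda X_2^*$, uses the two equalizer identities to replace $Y^* \cdot CX_1^*$ and $Y^* \cdot CX_2^*$ by $X \cdot CX_1^*$ and $X \cdot CX_2^*$ for an arbitrary $X \in \mathbb{X}(C)$, and recombines to get $Y^* \cdot CY^* = X \cdot CY^*$; no intermediate characterization is invoked. You instead first prove the equivalence $X^* \in \mathbb{E}(C) \iff X^* \in \mathbb{X}(C)$ and $CX^* = c\mathbf{1}$ for some scalar $c$ (your argument for both directions is sound: specializing to the pure strategies $E_i$ gives the forward implication, and $(X^*-X)\cdot c\mathbf{1} = c(1-1) = 0$ gives the converse), and then the claim follows from linearity of $X \mapsto CX$, since an affine combination of multiples of $\mathbf{1}$ is again a multiple of $\mathbf{1}$. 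This characterization is in fact already implicit in the paper's LP formulation \eqref{my_LP} just before the lemma, so your proof can be read as making that structural fact do the work: it exhibits $\mathbb{E}(C)$ as the intersection of the simplex with the affine set $\{X : CX \in \mathrm{span}(\mathbf{1})\}$, which immediately gives closure under arbitrary affine combinations (of any number of equalizers, not just two). The paper's computation buys a self-contained verification from the bare definition; yours buys transparency about why the set is affine rather than merely convex. Both are complete proofs of Lemma \ref{lman}.
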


\begin{proof}
Assume $X_1^*, X_2^* \in \mathbb{E}(C)$. Then, by the definition of an equalizer,
\begin{align*}
\forall X \in \mathbb{X}(C) : X^*_1 \cdot CX^*_1 &= X \cdot CX^*_1 \mbox{ and }\\
\forall X \in \mathbb{X}(C) : X^*_2 \cdot CX^*_2 &= X \cdot CX^*_2.
\end{align*}
Let
\begin{align*}
Y^* = (1-\lambda) X_1^* + \lambda X^*_2, \mbox{ } \lambda \in \mathbb{R}.
\end{align*}
Then
\begin{align*}
Y^* \cdot CY^* &= (1-\lambda) \left( (1-\lambda) X^*_1 + \lambda X^*_2 \right) \cdot CX^*_1 + \lambda \left( (1-\lambda) X^*_1 + \lambda X^*_2 \right) \cdot CX^*_2\\
  &= (1-\lambda) X^*_1 \cdot CX^*_1 + \lambda X^*_2 \cdot CX^*_2\\
  &= (1-\lambda) X \cdot CX^*_1 + \lambda X \cdot CX^*_2\\
  &= X \cdot C \left((1-\lambda) X^*_1 + \lambda X^*_2 \right)\\
  &= X \cdot CY^*.
\end{align*}
Since $X$ is arbitrary, the proof is complete.
\end{proof}

Equalizers admit a characterization as optimizers of a convex optimization problem:

\begin{lemma}
\label{equalizer_lemma}
$X^* \in \mathbb{E}(C)$ if and only if $\min\{(CX)_{\max} - (CX)_{\min}\} = 0$.
\end{lemma}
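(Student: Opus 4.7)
The plan is to interpret $\min\{(CX)_{\max} - (CX)_{\min}\}$ as ranging over $X \in \mathbb{X}(C)$, and to unpack both sides into a coordinate statement about $CX^*$. The key observation is that $(CX)_{\max} \geq (CX)_{\min}$ for every $X$, so the objective is non-negative and achieves value $0$ at $X^*$ if and only if $(CX^*)_i$ is the same real number for every $i \in \mathcal{K}(C)$, i.e.\ $CX^* = c\mathbf{1}$ for some scalar $c$. The whole proof then reduces to showing that this ``constant coordinate'' condition is equivalent to the equalizer condition in Definition \ref{equalizer_definition}.

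For the forward direction, I would take $X^* \in \mathbb{E}(C)$ and plug the pure strategies $X = E_i$ into the defining identity $(X^* - X) \cdot CX^* = 0$. This yields $(CX^*)_i = X^* \cdot CX^*$ for every $i$, so all coordinates of $CX^*$ coincide with the single value $X^* \cdot CX^*$. Consequently $(CX^*)_{\max} - (CX^*)_{\min} = 0$, and together with the lower bound noted above, $X^*$ attains the minimum $0$.

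For the reverse direction, I would start from $(CX^*)_{\max} - (CX^*)_{\min} = 0$, which forces $CX^* = c\mathbf{1}$ for some $c \in \mathbb{R}$. For arbitrary $X \in \mathbb{X}(C)$, compute $X \cdot CX^* = c\,\mathbf{1}^T X = c$ since $X$ is a probability vector. Applying this with $X = X^*$ gives $X^* \cdot CX^* = c$ as well, so $(X^* - X) \cdot CX^* = c - c = 0$ for every $X \in \mathbb{X}(C)$, i.e.\ $X^* \in \mathbb{E}(C)$.

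There is no real obstacle here beyond the book-keeping above; the lemma is essentially a restatement of the definition in coordinate form. The only thing worth being careful about is the interpretation of the ``$\min$'' in the statement, which I would make explicit in the proof as a minimum over $X \in \mathbb{X}(C)$ of a non-negative quantity, so that ``value $0$'' and ``attained at $X^*$'' capture precisely the existence/identity of an equalizer.
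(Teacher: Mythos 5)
Your proof is correct and follows essentially the same route as the paper's: both directions reduce to the observation that $X^*$ is an equalizer precisely when all coordinates of $CX^*$ coincide with $X^* \cdot CX^*$, so that the non-negative quantity $(CX^*)_{\max} - (CX^*)_{\min}$ vanishes exactly at equalizers. Your explicit handling of the interpretation of the $\min$ (as ranging over $X \in \mathbb{X}(C)$, with $X^*$ the minimizer in the converse direction) matches what the paper's proof does implicitly.
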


\begin{proof}
Note that for all $X, Y \in \mathbb{X}(C)$, $(CX)_{\max} \geq Y \cdot CX \geq (CX)_{\min}$. Assume now $X^*$ is an equalizer. Then it follows by Definition \ref{equalizer_definition} that $X^* \cdot CX^* = (CX^*)_{\max} - (CX^*)_{\min}$, implying $\min\{(CX)_{\max} - (CX)_{\min}\} = 0$. Conversely, if $\min\{(CX)_{\max} - (CX)_{\min}\} = 0$, then for the minimizer, say $X^*$, $(CX^*)_{\max} = X^* \cdot CX^* = (CX^*)_{\min}$, and the lemma follows.
\end{proof}

The optimization problem in the previous lemma can formulated as the linear program
\begin{align*}
\min \mbox{ } &\epsilon\\
\mbox{s.t. } &\forall i, j \in \mathcal{K}(C) \mbox{ such that } i \neq j: (CX)_i - (CX)_j \leq \epsilon\\
  &X \in \mathbb{X}(C).
\end{align*}

\subsubsection{Equilibrium subequalizers}
\label{christmastree}

\begin{definition}
Given a symmetric bimatrix game $C \in \mathbb{C}$ and a subset $\mathcal{L}$ of pure strategies of $C$, a {\em subequalizer} of $C$ given $\mathcal{L}$ is, if it exists, an equalizer of the subgame $C'$ whose carrier is $\mathcal{L}$. Note that a subequalizer may not be an equilibrium of $C$. An {\em equilibrium
subequalizer} of $C$ given $\mathcal{L}$, if it exists, is an equilibrium of $C$ that is an equalizer of $C'$. An {\em $\epsilon$-approximate equalizer} of $C'$ is a strategy $X^* \in \mathbb{X}(C')$ such that $(C'X^*)_{\max} - (C'X^*)_{\min} \leq \epsilon$. A {\em well-supported $\epsilon$-approximate equilibrium subequalizer}, say $X^*$, of $C$ given $\mathcal{L}$ is an $\epsilon$-approximate equilibrium of $C$ that satisfies 
\begin{align*}
(C'X^*)_{\max} - (C'X^*)_{\min} \leq \epsilon, \mbox{ } \forall i \in \mathcal{L} \mbox{ and } \forall j \not\in \mathcal{L}: (CX)_i \geq (CX)_j, \mbox{ and } X^*(j) = 0, \mbox{ } j \not\in \mathcal{L}.
\end{align*}
\end{definition}

We can check in polynomial time whether a subset of pure strategies support a well-supported $\epsilon$-approximate equilibrium subequalizer. Consider a subset $\mathcal{L}$ of pure strategies. We can find the best approximate equilibrium subequalizer these strategies support in polynomial time using the optimization problem
\begin{align*}
\min \mbox{ } &\epsilon\\
\mbox{s.t. } &\forall i \in \mathcal{L}: (CX)_{\max} - (CX)_i \leq \epsilon\\
  &\forall i \in \mathcal{L} \mbox{ and } \forall j \not\in \mathcal{L}: (CX)_i \geq (CX)_j\\
  &X \in \mathbb{X}(C) \mbox{ and } \forall j \not\in \mathcal{L}: X(j) = 0
\end{align*}
which is a convex problem that can be reformulated as the following linear program\footnote{I would like to thank Paul Spirakis for a related email communication that gave rise to this formulation.}
\begin{align*}
\min \mbox{ } &\epsilon\\
\mbox{s.t. } &\forall i, j \in \mathcal{L} \mbox{ such that } i \neq j: (CX)_i - (CX)_j \leq \epsilon\\
  &\forall i \in \mathcal{L} \mbox{ and } \forall j \not\in \mathcal{L}: (CX)_i \geq (CX)_j\\
  &X \in \mathbb{X}(C) \mbox{ and } \forall j \not\in \mathcal{L}: X(j) = 0.
\end{align*}

\subsection{On the structure of equilibrium sets}

The material we have insofar introduced are used (in Lemma \ref{equilibrium_set_affine_hull_lemma}) to obtain a property of equilibrium sets that is important in the sequel (as we study the limit points of our dynamical system).

\begin{lemma}
\label{best_responses_convexity_lemma}
Let $C \in \mathbb{C}$ and $i \in \mathcal{K}(C)$. The set $\left\{ X \in \mathbb{X}(C) | (CX)_i = (CX)_{\max} \right\}$ is convex.
\end{lemma}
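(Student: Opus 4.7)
The plan is to exploit the linearity of the map $X \mapsto CX$ together with the characterization that $(CX)_i = (CX)_{\max}$ is equivalent to the system of linear inequalities $(CX)_i \geq (CX)_j$ for every $j \in \mathcal{K}(C)$. Once the condition defining the set is rewritten as an intersection of halfspaces (intersected with the simplex $\mathbb{X}(C)$, itself convex), convexity follows immediately.

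Concretely, I would fix $X_1, X_2 \in \mathbb{X}(C)$ with $(CX_1)_i = (CX_1)_{\max}$ and $(CX_2)_i = (CX_2)_{\max}$, fix $\lambda \in [0,1]$, and set $X = \lambda X_1 + (1-\lambda) X_2$. Membership in $\mathbb{X}(C)$ is clear from convexity of the simplex. For any $j \in \mathcal{K}(C)$, linearity gives
\begin{align*}
(CX)_i - (CX)_j = \lambda \bigl( (CX_1)_i - (CX_1)_j \bigr) + (1-\lambda) \bigl( (CX_2)_i - (CX_2)_j \bigr),
\end{align*}
and each bracketed term is nonnegative by the hypothesis that $i$ achieves the maximum for $X_1$ and $X_2$. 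Hence $(CX)_i \geq (CX)_j$ for every $j$, which is precisely $(CX)_i = (CX)_{\max}$.

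There is essentially no obstacle here; the only subtlety worth noting is that the identity $(CX)_{\max} = \max_j (CX)_j$ is a maximum of a finite collection of linear functionals of $X$, hence convex in $X$, whereas the level set where $i$ attains this maximum is defined by finitely many linear inequalities and therefore is a (convex) polyhedron. The lemma will be used downstream in Lemma \ref{equilibrium_set_affine_hull_lemma}, so I would state it in exactly this halfspace-intersection form to make the subsequent structural arguments about equilibrium sets as direct as possible.
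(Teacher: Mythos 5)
Your proof is correct, and it takes a slightly different route from the paper. The paper argues in one line that $F(X) = (CX)_{\max} - (CX)_i$ is a convex function (a max of linear functionals minus a linear functional) and that the set in question, being the set of minimizers of $F$ over the simplex, is therefore convex; your argument instead rewrites the condition $(CX)_i = (CX)_{\max}$ as the finite system of linear inequalities $(CX)_i \geq (CX)_j$, $j \in \mathcal{K}(C)$, and verifies closure under convex combinations directly, so the set is exhibited as a polyhedron (the intersection of halfspaces with $\mathbb{X}(C)$). The two arguments rest on the same underlying fact, but yours is a bit more self-contained and slightly more careful at the margins: the paper's phrasing implicitly identifies the zero level set of $F$ with its minimizer set, which is only accurate when the minimum value is $0$ (if pure strategy $i$ is never a best response the set is empty, which is still convex, so nothing breaks, but your halfspace formulation sidesteps the point entirely). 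Your version also gives marginally more structural information (polyhedrality rather than mere convexity), which is harmless and, as you note, convenient for the use of this lemma inside Lemma \ref{equilibrium_set_affine_hull_lemma}.
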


\begin{proof}
The function
\begin{align*}
F(X) = (CX)_{\max} - (CX)_i
\end{align*}
is convex. Therefore, the set of minimizers of this function is also convex.
\end{proof}

\begin{lemma}
\label{equilibrium_set_affine_hull_lemma}
Let $X^*_1, X^*_2 \in NE^+(C)$ be best responses to each other and let $A(X^*_1, X^*_2)$ denote their affine hull (restricted to $\mathbb{X}(C)$). Then $X^* \in A(X^*_1, X^*_2)$ implies $X^* \in NE^+(C)$.
\end{lemma}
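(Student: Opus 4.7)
The plan is to parameterize any $X^* \in A(X^*_1,X^*_2)\cap\mathbb{X}(C)$ as $X^*_\lambda = (1-\lambda)X^*_1+\lambda X^*_2$ for some $\lambda\in\mathbb{R}$ and verify the equilibrium inequality $(X^*_\lambda - X)\cdot CX^*_\lambda \geq 0$ for every $X\in\mathbb{X}(C)$ by a direct bilinear computation, using the mutual-best-response hypothesis to collapse the cross terms. The first step is to introduce $a := X^*_1\cdot CX^*_1$ and $b := X^*_2\cdot CX^*_2$ and observe that, since $X^*_2\in NE^+(C)$, $(CX^*_2)_{\max} = b$, so $X^*_1$ being a best response to $X^*_2$ gives $X^*_1\cdot CX^*_2 = b$. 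The symmetric argument yields $X^*_2\cdot CX^*_1 = a$.

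With these identities in hand, bilinear expansion collapses the four terms cleanly:
\begin{align*}
X^*_\lambda \cdot CX^*_\lambda = (1-\lambda)^2 a + (1-\lambda)\lambda b + \lambda(1-\lambda) a + \lambda^2 b = (1-\lambda) a + \lambda b.
\end{align*}
For any $X\in\mathbb{X}(C)$ the same expansion gives $X \cdot CX^*_\lambda = (1-\lambda)(X \cdot CX^*_1) + \lambda(X \cdot CX^*_2)$, and since $X^*_1, X^*_2 \in NE^+(C)$ we have $X \cdot CX^*_1 \leq a$ and $X \cdot CX^*_2 \leq b$. When $\lambda\in[0,1]$ both coefficients are nonnegative, so $X\cdot CX^*_\lambda \leq (1-\lambda) a + \lambda b = X^*_\lambda \cdot CX^*_\lambda$, settling at once the convex-hull portion of the affine line.

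To cover the rest of the restricted affine hull I would extract the sharper consequence of mutual best response that every pure strategy $i\in\mathcal{C}(X^*_1)\cup\mathcal{C}(X^*_2)$ satisfies $(CX^*_1)_i = a$ and $(CX^*_2)_i = b$: any $i\in\mathcal{C}(X^*_k)$ is a best response to $X^*_k$ itself (symmetric-equilibrium property), and the BR-to-each-other hypothesis forces the same equality against the opposite strategy. Consequently $(CX^*_\lambda)_i = (1-\lambda)a + \lambda b$ identically on this union, and the simplex constraint $X^*_\lambda\in\mathbb{X}(C)$ forces $\mathcal{C}(X^*_\lambda)\subseteq \mathcal{C}(X^*_1)\cup\mathcal{C}(X^*_2)$, so every pure strategy in $\mathcal{C}(X^*_\lambda)$ attains the common value $X^*_\lambda\cdot CX^*_\lambda$. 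The main anticipated obstacle is ruling out pure strategies $i\notin\mathcal{C}(X^*_1)\cup\mathcal{C}(X^*_2)$ strictly beating this common value when $\lambda$ lies outside $[0,1]$, because the bound $X\cdot CX^*_j\leq a$ (resp.\ $\leq b$) flips direction once multiplied by a negative coefficient. The tool I would deploy is Lemma \ref{best_responses_convexity_lemma}: the convexity of each best-response preimage $\{X : (CX)_i = (CX)_{\max}\}$ combined with simultaneous membership of $X^*_1$ and $X^*_2$ in every preimage indexed by $i\in\mathcal{C}(X^*_1)\cup\mathcal{C}(X^*_2)$ should localize $(CX^*_\lambda)_{\max}$ to the value $(1-\lambda)a+\lambda b$ and complete the argument.
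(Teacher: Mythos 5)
Your bilinear computation correctly disposes of the convex-combination case $\lambda\in[0,1]$, and does so more directly than the paper, which instead passes to the subgame $C_{12}$ on $\mathcal{C}(X_1^*)\cup\mathcal{C}(X_2^*)$, shows $X_1^*$ and $X_2^*$ are equalizers of $C_{12}$, and invokes Lemma \ref{lman} to propagate the equalizer identity along the whole affine line. But the step you yourself flag as the ``main anticipated obstacle'' --- controlling $(CX_\lambda^*)_j$ for $j\notin\mathcal{C}(X_1^*)\cup\mathcal{C}(X_2^*)$ when $\lambda\notin[0,1]$ --- is a genuine gap, and the tool you propose cannot close it. Lemma \ref{best_responses_convexity_lemma} gives convexity of each set $\{X:(CX)_i=(CX)_{\max}\}$; membership of two points in a convex set forces only the segment between them into that set, not the affine line through them, so this argument delivers $(CX_\lambda^*)_{\max}=(1-\lambda)a+\lambda b$ only for $\lambda\in[0,1]$, which is precisely the case you had already settled by direct computation. (The paper's own proof makes the identical move, citing Lemma \ref{best_responses_convexity_lemma} to conclude that $X_1^*$ and $X_2^*$ are best responses to an arbitrary affine combination, and it suffers the same defect.)

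Moreover, no repair is possible, because the statement is false outside the convex hull. Take
\begin{align*}
C=\begin{pmatrix}1&1&0\\ 1&1&0\\ 0&3/2&0\end{pmatrix},\qquad X_1^*=(1,0,0),\qquad X_2^*=(1/2,1/2,0).
\end{align*}
Then $CX_1^*=(1,1,0)^T$ and $CX_2^*=(1,1,3/4)^T$, so both strategies lie in $NE^+(C)$ with value $1$ and each is a best response to the other. The affine hull restricted to the simplex is $\{(1-\lambda/2,\lambda/2,0):0\le\lambda\le 2\}$, which at $\lambda=2$ contains $E_2$; but $E_2$ is not an equilibrium, since $(CE_2)_3=3/2>1=E_2\cdot CE_2$. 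The inequality $(CX_1^*)_3<a$ is multiplied by the negative coefficient $1-\lambda$ and flips, exactly the failure mode you anticipated. The lemma holds with ``affine hull'' replaced by ``convex hull,'' your first computation is a complete proof of that corrected statement, and the convex hull is all that is needed for the downstream containment of the limit set in Theorem \ref{equilibrium_polytope_limitset_theorem}.
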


\begin{proof}
Let $\mathcal{K}^*_1 = \mathcal{C}(X^*_1)$ and $\mathcal{K}^*_2 = \mathcal{C}(X^*_2)$. Let $C_{12}$ be a subgame of $C$ such that $\mathcal{K}(C_{12}) = \mathcal{K}^*_1 \cup \mathcal{K}^*_2$. Then $A(X^*_1, X^*_2) \subset \mathbb{X}(C_{12})$. Since $X^*_2$ is a best response to $X^*_1$, for all $i \in \mathcal{K}^*_2$, $E_i$ is a best response to $X^*_1$. Therefore, $X^*_1$ is an equalizer of $C_{12}$ and by an analogous argument $X^*_2$ is also an equalizer of $C_{12}$. Thus, Lemma \ref{lman} implies every element of $A(X^*_1, X^*_2)$ is an equalizer of $C_{12}$. Let $X^*$ be an arbitrary element of $A(X^*_1, X^*_2)$. Then the assumptions that $X^*_1$ and $X^*_2$ are best responses to themselves (being equilibria) and to each other together with Lemma \ref{best_responses_convexity_lemma} imply that $X^*_1$ and $X^*_2$ are best responses to $X^*$ and thus every pure strategy of $C_{12}$ is a best response to $X^*$. Therefore, $X^* \in NE^+(C)$ and since $X^*$ is an arbitrary element of $A(X^*_1, X^*_2)$, the proof is complete.
\end{proof}

\section{The convexity lemma of exponentiated multiplicative weights}
\label{convexity_lemma_section}

Hedge \citep{FreundSchapire1, FreundSchapire2} induces the following map in our setting:
\begin{align}
T_i(X) = X(i) \cdot \frac{\exp\left\{ \alpha E_i \cdot CX \right\}}{ \sum_{j=1}^n X(j) \exp \left\{ \alpha E_j \cdot CX \right\} }, \quad i = 1, \ldots, n.\label{main_exp}
\end{align}
$\alpha$ is called the {\em learning rate}. Starting at an interior to the simplex strategy, say $X^0$, Hedge generates a {\em sequence of iterates} $\left\{ X^k \right\}_{k=0}^{\infty}$ such that $X^{k+1} = T(X^k)$ that remains in the interior for any finite number of iterations (but may reach the boundary in the asymptotic limit). We derive the {\em sequence of empirical averages} $\left\{ \bar{X}^K \right\}_{K = 0}^{\infty}$ as a weighted arithmetic mean
\begin{align*}
\bar{X}^K = \frac{1}{A_K} \sum_{k=0}^K \alpha_k X^k, \mbox{ where } A_K = \sum_{k = 0}^K \alpha_k
\end{align*}
and $\alpha_k > 0$ is the learning rate parameter used in step $k$.

\subsection{The multiplicative weights convexity lemma}

Let us now give some preliminary results on Hedge dynamics. Part of our analysis of Hedge relies on the relative entropy function between probability distributions (also called {\em Kullback-Leibler divergence}). The relative entropy between the $n \times 1$ probability vectors $P > 0$ (that is, for all $i = 1, \ldots, n$, $P(i) > 0$) and $Q > 0$ is given by 
\begin{align*}
RE(P, Q) \doteq \sum_{i=1}^n P(i) \ln \frac{P(i)}{Q(i)}.
\end{align*}
However, this definition can be relaxed: The relative entropy between $n \times 1$ probability vectors $P$ and $Q$ such that, given $P$, for all $Q \in \{ \mathcal{Q} \in \mathbb{X} | \mathcal{C}(P) \subset \mathcal{C}(\mathcal{Q}) \}$, is
\begin{align*}
RE(P, Q) \doteq \sum_{i \in \mathcal{C}(P)} P(i) \ln \frac{P(i)}{Q(i)}.
\end{align*}
We note the well-known properties of the relative entropy \cite[p.96]{Weibull} that {\em (i)} $RE(P, Q) \geq 0$, {\em (ii)} $RE(P, Q) \geq \| P - Q \|^2$, where $\| \cdot \|$ is the Euclidean distance, {\em (iii)} $RE(P, P) = 0$, and {\em (iv)} $RE(P, Q) = 0$ iff $P = Q$. Note {\em (i)} follows from {\em (ii)} and {\em (iv)} follows from {\em (ii)} and {\em (iii)}. With this background in mind, we state the following lemma (which we refer to as the {\em exponentiated multiplicative weights convexity lemma}) generalizing \cite[Lemma 2]{FreundSchapire2}.

\begin{lemma}[\citep{Avramopoulos2}]
\label{convexity_lemma}
Let $T$ be as in \eqref{main_exp}. Then
\begin{align*}
\forall X \in \mathbb{\mathring{X}}(C) \mbox{ } \forall Y \in \mathbb{X}(C) : RE(Y, T(X)) \mbox{ is a convex function of }\alpha.
\end{align*}
\end{lemma}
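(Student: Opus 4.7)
The plan is to unfold $RE(Y, T(X))$ into a sum of an affine part in $\alpha$ and a single genuinely $\alpha$-dependent piece, namely the log-partition function of Hedge, and then appeal to the classical fact that a log-partition function is convex in its natural parameter because its second derivative is a variance.

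Concretely, fix $X \in \mathbb{\mathring{X}}(C)$ and $Y \in \mathbb{X}(C)$ and let $u_i \equiv E_i \cdot CX$ for $i = 1, \ldots, n$. These numbers do not depend on $\alpha$. Set $Z(\alpha) \equiv \sum_{j=1}^n X(j) \exp\{\alpha u_j\}$. From the definition \eqref{main_exp}, $T_i(X) = X(i) \exp\{\alpha u_i\}/Z(\alpha)$, so $\ln T_i(X) = \ln X(i) + \alpha u_i - \ln Z(\alpha)$. Since $X$ is interior we have $T_i(X) > 0$ for every $i$, so $RE(Y,T(X))$ is well-defined, and
\begin{align*}
RE(Y, T(X)) = \sum_{i \in \mathcal{C}(Y)} Y(i) \ln Y(i) - \sum_{i \in \mathcal{C}(Y)} Y(i) \ln X(i) - \alpha \sum_{i \in \mathcal{C}(Y)} Y(i) u_i + \ln Z(\alpha).
\end{align*}
The first three terms are affine in $\alpha$, so the convexity of $RE(Y, T(X))$ in $\alpha$ is equivalent to the convexity of $\ln Z(\alpha)$.

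To see $\ln Z(\alpha)$ is convex, I would compute its second derivative. Writing $p_j(\alpha) \equiv X(j) \exp\{\alpha u_j\}/Z(\alpha)$ (a probability distribution on $\{1, \ldots, n\}$), one finds $\tfrac{d}{d\alpha} \ln Z(\alpha) = \sum_j p_j(\alpha) u_j$ and $\tfrac{d^2}{d\alpha^2} \ln Z(\alpha) = \sum_j p_j(\alpha) u_j^2 - \bigl( \sum_j p_j(\alpha) u_j \bigr)^2$, which is the variance of $u$ under $p(\alpha)$ and hence nonnegative. (Equivalently, $\ln Z(\alpha)$ is a log-sum-exp with linear arguments in $\alpha$, which is a standard convex function; this follows from Hölder's inequality.)

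There is really no difficult step here. The only thing one has to notice is that, in the composition $RE(Y, T(X))$, the parameter $\alpha$ enters $\ln T_i(X)$ only through an affine term plus a negated log-partition function, so once that decomposition is written down the convexity reduces to a classical identity. The main point to be careful about is the domain condition on relative entropy, which is immediate since $X$ being interior forces $T(X)$ to be interior as well.
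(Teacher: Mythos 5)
Your proof is correct: the decomposition $\ln T_i(X) = \ln X(i) + \alpha (CX)_i - \ln Z(\alpha)$ makes $RE(Y,T(X))$ an affine function of $\alpha$ plus the log-partition term $\ln Z(\alpha)$, whose second derivative is a variance, and the interiority of $X$ guarantees $T(X)$ is interior so the relative entropy is well defined. The paper itself states Lemma \ref{convexity_lemma} without proof (citing \citep{Avramopoulos2}), and your argument is fully consistent with how the paper uses the lemma downstream --- in particular, differentiating your decomposition gives exactly the expression for $\frac{d}{d\alpha} RE(Y,T(X))$ that appears in the proof of Lemma \ref{convexity_lemma_not_normalized} --- so nothing further is needed.
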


\subsection{Upper relative-entropy bounds}

A simpler version of the next lemma (assuming $C \in \mathbb{\hat{C}}$) is shown in \citep{Avramopoulos2}. Here we assume a general upper bound on the payoff entries. In the proof, we use the following ``secant inequality'' for a convex function $F(\cdot)$ and its derivative $F'(\cdot)$:
\begin{align}
\forall \mbox{ } b > a : F'(a) \leq \frac{F(b) - F(a)}{b - a} \leq F'(b).\label{secant_inequality}
\end{align}

\begin{lemma}
\label{convexity_lemma_not_normalized}
Let $C \in \mathbb{C}^+$. Then, for all $Y \in \mathbb{X}(C)$ and for all $X \in \mathbb{\mathring{X}}(C)$, we have that
\begin{align*}
\forall \alpha > 0 : RE(Y, T(X)) \leq RE(Y, X) - \alpha (Y-X) \cdot CX + \alpha (\exp\{m \alpha\} - 1) \bar{C},
\end{align*}
where $m = \max_{ij} C_{ij}$ and $\bar{C} > 0$ can be chosen independent of $X$ and $Y$.
\end{lemma}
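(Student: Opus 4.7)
The plan is to invoke Lemma~\ref{convexity_lemma} and combine it with the secant inequality \eqref{secant_inequality} to convert a statement about convexity in $\alpha$ into a one-sided bound evaluated at $\alpha$. Let $F(\alpha) \doteq RE(Y, T(X))$. By Lemma~\ref{convexity_lemma}, $F$ is convex in $\alpha$, so applying \eqref{secant_inequality} with $a = 0$ and $b = \alpha$ yields
\begin{align*}
F(\alpha) \leq F(0) + \alpha\, F'(\alpha).
\end{align*}
At $\alpha = 0$, the map $T$ reduces to the identity, so $F(0) = RE(Y, X)$.

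Next I would compute $F'(\alpha)$ explicitly. Writing $Z(\alpha) \doteq \sum_j X(j) \exp\{\alpha (CX)_j\}$, unfolding the definition of relative entropy gives
\begin{align*}
F(\alpha) = RE(Y, X) - \alpha\, Y \cdot CX + \ln Z(\alpha),
\end{align*}
since $\sum_i Y(i) = 1$. Differentiating under the log and recognizing that $Z'(\alpha)/Z(\alpha) = \sum_j (CX)_j T_j(X) = T(X) \cdot CX$, I get $F'(\alpha) = (T(X) - Y) \cdot CX$. Substituting back and adding/subtracting $\alpha\, X \cdot CX$ produces
\begin{align*}
F(\alpha) \leq RE(Y, X) - \alpha (Y - X) \cdot CX + \alpha (T(X) - X) \cdot CX.
\end{align*}
This already matches the desired inequality except for the last term, so the remaining task is to bound $(T(X) - X) \cdot CX$ by $(\exp\{m\alpha\} - 1)\,\bar{C}$ with $\bar{C}$ independent of $X$ and $Y$.

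For that bound I would use that $C \in \mathbb{C}^+$, so $(CX)_i \in [0, m]$ for every $i$ and every $X \in \mathbb{X}(C)$. Consequently each factor $\exp\{\alpha (CX)_j\}$ lies in $[1, \exp\{m\alpha\}]$, which immediately gives $Z(\alpha) \geq 1$. Combining these yields the pointwise estimate
\begin{align*}
\frac{T_i(X)}{X(i)} - 1 \;=\; \frac{\exp\{\alpha(CX)_i\}}{Z(\alpha)} - 1 \;\leq\; \exp\{m\alpha\} - 1,
\end{align*}
so $T_i(X) - X(i) \leq X(i)(\exp\{m\alpha\} - 1)$. Because $(CX)_i \geq 0$, only the coordinates where $T_i(X) \geq X(i)$ contribute positively to $(T(X) - X) \cdot CX$; using $(CX)_i \leq m$ there and discarding the nonpositive terms gives
\begin{align*}
(T(X) - X) \cdot CX \;\leq\; m (\exp\{m\alpha\} - 1) \sum_i X(i) \;=\; m (\exp\{m\alpha\} - 1),
\end{align*}
so the choice $\bar{C} = m$ works and depends only on $C$.

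The main obstacle is the last step: the naive bound $|T(X) - X|_1 \cdot \|CX\|_\infty$ is loose because $T(X) - X$ has mean zero, and one must exploit both that $Z \geq 1$ (which requires $C \geq 0$) and the sign of $(CX)_i$ to get the clean factor $\exp\{m\alpha\} - 1$ rather than a weaker $\alpha$-linear or symmetric two-sided bound. Everything else is routine: convexity plus secant inequality on one side, and a derivative computation plus a uniform entrywise estimate on $T(X)/X$ on the other.
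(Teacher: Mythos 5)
Your proof is correct, and while it shares the paper's skeleton (the convexity Lemma \ref{convexity_lemma} plus the secant inequality \eqref{secant_inequality} evaluated between $0$ and $\alpha$), it handles the derivative term differently and more explicitly. The paper never writes the derivative in closed form as $(T(X)-Y)\cdot CX$; instead it keeps the ratio $\sum_j X(j)(CX)_j e^{\alpha (CX)_j}/\sum_j X(j) e^{\alpha (CX)_j}$ and controls it with two estimates, first Jensen's inequality on the denominator (giving $e^{\alpha X\cdot CX}$, then bounded below by $1$ using $C\ge 0$) and then the Freund--Schapire linearization $e^{\alpha y}\le 1+\frac{e^{m\alpha}-1}{m}y$ on the numerator, which yields the constant $\bar C=\max\{\sum_j X(j)(CX)_j^2\}/m$. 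You instead exploit the exact identity $F'(\alpha)=(T(X)-Y)\cdot CX$, reduce the problem to bounding $(T(X)-X)\cdot CX$, and finish with the elementary entrywise estimates $Z(\alpha)\ge 1$ and $e^{\alpha (CX)_i}\le e^{m\alpha}$ together with the sign of $(CX)_i$, obtaining the explicit constant $\bar C=m$. Your route is more elementary (no Jensen, no linearization inequality) and produces a cleaner, fully explicit $\bar C$ of the same order as the paper's ($\max\{\sum_j X(j)(CX)_j^2\}/m\le m$); the paper's route follows the classical Hedge analysis more closely and is the template it reuses for Lemma \ref{convexity_lemma_normalized}. The only pedantic caveat in your version is the degenerate case $m=0$ (where $\bar C=m$ is not strictly positive), but there $C\equiv 0$, $T$ is the identity, and the claimed inequality is trivial; the paper's own constant degenerates there as well, and $m>0$ is assumed wherever the bound is used.
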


\begin{proof}
Since, by Lemma \ref{convexity_lemma}, $RE(Y, T(X)) - RE(Y, X)$ is a convex function of $\alpha$, we have by the aforementioned secant inequality that, for $\alpha > 0$,
\begin{align}
RE(Y, T(X)) - RE(Y, X) \leq \alpha \left( RE(Y, T(X)) - RE(Y, X) \right)' = \alpha \cdot \frac{d}{d \alpha} RE(Y, T(X))\label{ooone}
\end{align}
where it can be readily computed that
\begin{align*}
\frac{d}{d \alpha} RE(Y, T(X)) = \frac{\sum_{j = 1}^n X(j) (CX)_j \exp\{ \alpha (CX)_j \}}{\sum_{j = 1}^n X(j) \exp\{ \alpha (CX)_j \}} - Y \cdot CX.
\end{align*}
Using Jensen's inequality in the previous expression, we obtain 
\begin{align}
\frac{d}{d \alpha} RE(Y, T(X)) \leq \frac{\sum_{j = 1}^n X(j) (CX)_j \exp\{ \alpha (CX)_j \}}{\exp\{ \alpha X \cdot CX \}} - Y \cdot CX.\label{vbvbvb}
\end{align}
Note now that
\begin{align*}
\exp\{ \alpha x \} \leq 1 + (\exp\{ \alpha \} - 1) x, x \in [0, 1],
\end{align*}
an inequality used in \citep[Lemma 2]{FreundSchapire2}. Letting $0 \leq y \leq m$, the previous inequality yields
\begin{align*}
\exp\{ \alpha y \} = \exp\left\{ m \alpha \frac{y}{m} \right\} \leq 1 + \frac{\exp\{ m \alpha \} - 1}{m} y.
\end{align*}

Using the latter inequality, we obtain from \eqref{vbvbvb} that
\begin{align*}
\frac{d}{d \alpha} RE(Y, T(X)) \leq \frac{X \cdot CX}{\exp\{ \alpha X \cdot CX \}} - Y \cdot CX + \frac{\exp\{ m \alpha \} - 1}{m} \frac{\sum_{j=1}^n X(j) (CX)_j^2}{\exp\{ \alpha X \cdot CX \}}
\end{align*}
and since $\exp\{\alpha X \cdot CX\} \geq 1$ again by the assumption that $C \in \mathbb{C}^+$, we have
\begin{align*}
\frac{d}{d \alpha} RE(Y, T(X)) \leq X \cdot CX - Y \cdot CX + \frac{\exp\{ m \alpha \} - 1}{m}  \sum_{j=1}^n X(j) (CX)_j^2.
\end{align*}
Choosing $\bar{C} = \max \left\{\sum X(j) (CX)_j^2 \right\} / m$ and combining with \eqref{ooone} yields the lemma.
\end{proof}

It is now a simple step to show that

\begin{lemma}
\label{convexity_lemma_normalized}
Let $C \in \mathbb{\hat{C}}$. Then, for all $Y \in \mathbb{X}(C)$ and for all $X \in \mathbb{\mathring{X}}(C)$, we have that
\begin{align*}
\forall \alpha > 0 : RE(Y, T(X)) \leq RE(Y, X) - \alpha (Y-X) \cdot CX + \alpha (\exp\{\alpha\} - 1).
\end{align*}
\end{lemma}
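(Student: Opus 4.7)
The plan is to derive Lemma \ref{convexity_lemma_normalized} as a direct specialization of the more general bound in Lemma \ref{convexity_lemma_not_normalized}. Since $C \in \mathbb{\hat{C}}$ means every entry of $C$ lies in $[0,1]$, the two quantities $m = \max_{ij} C_{ij}$ and $\bar{C} = \max_{X \in \mathbb{\mathring{X}}(C)} \sum_{j} X(j) (CX)_j^2 / m$ appearing in Lemma \ref{convexity_lemma_not_normalized} can both be controlled, so that $\alpha(\exp\{m\alpha\}-1)\bar{C}$ collapses to the claimed $\alpha(\exp\{\alpha\}-1)$.

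First I would record that $m \leq 1$ and that, because $(CX)_j$ is a convex combination of entries of $C$, one has $(CX)_j \in [0,1]$ for every $j$. This yields $(CX)_j^2 \leq (CX)_j$, whence
\begin{align*}
\sum_{j=1}^n X(j) (CX)_j^2 \;\leq\; \sum_{j=1}^n X(j) (CX)_j \;=\; X \cdot CX \;\leq\; 1,
\end{align*}
and therefore $\bar{C} \leq 1/m$. Substituting into Lemma \ref{convexity_lemma_not_normalized} gives
\begin{align*}
RE(Y, T(X)) \;\leq\; RE(Y, X) - \alpha (Y-X) \cdot CX + \alpha \,\frac{\exp\{m \alpha\} - 1}{m}.
\end{align*}

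The remaining step is to show $(\exp\{m\alpha\}-1)/m \leq \exp\{\alpha\}-1$ whenever $m \in (0,1]$ and $\alpha > 0$. I would prove this via the standard fact that $h(x) = (e^x - 1)/x$ is increasing on $(0,\infty)$ (which follows by differentiation, since $h'(x) \geq 0$ is equivalent to $e^x(x-1)+1 \geq 0$, true for all $x \geq 0$). Applying monotonicity at $x = m\alpha \leq \alpha$ yields $(e^{m\alpha}-1)/(m\alpha) \leq (e^\alpha - 1)/\alpha$, i.e., $(e^{m\alpha}-1)/m \leq e^\alpha - 1$. Plugging this bound into the display above gives exactly the claimed inequality.

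I do not anticipate any real obstacle: the work is entirely bookkeeping, once one notices that the normalization $C \in \mathbb{\hat{C}}$ simultaneously tames both the exponent (via $m \leq 1$) and the second-moment prefactor (via $(CX)_j^2 \leq (CX)_j$). The only mildly nontrivial ingredient is the monotonicity of $(e^x-1)/x$, and one degenerate case --- $m=0$, meaning $C$ is identically zero --- is handled trivially since then both sides of the claimed inequality reduce to $RE(Y, T(X)) = RE(Y, X)$.
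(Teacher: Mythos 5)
Your proof is correct. It follows the same basic plan as the paper --- specialize the $\mathbb{C}^+$ bound of Lemma \ref{convexity_lemma_not_normalized} using the normalization $C \in \mathbb{\hat{C}}$ --- but the bookkeeping differs in a way worth noting. The paper does not use the statement of Lemma \ref{convexity_lemma_not_normalized} as a black box; it re-enters that proof at the derivative bound and estimates $\frac{1}{m}\sum_j X(j)(CX)_j^2 \le \frac{1}{m}\sum_j X(j)(CX)_j \le \frac{1}{(CX)_{\max}}\sum_j X(j)(CX)_j \le 1$, so the factor $1/m$ is absorbed into the second-moment term via $m \ge (CX)_{\max} \ge X \cdot CX$, leaving only the trivial replacement of $\exp\{m\alpha\}$ by $\exp\{\alpha\}$. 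You instead keep Lemma \ref{convexity_lemma_not_normalized} intact, bound $\bar{C} \le 1/m$, and push the $1/m$ onto the exponential factor, which is why you need the monotonicity of $(e^x-1)/x$; that fact is correctly stated and proved, and the resulting argument is somewhat more modular. You are also more careful about degenerate cases than the paper: your $m=0$ remark is essentially right (though the two sides do not coincide --- the right-hand side exceeds $RE(Y,X)$ by the positive term $\alpha(\exp\{\alpha\}-1)$, while $T(X)=X$ makes the left-hand side equal to $RE(Y,X)$), whereas the paper's chain silently divides by $(CX)_{\max}$, which could vanish.
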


\begin{proof}
Using the assumption $C \in \mathbb{\hat{C}}$, we have
\begin{align*}
\frac{1}{m}  \sum_{j=1}^n X(j) (CX)_j^2 \leq \frac{1}{m}  \sum_{j=1}^n X(j) (CX)_j \leq \frac{1}{(CX)_{\max}}  \sum_{j=1}^n X(j) (CX)_j \leq 1.
\end{align*}
Therefore,
\begin{align*}
\frac{d}{d \alpha} RE(Y, T(X)) \leq X \cdot CX - Y \cdot CX + (\exp\{ \alpha \} - 1).
\end{align*}
Combining with \eqref{ooone} yields the lemma.
\end{proof}

We note that \eqref{ooone} can be obtained from Slater's inequality as follows: Let $\hat{X} \equiv T(X)$. Then
\begin{align*}
\frac{\hat{X}(i)}{X(i)} = \frac{\exp\{ \alpha (CX)_i \}}{\sum_{j=1}^n X(j) \exp\{ \alpha (CX)_j \}}.
\end{align*}
Slater's inequality (cf. \citep{Dragomir}) gives
\begin{align*}
\sum_{j=1}^n X (j) \exp\{\alpha (CX)_j\} \leq \exp\left\{ \alpha \frac{\sum_{j=1}^n X(j) (CX)_j \exp\{\alpha (CX)_j\}}{\sum_{j=1}^n X(j) \exp\{\alpha (CX)_j\}} \right\}.
\end{align*}
Combining the previous inequalities, we obtain
\begin{align*}
\frac{\hat{X}(i)}{X(i)} \geq \exp\left\{ \alpha \left( (CX)_i - \frac{\sum_{j=1}^n X(j) (CX)_j \exp\{\alpha (CX)_j\}}{\sum_{j=1}^n X(j) \exp\{\alpha (CX)_j\}} \right) \right\}.
\end{align*}
Taking logarithms
\begin{align*}
\ln(\hat{X}(i)) - \ln(X(i)) \geq \alpha \left( (CX)_i - \frac{\sum_{j=1}^n X(j) (CX)_j \exp\{\alpha (CX)_j\}}{\sum_{j=1}^n X(j) \exp\{\alpha (CX)_j\}} \right).
\end{align*}
Multiplying both sides with $Y(i)$ and summing over $i=1, \ldots, n$
\begin{align*}
RE(Y, X) - RE(Y, \hat{X}) \geq \alpha \left( Y \cdot CX - \frac{\sum_{j=1}^n X(j) (CX)_j \exp\{\alpha (CX)_j\}}{\sum_{j=1}^n X(j) \exp\{\alpha (CX)_j\}} \right).
\end{align*}
Rearranging
\begin{align*}
RE(Y, \hat{X}) - RE(Y, X) \leq \alpha \left( \frac{\sum_{j=1}^n X(j) (CX)_j \exp\{\alpha (CX)_j\}}{\sum_{j=1}^n X(j) \exp\{\alpha (CX)_j\}} - Y \cdot CX \right),
\end{align*}
and, thus,
\begin{align*}
RE(Y, \hat{X}) - RE(Y, X) \leq \alpha \frac{d}{d \alpha} RE(Y, T(X)).
\end{align*}

\subsection{Approximation bounds obtained from the upper relative-entropy bound}

The following lemma is an analogue of \cite[Proposition 8.2.3]{ConvexAnalysis}.

\begin{lemma}
\label{approximation_bound_lemma_weighted_average}
Let $C \in \mathbb{C}^+$, $X^k \equiv T^k(X^0)$, and assume $\alpha_k > 0$, $k = 0, \ldots, K$. Then, for any $\theta > 0$,
\begin{align}
\forall Y \in \mathbb{X}(C) : \frac{1}{A_K} \sum_{k=0}^K \alpha_k (Y - X^k) \cdot CX^k \leq \frac{1}{A_K} \sum_{k=0}^K \alpha_k (\exp\{m \alpha_k\}-1) \bar{C} + \theta\label{xsublime}
\end{align}
where 
\begin{align*}
A_K = \sum_{k = 0}^K \alpha_k,
\end{align*}
$K$ is the smallest integer such that $A_K > RE(Y, X^0)/\theta$ or greater, and $m = \max_{ij} C_{ij}$.
\end{lemma}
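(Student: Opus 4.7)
The plan is to chain together the single-step bound from Lemma \ref{convexity_lemma_not_normalized} across iterations $k = 0, 1, \ldots, K$ and telescope. Applying that lemma with $X = X^k$, $Y = Y$, and $\alpha = \alpha_k$ yields, for each $k$,
\begin{align*}
RE(Y, X^{k+1}) \leq RE(Y, X^k) - \alpha_k (Y - X^k) \cdot CX^k + \alpha_k (\exp\{m \alpha_k\} - 1) \bar{C}.
\end{align*}
I would then rearrange this to isolate the inner product,
\begin{align*}
\alpha_k (Y - X^k) \cdot CX^k \leq RE(Y, X^k) - RE(Y, X^{k+1}) + \alpha_k (\exp\{m \alpha_k\} - 1) \bar{C},
\end{align*}
and sum over $k = 0, \ldots, K$. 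The relative-entropy differences telescope to $RE(Y, X^0) - RE(Y, X^{K+1})$, and property \emph{(i)} of the relative entropy ($RE \geq 0$) lets me drop $-RE(Y, X^{K+1})$ at the cost of an inequality, giving
\begin{align*}
\sum_{k=0}^K \alpha_k (Y - X^k) \cdot CX^k \leq RE(Y, X^0) + \sum_{k=0}^K \alpha_k (\exp\{m \alpha_k\} - 1) \bar{C}.
\end{align*}

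Dividing through by $A_K > 0$ produces the right-hand side of \eqref{xsublime} up to the term $RE(Y, X^0)/A_K$ in place of $\theta$. The last step is to invoke the stated choice of $K$: since $K$ is the smallest integer with $A_K > RE(Y, X^0)/\theta$, we have $RE(Y, X^0)/A_K < \theta$, which delivers the claimed bound for the arbitrary $Y \in \mathbb{X}(C)$.

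There is no substantive obstacle here; the argument is essentially the standard online-learning/regret telescoping argument, adapted to the variable step sizes $\alpha_k$ and the weighted denominator $A_K$. The only subtlety to be careful about is that Lemma \ref{convexity_lemma_not_normalized} requires $X \in \mathbb{\mathring{X}}(C)$ at each invocation; this is guaranteed because $X^0 \in \mathbb{\mathring{X}}(C)$ and the Hedge map \eqref{main_exp} preserves the interior for any finite number of iterations, so $X^k \in \mathbb{\mathring{X}}(C)$ for every $k = 0, \ldots, K$. With that observation in place, the proof reduces to the telescoping calculation above.
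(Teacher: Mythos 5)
Your proof is correct and rests on exactly the same ingredients as the paper's: the single-step bound of Lemma \ref{convexity_lemma_not_normalized}, summation over $k=0,\ldots,K$, division by $A_K$, nonnegativity of $RE(Y,X^{K+1})$, and the defining property $A_K > RE(Y,X^0)/\theta$ of $K$. The only difference is presentational --- the paper wraps the identical telescoping computation in a proof by contradiction, while you argue directly --- so this is essentially the same proof.
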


\begin{proof}
Assume for the sake of contradiction that \eqref{xsublime} does not hold, that is,
\begin{align}
\frac{1}{A_K} \sum_{k=0}^K \alpha_k (Y - X^k) \cdot CX^k > \frac{1}{A_K} \sum_{k=0}^K \alpha_k (\exp\{m\alpha_k\}-1) \bar{C} + \theta.\label{red}
\end{align}
Invoking Lemma \ref{convexity_lemma_not_normalized},
\begin{align*}
RE(Y, X^{k+1}) \leq RE(Y, X^k) - \alpha (Y-X^k) \cdot CX^k + \alpha (\exp\{m\alpha\} - 1) \bar{C}.
\end{align*}
Summing over $k = 0, \ldots, K$, we obtain
\begin{align*}
RE(Y, X^{K+1}) \leq RE(Y, X^0) - \sum_{k=0}^K \alpha_k (Y - X^k) \cdot CX^k + \sum_{k=0}^K \alpha_k (\exp\{m\alpha_k\}-1) \bar{C}
\end{align*}
and, dividing by $A_K$, we further obtain
\begin{align}
\frac{RE(Y, X^{K+1})}{A_K} \leq \frac{RE(Y, X^0)}{A_K} - \frac{1}{A_K} \sum_{k=0}^K \alpha_k (Y - X^k) \cdot CX^k + \frac{1}{A_K} \sum_{k=0}^K \alpha_k (\exp\{m\alpha_k\}-1) \bar{C}.\label{lalala}
\end{align}
Substituting then \eqref{red} in \eqref{lalala} we obtain
\begin{align*}
\frac{RE(Y, X^{K+1})}{A_K} \leq \frac{RE(Y, X^0)}{A_K} - \theta,
\end{align*}
which implies that
\begin{align*}
RE(Y, X^{K+1}) \leq RE(Y, X^0) - A_K \theta
\end{align*}
and, therefore, that
\begin{align*}
RE(Y, X^0) \geq A_K \theta.
\end{align*}
But this contradicts the previous definition of $K$ and completes the proof.
\end{proof}

\begin{lemma}
\label{SC_lemma}
Assume the learning rate $\alpha_k > 0$ is chosen from round to round such that
\begin{align*}
\lim_{k \rightarrow \infty} \alpha_k = 0 \mbox{ and } \sum_{k = 0}^{\infty} \alpha_k = +\infty. 
\end{align*}
Assume $m > 0$. Then
\begin{align*}
\lim_{K \rightarrow \infty} \frac{1}{A_K} \sum_{k=0}^K \alpha_k (\exp\{m\alpha_k\} - 1) = 0
\end{align*}
\end{lemma}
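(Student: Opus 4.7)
The statement is a Ces\`aro/Toeplitz-style averaging fact: we are weighing the sequence $b_k \doteq \exp\{m\alpha_k\}-1$ (which tends to $0$ as $k \to \infty$ by continuity of $\exp$ at $0$) with the strictly positive weights $\alpha_k$ whose partial sums $A_K$ diverge. Under such hypotheses the weighted average must follow the sequence to $0$. My plan is to execute the standard two-part split in this particular setting.

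Fix $\epsilon > 0$. Since $\alpha_k \to 0$ we have $b_k = \exp\{m\alpha_k\}-1 \to 0$, so there exists $N = N(\epsilon)$ such that $0 \leq b_k < \epsilon/2$ for all $k \geq N$. Split the sum as
\begin{align*}
\frac{1}{A_K} \sum_{k=0}^K \alpha_k b_k = \frac{1}{A_K} \sum_{k=0}^{N-1} \alpha_k b_k + \frac{1}{A_K} \sum_{k=N}^K \alpha_k b_k.
\end{align*}
For the tail, using $b_k < \epsilon/2$ together with $\alpha_k > 0$ and $A_K \geq \sum_{k=N}^K \alpha_k$,
\begin{align*}
\frac{1}{A_K} \sum_{k=N}^K \alpha_k b_k \leq \frac{\epsilon}{2} \cdot \frac{1}{A_K} \sum_{k=N}^K \alpha_k \leq \frac{\epsilon}{2}.
\end{align*}
For the head, set $M \doteq \sum_{k=0}^{N-1} \alpha_k b_k$, a finite constant depending only on $\epsilon$. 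By the assumption $\sum_{k=0}^\infty \alpha_k = +\infty$ we have $A_K \to \infty$, so there exists $K_0 = K_0(\epsilon)$ with $A_K > 2M/\epsilon$ for all $K \geq K_0$, whence the head contribution is less than $\epsilon/2$. Taking $K \geq K_0$ (and $K \geq N$) yields that the full weighted average is bounded by $\epsilon$. Since $\epsilon$ was arbitrary, the limit is $0$.

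There is no substantive obstacle here; the only care needed is that the weights $\alpha_k$ appear in both the weighted sum and the normalizer $A_K$, which is exactly what makes the tail bound telescope cleanly to $\epsilon/2$ without any further assumption on the rate at which $\alpha_k \to 0$. No auxiliary lemma from earlier in the paper is required, although the conclusion will be used as the asymptotic vanishing of the error term in Lemma~\ref{approximation_bound_lemma_weighted_average} in subsequent sections.
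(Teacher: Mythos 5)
Your proof is correct. Every step checks out: the terms $b_k = \exp\{m\alpha_k\}-1$ are nonnegative (indeed positive, since $m>0$ and $\alpha_k>0$), so the weighted average is nonnegative; the tail bound uses only $\sum_{k=N}^K \alpha_k \leq A_K$, and the head is killed by $A_K \to \infty$, which follows from the divergence hypothesis. You correctly use both hypotheses and nothing else.

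Your route differs from the paper's. The paper invokes the Stolz--Ces\'aro theorem directly: it sets $a_n = \sum_{k=0}^n \alpha_k(\exp\{m\alpha_k\}-1)$ and $b_n = A_n$, notes that $b_n$ is strictly increasing to $+\infty$ (by $\alpha_k>0$ and $\sum_k \alpha_k = \infty$), computes the difference quotient
\begin{align*}
\frac{a_{n+1}-a_n}{b_{n+1}-b_n} = \exp\{m\alpha_{n+1}\}-1 \rightarrow 0,
\end{align*}
and concludes $a_n/b_n \to 0$. Your $\epsilon/2$ head--tail split is the elementary argument that, in effect, reproves the special case of Stolz--Ces\'aro (equivalently, of weighted Ces\'aro/Toeplitz averaging) needed here; it is self-contained and makes explicit where each hypothesis enters, at the cost of a few more lines. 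The paper's version is shorter once the named theorem is granted, but the two arguments are mathematically the same phenomenon viewed from opposite ends, and either suffices for the role the lemma plays later (vanishing of the error term in Lemma \ref{approximation_bound_lemma_weighted_average} and in Lemma \ref{main_convergence_lemma}).
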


\begin{proof}
Our proof makes use of the Stolz-Ces\'aro theorem: Let $\{ a_n \}_0^{\infty}$ and $\{ b_n \}_0^{\infty}$ be two sequences of real numbers. Assume $\{ b_n \}_0^{\infty}$ is strictly increasing and approaches $+ \infty$ and
\begin{align*}
\lim_{n \rightarrow \infty} \frac{a_{n+1} - a_n}{b_{n+1} - b_n} = \ell.
\end{align*} 
Then the Stolz-Ces\'aro theorem implies that
\begin{align*}
\lim_{n \rightarrow \infty} \frac{a_n}{b_n} = \ell.
\end{align*}
Coming back to our proof, we let
\begin{align*}
a_n \equiv \sum_{k=0}^n \alpha_k (\exp\{m\alpha_k\} - 1)
\end{align*}
and
\begin{align*}
b_n \equiv A_n.
\end{align*}
Then 
\begin{align*}
\frac{a_{n+1} - a_n}{b_{n+1} - b_n} = \frac{\alpha_{n+1} (\exp\{m\alpha_{n+1}\} - 1)}{a_{n+1}} = \exp\{m\alpha_{n+1}\} - 1.
\end{align*}
Since, by the assumption of the lemma that $\lim_{k \rightarrow \infty} \alpha_k = 0$,
\begin{align*}
\lim_{n \rightarrow \infty} \left\{ \exp\{m\alpha_{n+1}\} - 1 \right\} = 0,
\end{align*}
invoking the Stolz-Ces\'aro theorem, completes the proof.
\end{proof}

\subsection{A lower relative-entropy bound}

\begin{lemma}
\label{lower_relative_entropy_bound}
Let $C \in \mathbb{C}$. Then, for all $Y \in \mathbb{X}(C)$ and for all $X \in \mathbb{\mathring{X}}(C)$, we have that
\begin{align*}
\forall \alpha > 0 : RE(Y, T(X)) \geq RE(Y, X) - \alpha (Y-X) \cdot CX.
\end{align*}
\end{lemma}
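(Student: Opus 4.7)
The plan is to mirror the strategy used in Lemma \ref{convexity_lemma_not_normalized}, but exploit the \emph{other} inequality in the secant bound \eqref{secant_inequality}. Fix $Y \in \mathbb{X}(C)$ and $X \in \mathbb{\mathring{X}}(C)$, and regard
\begin{align*}
G(\alpha) \doteq RE(Y, T(X))
\end{align*}
as a function of $\alpha$ alone. By Lemma \ref{convexity_lemma}, $G$ is convex on $\mathbb{R}$. Also, at $\alpha = 0$ the Hedge update \eqref{main_exp} gives $T(X) = X$, hence $G(0) = RE(Y, X)$.

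The key step is to apply \eqref{secant_inequality} with $a = 0$ and $b = \alpha > 0$ to the convex function $G$:
\begin{align*}
G'(0) \leq \frac{G(\alpha) - G(0)}{\alpha - 0},
\end{align*}
which rearranges to $G(\alpha) \geq G(0) + \alpha G'(0)$. This is just the standard fact that a convex differentiable function lies above its tangent at any point; I would invoke it explicitly from \eqref{secant_inequality} to stay consistent with the paper's style. What remains is to compute $G'(0)$, and for this I reuse the derivative formula already obtained in the proof of Lemma \ref{convexity_lemma_not_normalized}:
\begin{align*}
G'(\alpha) = \frac{\sum_{j=1}^n X(j)(CX)_j \exp\{\alpha (CX)_j\}}{\sum_{j=1}^n X(j) \exp\{\alpha (CX)_j\}} - Y \cdot CX.
\end{align*}
Evaluating at $\alpha = 0$ the exponentials become $1$, the ratio collapses to $\sum_j X(j)(CX)_j = X \cdot CX$, so $G'(0) = X \cdot CX - Y \cdot CX = -(Y - X)\cdot CX$.

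Putting the two pieces together gives
\begin{align*}
RE(Y, T(X)) = G(\alpha) \geq G(0) + \alpha G'(0) = RE(Y, X) - \alpha (Y - X) \cdot CX,
\end{align*}
which is exactly the asserted inequality. Note that no sign assumption on $C$ enters, so the bound holds for every $C \in \mathbb{C}$ (not only for $C \in \mathbb{C}^+$); the hypothesis $C \in \mathbb{C}^+$ was needed in Lemma \ref{convexity_lemma_not_normalized} only for the upper bound (through the step $\exp\{\alpha X \cdot CX\} \geq 1$), and that step does not appear here. I do not anticipate a real obstacle: the entire argument is the ``tangent-below-the-graph'' inequality for a convex function, and both ingredients (convexity and the explicit derivative) are already established in the preceding material.
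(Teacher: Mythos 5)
Your proof is correct and follows essentially the same route as the paper: convexity of $RE(Y,T(X))$ in $\alpha$ (Lemma \ref{convexity_lemma}) plus the left-hand side of the secant inequality \eqref{secant_inequality} at $a=0$, $b=\alpha$, together with $G(0)=RE(Y,X)$ and $G'(0)=-(Y-X)\cdot CX$. Your remark that no sign assumption on $C$ is needed is also consistent with the lemma's statement for general $C \in \mathbb{C}$.
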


\begin{proof}
Lemma \ref{convexity_lemma} and the left-hand-side of the secant inequality \eqref{secant_inequality} imply that
\begin{align*}
RE(Y, T(X)) - RE(Y, X) \geq \alpha \left. \frac{d RE(Y, T(X))}{d \alpha} \right|_{\alpha = 0} = - \alpha (Y-X) \cdot CX
\end{align*}
as claimed.
\end{proof}

\subsection{Approximation bounds obtained from lower relative-entropy bound}

\begin{lemma}
\label{limejuice_lemma}
Let $C \in \mathbb{C}$. Let $X^k \equiv T^k(X^0)$, $k = 1, \ldots K$. For any $X^0 \in \mathbb{\mathring{X}}(C)$,
\begin{align*}
\frac{1}{A_K} \left( \ln(X^{K+1}(i)) - \ln(X^0(i)) \right) \leq (C\bar{X}^K)_i - \frac{1}{A_K} \sum_{k=0}^K \alpha_k X^k \cdot CX^k.
\end{align*}
\end{lemma}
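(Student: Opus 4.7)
The approach is a direct telescoping argument, specializing the lower relative-entropy bound from Lemma \ref{lower_relative_entropy_bound} to pure strategies and summing.

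First I would instantiate Lemma \ref{lower_relative_entropy_bound} with $Y = E_i$ at iterate $X^k$ (with learning rate $\alpha_k$). Since $X^0 \in \mathbb{\mathring{X}}(C)$ and Hedge preserves positivity for any finite number of iterations, every $X^k$ and $T(X^k) = X^{k+1}$ lie in $\mathbb{\mathring{X}}(C)$, so the relaxed relative-entropy expression $RE(E_i, \cdot)$ is well defined. Observe that $RE(E_i, X) = -\ln X(i)$ and $(E_i - X^k)\cdot CX^k = (CX^k)_i - X^k\cdot CX^k$. Hence the lemma specializes to
\begin{align*}
-\ln X^{k+1}(i) \geq -\ln X^k(i) - \alpha_k\bigl((CX^k)_i - X^k\cdot CX^k\bigr),
\end{align*}
or equivalently
\begin{align*}
\ln X^{k+1}(i) - \ln X^k(i) \leq \alpha_k (CX^k)_i - \alpha_k X^k\cdot CX^k.
\end{align*}

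Next I would sum this inequality over $k = 0,\ldots,K$. The left-hand side telescopes to $\ln X^{K+1}(i) - \ln X^0(i)$, yielding
\begin{align*}
\ln X^{K+1}(i) - \ln X^0(i) \leq \sum_{k=0}^K \alpha_k (CX^k)_i - \sum_{k=0}^K \alpha_k\, X^k\cdot CX^k.
\end{align*}
Dividing through by $A_K = \sum_{k=0}^K\alpha_k$ and recognizing that, by linearity of $C$ and the definition of the weighted average,
\begin{align*}
\frac{1}{A_K}\sum_{k=0}^K \alpha_k (CX^k)_i = \bigl(C\bar{X}^K\bigr)_i,
\end{align*}
gives exactly the claimed inequality.

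The argument is essentially bookkeeping once the lower bound from Lemma \ref{lower_relative_entropy_bound} is in hand; the only point that requires brief care is justifying that $\ln X^{k+1}(i)$ is finite for all $k$, which follows from interiority being preserved by the Hedge map \eqref{main_exp} starting from $X^0 \in \mathbb{\mathring{X}}(C)$. No obstacle of substance arises, so I would expect this to be a short proof.
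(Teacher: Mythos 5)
Your proof is correct and is essentially identical to the paper's first proof of Lemma \ref{limejuice_lemma}: both instantiate Lemma \ref{lower_relative_entropy_bound} with $Y = E_i$, use $RE(E_i, X) = -\ln X(i)$, telescope the sum over $k = 0, \ldots, K$, and divide by $A_K$ to recognize $(C\bar{X}^K)_i$. No gaps; the interiority remark you add is the same implicit justification the paper relies on.
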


\begin{proof}[First proof of Lemma \ref{limejuice_lemma}]
Lemma \ref{lower_relative_entropy_bound} implies that, for $k=0, \ldots, K$,
\begin{align}
RE(E_i, X^{k+1}) - RE(E_i, X^k) \geq - \alpha_k (E_i - X^k) \cdot CX^k.\label{beachhouse}
\end{align}
Summing over $k=0, \ldots, K$ and dividing by $A_K$ we obtain
\begin{align*}
RE(E_i, X^{K+1}) - RE(E_i, X^0) \geq - \frac{1}{A_K} \sum_{k=0}^K \alpha_k (E_i - X^k) \cdot CX^k.
\end{align*}
Using the definition of relative entropy on the left-hand-side yields 
\begin{align*}
-\ln\left( X^{K+1}(i) \right) + \ln\left(X^0(i)\right) \geq - \frac{1}{A_K} \sum_{k=0}^K \alpha_k (E_i - X^k) \cdot CX^k.
\end{align*}
and negating yields the lemma.
\end{proof}

\begin{proof}[Second proof of Lemma \ref{limejuice_lemma}]
Let $\hat{X} \equiv T(X)$. We then have that
\begin{align*}
\frac{\hat{X}(i)}{X(i)} = \frac{\exp\{ \alpha (CX)_i \}}{\sum_{j=1}^n X(j) \exp\{ \alpha (CX)_j \}}.
\end{align*}
Taking logarithms on both sides, we obtain
\begin{align*}
\ln(\hat{X}(i)) - \ln(X(i)) = \alpha (CX)_i - \ln \left( \sum_{j=1}^n X(j) \exp\{ \alpha (CX)_j \} \right).
\end{align*}
Using Jensen's inequality, we further obtain
\begin{align*}
\ln(\hat{X}(i)) - \ln(X(i)) \leq \alpha ((CX)_i - X \cdot CX).
\end{align*}
We may now write the previous inequality as
\begin{align*}
\ln(X^{k+1}(i)) - \ln(X^k(i)) \leq \alpha_k ((CX^k)_i - X^k \cdot CX^k).
\end{align*}
Summing over $k = 0, \ldots, K$ and dividing by $A_K = \sum_k \alpha_k$,
\begin{align*}
\frac{1}{A_K} \left( \ln(X^{K+1}(i)) - \ln(X^0(i)) \right) \leq (C\tilde{X})_i - \frac{1}{A_K} \sum_{k=0}^K \alpha_k X^k \cdot CX^k,
\end{align*}
as claimed.
\end{proof}

\section{Limit points of sequence of empirical averages are equilibria}
\label{asymptotic_convergence_section}

\subsection{Our main asymptotic convergence result}

\begin{theorem}
\label{asymptotic_convergence_theorem}
Let $C \in \mathbb{C}^+$ and $X^k \equiv T^k(X^0)$, where $X^0 \in \mathbb{\mathring{X}}(C)$. Assume the learning rate $\alpha_k > 0$ is chosen from round to round such that
\begin{align*}
\lim_{k \rightarrow \infty} \alpha_k = 0, \quad  \sum_{k = 0}^{\infty} \alpha_k = +\infty, \mbox{ and } \sum_{k = 0}^{\infty} \alpha_k (\exp\{\alpha_k \}-1) < \infty.
\end{align*}
Then, considering the sequence of weighted empirical averages
\begin{align*}
\left\{ \frac{1}{A_K} \sum_{k=0}^K \alpha_k X^k \equiv \bar{X}^K \right\}_{K=0}^{\infty},
\end{align*}
every limit point of this sequence is a symmetric equilibrium strategy of $C$.
\end{theorem}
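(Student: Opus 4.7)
My plan is to combine upper and lower relative-entropy bounds derived from Lemmas \ref{convexity_lemma_not_normalized} and \ref{limejuice_lemma}, and then to exploit the slow variation of the weighted empirical averages. First, I would sum Lemma \ref{convexity_lemma_not_normalized} from $k = 0$ to $K$ and divide by $A_K$; using $RE(Y, X^{K+1}) \geq 0$ and the hypothesis $\sum_k \alpha_k(\exp\{\alpha_k\} - 1) < \infty$, which bounds $\bar{C}\sum_k \alpha_k(\exp\{m\alpha_k\}-1)$ uniformly in $K$ (since $m$ is a constant), I obtain
\begin{align*}
Y \cdot C\bar{X}^K - V^K \leq \frac{RE(Y, X^0) + O(1)}{A_K}
\end{align*}
for every $Y \in \mathbb{X}(C)$, where $V^K \equiv A_K^{-1}\sum_k \alpha_k X^k \cdot CX^k$. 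Since $A_K \to \infty$, maximizing over $Y$ gives $(C\bar{X}^K)_{\max} - V^K \to 0$.

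Second, I would multiply Lemma \ref{limejuice_lemma} through by $X^{K+1}(i)$ and sum over $i$. The right-hand side becomes $A_K^{-1}\bigl(-\sum_i X^{K+1}(i)\ln X^{K+1}(i) - \sum_i X^{K+1}(i)\ln X^0(i)\bigr)$, which tends to zero since the first term is (the negative of) the Shannon entropy of $X^{K+1}$ and is bounded by $\ln n$ in absolute value, and the second term is bounded because $X^0$ is interior. Therefore $X^{K+1} \cdot C\bar{X}^K - V^K \geq -o(1)$, and combining with the first step gives $(C\bar{X}^K)_{\max} - X^{K+1}\cdot C\bar{X}^K \to 0$: the iterate $X^{K+1}$ is asymptotically an optimal response to $\bar{X}^K$. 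Taking any limit point $X^*$ along a subsequence $K_n$ with $\bar{X}^{K_n} \to X^*$ and, by compactness, $X^{K_n+1} \to Z^*$, passing to the limit yields $Z^* \cdot CX^* = (CX^*)_{\max}$, i.e., $Z^*$ is a best response to $X^*$.

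The delicate final step promotes this to $X^* \in NE^+(C)$, namely $X^* \cdot CX^* = (CX^*)_{\max}$. My idea is to use the slow variation $\bar{X}^{K+1} - \bar{X}^K = O(\alpha_{K+1}/A_{K+1}) \to 0$ together with the uniformity of the best-response property established above: for a window $[K_n+1, K_n+J_n]$ with $A_{K_n+J_n}/A_{K_n} \to 1/\gamma$ for some $\gamma \in (0,1)$, the identity $\bar{X}^{K_n+J_n} = \gamma_n \bar{X}^{K_n} + (1-\gamma_n) Z_n$, with $Z_n$ the normalized weighted average of the iterates in the window, expresses the limit $X^*$ (as the limit, after a careful diagonal selection of $J_n$, of $\bar{X}^{K_n+J_n}$) as a convex combination of asymptotic best responses to $X^*$. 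By the convexity of the best-response set (Lemma \ref{best_responses_convexity_lemma}), $X^*$ is itself a best response to $X^*$, establishing $X^* \in NE^+(C)$. The main obstacle is balancing the window length: it must be long enough that the best-response averaging pins down $X^*$, yet short enough that the $O(\ln(A_{K_n+J_n}/A_{K_n}))$ wandering of $\bar{X}^K$ over the window is controlled, which requires a delicate diagonal subsequence argument.
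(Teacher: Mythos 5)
Your first two steps are sound and correspond to machinery the paper actually develops: the bound $(C\bar{X}^K)_{\max} - V^K \to 0$ is Lemma \ref{average_payoff_lemma}, and the lower bound $X^{K+1}\cdot C\bar{X}^K - V^K \geq 0$ is exactly Lemma \ref{best_response_lemma} (your entropy computation has a sign slip --- the summed left-hand side equals $RE(X^{K+1},X^0)/A_K \geq 0$ --- but the conclusion stands), so the iterate $X^{K+1}$ is indeed asymptotically a best response to $\bar{X}^K$.

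The gap is in your final step, and it is the crux of the theorem. Quantify your window: if $\gamma_n = A_{K_n}/A_{K_n+J_n} \to \gamma$, the only a priori control on the intermediate averages is the total-variation bound $\sum_{K}(\alpha_{K+1}/A_{K+1})\,\mathrm{diam} \approx \ln(1/\gamma)\cdot\mathrm{diam}$, so the wandering $\epsilon$ of $\bar{X}^K$ over the window and the convex weight $1-\gamma$ placed on $Z_n$ are of the same order $\delta = \ln(1/\gamma)$. Reconstructing $Z_n$ from the identity then gives only $\|Z_n - X^*\| \leq \epsilon/(1-\gamma) = O(1)$, which says nothing; to get $\epsilon \ll \delta$ you would need the iterates in the window to average back to $\bar{X}^{K_n}$, which is essentially the statement being proved. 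Moreover, for an arbitrary limit point $X^*$ there is no guarantee that windows of positive logarithmic length anchored at the visit times $K_n$ stay near $X^*$ at all, so even a repaired version of the argument would cover only specially chosen limit points, not every limit point as the theorem requires. The paper closes this gap by an entirely different mechanism: Lemma \ref{probability_mass_convergence_2} (built on the lower bound of Lemma \ref{limejuice_lemma}) shows that any pure strategy $i$ with $\liminf_K A_K^{-1}\sum_k\alpha_k(E_i - X^k)\cdot CX^k < 0$ has $\liminf_K X^K(i) = 0$, and Lemmas \ref{limsup_lemma} and \ref{probability_mass_convergence_7} --- this is where the hypothesis $\sum_k \alpha_k(\exp\{\alpha_k\}-1) < \infty$ is used in an essential way --- show that the mass cannot recover, so $X^K(i) \to 0$ and hence $\bar{X}^K(i)\to 0$. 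Suboptimal strategies are thereby expelled from every limit point, which is what permits the passage from the no-regret property to the Nash property. Your proposal never engages with the possibility that a strategy suboptimal against $X^*$ retains positive mass in $X^*$, and the window/convexity device does not substitute for this argument.
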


\begin{lemma}
\label{main_convergence_lemma}
Under the assumptions of Theorem \ref{asymptotic_convergence_theorem},
\begin{align*}
\forall Y \in \mathbb{X}(C) : \limsup\limits_{K \rightarrow \infty} \left\{ \frac{1}{A_K} \sum_{k=0}^K \alpha_k (Y - X^k) \cdot CX^k \right\} \leq 0.
\end{align*}
\end{lemma}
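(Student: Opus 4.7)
The plan is to apply the upper relative-entropy bound of Lemma \ref{convexity_lemma_not_normalized} iteratively and then exploit non-negativity of $RE$ together with the Stolz--Ces\`aro computation already packaged in Lemma \ref{SC_lemma}. Concretely, since $C \in \mathbb{C}^+$ and $X^k \in \mathbb{\mathring{X}}(C)$ for every finite $k$ (as Hedge preserves the interior), Lemma \ref{convexity_lemma_not_normalized} applies at each step with learning rate $\alpha_k$, giving
\begin{align*}
RE(Y, X^{k+1}) \leq RE(Y, X^k) - \alpha_k (Y - X^k) \cdot CX^k + \alpha_k(\exp\{m\alpha_k\}-1)\bar{C}.
\end{align*}

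Summing this telescoping-style inequality over $k = 0, \ldots, K$ and using $RE(Y, X^{K+1}) \geq 0$ together with the fact that $RE(Y, X^0)$ is finite (because $X^0 \in \mathbb{\mathring{X}}(C)$, so $X^0 > 0$ coordinatewise), I would obtain
\begin{align*}
\sum_{k=0}^K \alpha_k (Y - X^k) \cdot CX^k \leq RE(Y, X^0) + \bar{C} \sum_{k=0}^K \alpha_k(\exp\{m\alpha_k\}-1).
\end{align*}
Dividing by $A_K$ produces
\begin{align*}
\frac{1}{A_K}\sum_{k=0}^K \alpha_k (Y - X^k) \cdot CX^k \leq \frac{RE(Y, X^0)}{A_K} + \frac{\bar{C}}{A_K}\sum_{k=0}^K \alpha_k(\exp\{m\alpha_k\}-1).
\end{align*}

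To finish, I would pass to the limsup as $K \to \infty$. The first term on the right vanishes because $A_K \to +\infty$ by the assumption $\sum_k \alpha_k = +\infty$, and $RE(Y,X^0)$ is a fixed finite constant. The second term vanishes by a direct application of Lemma \ref{SC_lemma} (its hypotheses $\alpha_k \to 0$ and $A_K \to +\infty$ are exactly what we have assumed, and the constant $m = \max_{ij} C_{ij} > 0$ is the one in the exponent). Both vanishings are uniform in $Y$ on $\mathbb{X}(C)$ in the sense that the only $Y$-dependence is in the bounded quantity $RE(Y, X^0)$, so the argument goes through for every fixed $Y$.

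I do not foresee a substantive obstacle: all the analytical heavy lifting has been done in Lemmas \ref{convexity_lemma_not_normalized} and \ref{SC_lemma}, and the only conceptual point worth stating carefully is that $RE(Y, X^0) < \infty$ thanks to $X^0$ lying strictly in the interior of the simplex, which guarantees that the leading term $RE(Y, X^0)/A_K$ is not of the indeterminate form $\infty/\infty$. The third scheduling assumption $\sum_k \alpha_k(\exp\{\alpha_k\}-1) < \infty$ is not needed for this particular lemma (Lemma \ref{SC_lemma} alone suffices), but it will presumably be invoked downstream in the proof of Theorem \ref{asymptotic_convergence_theorem} to control higher-order terms.
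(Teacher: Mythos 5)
Your proposal is correct and follows essentially the same route as the paper: sum the bound of Lemma \ref{convexity_lemma_not_normalized}, drop $RE(Y,X^{K+1}) \geq 0$, divide by $A_K$, and kill the two right-hand terms via $A_K \to \infty$ (with $RE(Y,X^0)$ finite since $X^0$ is interior) and Lemma \ref{SC_lemma}. Your side remark that the summability condition $\sum_k \alpha_k(\exp\{\alpha_k\}-1) < \infty$ is not needed here is also accurate.
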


\begin{proof}
Using Lemma \ref{convexity_lemma_not_normalized}, we obtain that
\begin{align*}
RE(Y, X^{k+1}) \leq RE(Y, X^k) - \alpha_k (Y - X^k) \cdot CX^k + \alpha_k (\exp\{m\alpha_k\} - 1)\bar{C}.
\end{align*}
Summing over $k=0,\ldots,K$ and dividing by $A_K$ we obtain
\begin{align*}
\frac{RE(Y, X^{K+1})}{A_K} \leq \frac{RE(Y, X^0)}{A_K} - \frac{1}{A_K} \sum_{k=0}^K \alpha_k (Y - X^k) \cdot CX^k + \frac{1}{A_K} \sum_{k=0}^K \alpha_k (\exp\{m\alpha_k\} - 1)\bar{C},
\end{align*}
which implies
\begin{align*}
0 \leq \frac{RE(Y, X^0)}{A_K} - \frac{1}{A_K} \sum_{k=0}^K \alpha_k (Y - X^k) \cdot CX^k + \frac{1}{A_K} \sum_{k=0}^K \alpha_k (\exp\{m\alpha_k\} - 1)\bar{C}
\end{align*}
and, rearranging, we obtain
\begin{align*}
\frac{1}{A_K} \sum_{k=0}^K \alpha_k (Y - X^k) \cdot CX^k \leq \frac{RE(Y, X^0)}{A_K} + \frac{1}{A_K} \sum_{k=0}^K \alpha_k (\exp\{m\alpha_k\} - 1)\bar{C}.
\end{align*}
Lemma \ref{SC_lemma} gives that
\begin{align*}
\lim_{K \rightarrow \infty} \left\{ \frac{1}{A_K} \sum_{k=0}^K \alpha_k (\exp\{m\alpha_k\} - 1)\bar{C} \right\} = 0.
\end{align*}
Furthermore, $RE(Y, X^0)$ is finite for all $Y \in \mathbb{X}(C)$. In fact, \citep[Inequality (323)]{Sason-Verdu} implies that
\begin{align*}
RE(Y, X^0) \leq \log \left( \frac{1}{X^0(\min)} \right) \| Y - X^0 \|,
\end{align*}
where $X^0(\min) > 0$ since $X^0$ is interior to the probability simplex by assumption. Letting $\Delta$ be the maximum Euclidean distance between any two points on the simplex, we further obtain
\begin{align*}
RE(Y, X^0) \leq \log \left( \frac{1}{X^0(\min)} \right) \Delta.
\end{align*}
Therefore, since $A_K \rightarrow \infty$ as $K \rightarrow \infty$,
\begin{align*}
\lim_{K \rightarrow \infty} \left\{ \frac{1}{A_K} RE(Y, X^0) \right\} = 0
\end{align*}
and, thus,
\begin{align*}
\limsup\limits_{K \rightarrow \infty} \left\{ \frac{1}{A_K} \sum_{k=0}^K \alpha_k (Y - X^k) \cdot CX^k \right\} \leq 0
\end{align*}
as claimed.
\end{proof}

\begin{lemma}
\label{probability_mass_convergence_2}
Under the assumptions of Theorem \ref{asymptotic_convergence_theorem}, if
\begin{align}
\liminf\limits_{K \rightarrow \infty} \left\{ \frac{1}{A_K} \sum_{k=0}^K \alpha_k (E_i - X^k) \cdot CX^k \right\} < 0,\label{good_assumption}
\end{align}
then
\begin{align*}
\liminf\limits_{K \rightarrow \infty} X^K(i) = 0.
\end{align*}
\end{lemma}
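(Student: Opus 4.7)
The key input is Lemma \ref{limejuice_lemma}. Observe that by the definition of the weighted empirical average,
\begin{align*}
(C\bar{X}^K)_i = \frac{1}{A_K} \sum_{k=0}^K \alpha_k E_i \cdot CX^k,
\end{align*}
so the right-hand side of Lemma \ref{limejuice_lemma} can be rewritten as
\begin{align*}
\frac{1}{A_K} \sum_{k=0}^K \alpha_k (E_i - X^k) \cdot CX^k,
\end{align*}
which is precisely the quantity whose liminf is hypothesized to be strictly negative. Thus Lemma \ref{limejuice_lemma} supplies the inequality
\begin{align*}
\frac{1}{A_K} \ln(X^{K+1}(i)) \leq \frac{1}{A_K} \ln(X^0(i)) + \frac{1}{A_K} \sum_{k=0}^K \alpha_k (E_i - X^k) \cdot CX^k,
\end{align*}
which is the single lever the proof will pull.

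My plan is to extract a subsequence realizing the liminf in \eqref{good_assumption} and push the resulting bound through the displayed inequality. Concretely, let $-2\delta < 0$ equal the liminf in the hypothesis and choose an increasing sequence $K_n \to \infty$ such that
\begin{align*}
\frac{1}{A_{K_n}} \sum_{k=0}^{K_n} \alpha_k (E_i - X^k) \cdot CX^k \leq -\delta
\end{align*}
for all sufficiently large $n$. Since $X^0 \in \mathbb{\mathring{X}}(C)$ makes $\ln(X^0(i))$ a fixed finite constant and $A_{K_n} \to +\infty$ by the assumption $\sum_k \alpha_k = +\infty$ in Theorem \ref{asymptotic_convergence_theorem}, the term $\tfrac{1}{A_{K_n}} \ln(X^0(i))$ tends to zero. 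Therefore, for all sufficiently large $n$,
\begin{align*}
\frac{1}{A_{K_n}} \ln(X^{K_n+1}(i)) \leq -\tfrac{\delta}{2},
\end{align*}
equivalently $\ln(X^{K_n+1}(i)) \leq -\tfrac{\delta}{2} A_{K_n}$.

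Letting $n \to \infty$, the right-hand side tends to $-\infty$, so $X^{K_n+1}(i) \to 0$ along this subsequence. Since $X^K(i) \geq 0$ for every $K$, this forces $\liminf_{K \to \infty} X^K(i) = 0$, as claimed. There is no real obstacle here; the only subtlety is recognizing that the right-hand side of Lemma \ref{limejuice_lemma} is exactly the quantity appearing in \eqref{good_assumption}, after which the proof is essentially a two-line subsequence argument combined with the divergence of $A_K$.
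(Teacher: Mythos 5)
Your proof is correct and rests on exactly the same ingredients as the paper's: the rearranged inequality from Lemma \ref{limejuice_lemma} (after identifying $(C\bar{X}^K)_i$ with $\tfrac{1}{A_K}\sum_{k=0}^K \alpha_k E_i \cdot CX^k$), the finiteness of $\ln(X^0(i))$, and the divergence of $A_K$. The only difference is logical packaging — you argue directly along a subsequence realizing the negative liminf, whereas the paper runs the same inequality as a proof by contradiction assuming $\liminf_K X^K(i) > 0$ — so this is essentially the paper's argument.
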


\begin{proof}
Let us assume to the contrary that
\begin{align}
\liminf\limits_{K \rightarrow \infty} X^K(i) = \epsilon > 0.\label{ggg_assumption}
\end{align}
Then, since our map $T$ does not escape the interior of the probability simplex except in the limit, every subsequence is bounded from below by some positive number, say, $c$. Let $\{ X^{K_j} \}_{j = 0}^{\infty}$ be an arbitrary subsequence. From Lemma \ref{limejuice_lemma}, we have
\begin{align*}
\frac{1}{A_K} \left( \ln(X^{K+1}(i)) - \ln(X^0(i)) \right) \leq \frac{1}{A_K} \sum_{k=0}^K \alpha_k (E_i - X^k) \cdot CX^k
\end{align*}
and, since
\begin{align*}
\frac{1}{A_K} \left( \ln(X^{K+1}(i)) - \ln(X^0(i)) \right) \geq \frac{1}{A_K} \ln(X^{K+1}(i)),
\end{align*}
we obtain
\begin{align}
\frac{1}{A_K} \ln(X^{K+1}(i)) \leq \frac{1}{A_K} \sum_{k=0}^K \alpha_k (E_i - X^k) \cdot CX^k,
\end{align}
which implies
\begin{align*}
\forall j = 0, 1, 2, \ldots : \frac{1}{A_{K_j}} \ln(X^{K_j + 1}(i)) \leq \frac{1}{A_{K_j}} \sum_{k=0}^{K_j} \alpha_k (E_i - X^k) \cdot CX^k
\end{align*}
and, thus, since, for all $j$, $X^{K_j}(i) \geq c$, we have
\begin{align}
\forall j = 0, 1, 2, \ldots : \frac{1}{A_{K_j}} \ln(c) \leq \frac{1}{A_{K_j}} \sum_{k=0}^{K_j} \alpha_k (E_i - X^k) \cdot CX^k.\label{bee}
\end{align}
Considering the sequence
\begin{align*}
\left\{\frac{1}{A_{K_j}} \sum_{k=0}^{K_j} \alpha_k (E_i - X^k) \cdot CX^k \right\}_{j = 0}^{\infty},
\end{align*}
let
\begin{align*}
\left\{\frac{1}{A_{K_{j_{\ell}}}} \sum_{k=0}^{K_{j_{\ell}}} \alpha_k (E_i - X^k) \cdot CX^k \right\}_{\ell = 0}^{\infty}
\end{align*}
be a convergent subsequence. Then, invoking \eqref{bee}, we have
\begin{align*}
\frac{1}{A_{K_{j_{\ell}}}} \ln(c) \leq \frac{1}{A_{K_{j_{\ell}}}} \sum_{k=0}^{K_{j_{\ell}}} \alpha_k (E_i - X^k) \cdot CX^k.
\end{align*}
Now as $\ell \rightarrow \infty$, we obtain that the left-hand-size becomes $0$ and thus the right-hand-side is nonnegative, contradicting assumption \eqref{good_assumption}. Therefore, \eqref{ggg_assumption} is false, which completes the proof.
\end{proof}

\begin{lemma}
\label{best_response_lemma}
Let $C \in \mathbb{C}$ and $X^k \equiv T^k(X^0)$, $k =0, 1, \ldots K$. Then
\begin{align*}
X^{K+1} \cdot C\bar{X}^K - \frac{1}{A_K} \sum_{k=0}^K \alpha_k X^k \cdot CX^k \geq \frac{RE(X^{K+1}, X^0)}{A_K} \geq 0.
\end{align*}
\end{lemma}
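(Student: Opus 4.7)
The plan is to apply the lower relative-entropy bound (Lemma \ref{lower_relative_entropy_bound}) with the clever choice $Y = X^{K+1}$ and telescope. Specifically, for each $k = 0, 1, \ldots, K$, instantiate Lemma \ref{lower_relative_entropy_bound} with $Y = X^{K+1}$, $X = X^k$, $\alpha = \alpha_k$, which yields
\begin{align*}
RE(X^{K+1}, X^{k+1}) \geq RE(X^{K+1}, X^k) - \alpha_k (X^{K+1} - X^k) \cdot CX^k,
\end{align*}
and rearrange to
\begin{align*}
RE(X^{K+1}, X^k) - RE(X^{K+1}, X^{k+1}) \leq \alpha_k (X^{K+1} - X^k) \cdot CX^k.
\end{align*}

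Next I would sum this inequality over $k = 0, \ldots, K$. The left-hand side telescopes to $RE(X^{K+1}, X^0) - RE(X^{K+1}, X^{K+1})$, and by the standard identity $RE(P, P) = 0$ this collapses to $RE(X^{K+1}, X^0)$. The right-hand side becomes $\sum_{k=0}^K \alpha_k (X^{K+1} - X^k) \cdot CX^k$. Dividing through by $A_K > 0$ and using the definition $\bar{X}^K = \frac{1}{A_K}\sum_{k=0}^K \alpha_k X^k$ rewrites the right-hand side as $X^{K+1} \cdot C\bar{X}^K - \frac{1}{A_K}\sum_{k=0}^K \alpha_k X^k \cdot CX^k$, giving the first claimed inequality. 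The second inequality $\frac{RE(X^{K+1}, X^0)}{A_K} \geq 0$ is immediate from $A_K > 0$ and property (i) of relative entropy.

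There is essentially no obstacle here once one spots the right instantiation: the whole argument is driven by choosing $Y$ to equal the final iterate $X^{K+1}$, which is what makes the telescoping work and forces the boundary term to vanish. The only small subtlety is that $RE(X^{K+1}, X^k)$ must be finite for the manipulations to make sense, but since $X^0 \in \mathbb{\mathring{X}}(C)$ the iterates $X^k$ remain in the interior for all finite $k$, so $\mathcal{C}(X^{K+1}) \subseteq \mathcal{C}(X^k)$ holds trivially and the relaxed definition of relative entropy given earlier in the paper applies without issue.
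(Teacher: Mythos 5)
Your proof is correct and takes essentially the same approach as the paper: instantiate Lemma \ref{lower_relative_entropy_bound}, telescope the sum over $k=0,\ldots,K$, exploit $RE(X^{K+1},X^{K+1})=0$, divide by $A_K$, and invoke nonnegativity of relative entropy. The only cosmetic difference is that the paper sums with a general $Y$ and substitutes $Y=X^{K+1}$ afterwards, whereas you make that substitution at the outset.
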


\begin{proof}
Lemma \ref{lower_relative_entropy_bound} implies that, for all $Y \in \mathbb{X}(C)$,
\begin{align*}
RE(Y, T(X)) \geq RE(Y, X) - \alpha (Y-X) \cdot CX
\end{align*}
which we may write as
\begin{align*}
RE(Y, X^{k+1}) - RE(Y, X^k) \geq - \alpha_k (Y- X^k) \cdot CX^k.
\end{align*}
Summing over $k=0, \ldots, K$ and dividing by $A_K$, we obtain
\begin{align*}
\frac{1}{A_K} \left( RE(Y, X^{K+1}) - RE(Y, X^0) \right) \geq - \frac{1}{A_K} \sum_{k=0}^K \alpha_k (Y- X^k) \cdot CX^k.
\end{align*}
Letting $Y = X^{K+1}$, since $RE(X^{K+1}, X^{K+1}) = 0$, we further obtain
\begin{align*}
- \frac{1}{A_K} RE(X^{K+1}, X^0) \geq - \frac{1}{A_K} \sum_{k=0}^K \alpha_k (X^{K+1}- X^k) \cdot CX^k.
\end{align*}
Rearranging and invoking the property that the relative entropy is a nonnegative function completes the proof.
\end{proof}

\begin{lemma}
\label{limsup_lemma}
Under the assumptions of Theorem \ref{asymptotic_convergence_theorem}, for all $Y \in \mathbb{X}(C)$ and for all $\hat{k} \geq 0$ there exists $0 < \mathsf{C} < \infty$ such that
\begin{align}
\limsup\limits_{K \rightarrow \infty} \left\{ \sum_{k=\hat{k}}^{K} \alpha_k (Y - X^k) \cdot CX^k \right\} \leq \mathsf{C}.\label{fantastic2}
\end{align}
\end{lemma}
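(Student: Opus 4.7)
The plan is to telescope the one-step upper bound from Lemma \ref{convexity_lemma_not_normalized}, starting the sum at $\hat k$ rather than $0$, and then to observe that the resulting upper bound is uniformly bounded in $K$ by a single finite constant. The non-asymptotic nature of the inequality means the $\limsup$ is in fact a plain $\sup$.

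First I would apply Lemma \ref{convexity_lemma_not_normalized} at every step $k \geq \hat k$ with $X = X^k$ and rearrange to get the one-step inequality
\[
\alpha_k (Y - X^k)\cdot CX^k \leq RE(Y, X^k) - RE(Y, X^{k+1}) + \bar C\,\alpha_k(\exp\{m\alpha_k\}-1).
\]
Summing from $k = \hat k$ to $K$ makes the relative-entropy terms telescope, so
\[
\sum_{k=\hat k}^K \alpha_k(Y-X^k)\cdot CX^k \leq RE(Y, X^{\hat k}) - RE(Y, X^{K+1}) + \bar C \sum_{k=\hat k}^K \alpha_k(\exp\{m\alpha_k\}-1).
\]
Discarding $-RE(Y, X^{K+1}) \leq 0$ gives an upper bound that depends on $K$ only through the last sum. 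The starting term $RE(Y, X^{\hat k})$ is finite because $X^{\hat k}$ lies in the relative interior of the simplex after any finite number of Hedge iterations; for a quantitative bound I would reuse the estimate from Lemma \ref{main_convergence_lemma} via \citep[Inequality (323)]{Sason-Verdu}.

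The crucial step is bounding the exponential error sum uniformly in $K$. Since $\alpha_k \to 0$, all but finitely many $\alpha_k$ satisfy $\alpha_k \leq 1$, in which case the mean value theorem gives $\exp\{m\alpha_k\} - 1 \leq m e^m \alpha_k \leq m e^m (\exp\{\alpha_k\} - 1)$. Combined with the hypothesis $\sum_{k=0}^\infty \alpha_k(\exp\{\alpha_k\}-1) < \infty$ from Theorem \ref{asymptotic_convergence_theorem}, this yields $\sum_{k=0}^\infty \alpha_k(\exp\{m\alpha_k\}-1) < \infty$. Consequently the partial sum is dominated, uniformly in $K$, by the finite quantity
\[
\mathsf{C}_0 := RE(Y, X^{\hat k}) + \bar C \sum_{k=0}^\infty \alpha_k(\exp\{m\alpha_k\}-1).
\]
Taking $\limsup$ as $K \to \infty$ gives $\limsup_K \sum_{k=\hat k}^K \alpha_k(Y-X^k)\cdot CX^k \leq \mathsf{C}_0$. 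If $\mathsf{C}_0 \leq 0$, set $\mathsf{C} = \mathsf{C}_0 + 1$ to ensure $0 < \mathsf{C} < \infty$ as the statement requires.

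The only subtle point is the cosmetic mismatch between the $\exp\{\alpha_k\}$ in the learning-rate assumption and the $\exp\{m\alpha_k\}$ that appears in Lemma \ref{convexity_lemma_not_normalized}, which I resolved above with an elementary small-$\alpha_k$ comparison. Everything else is simply telescoping together with nonnegativity of the relative entropy.
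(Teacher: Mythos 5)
Your telescoping argument does prove the statement exactly as quantified: for each fixed $Y$ and $\hat{k}$ the bound $\sum_{k=\hat{k}}^K \alpha_k (Y-X^k)\cdot CX^k \leq RE(Y,X^{\hat{k}}) + \bar{C}\sum_{k\geq 0}\alpha_k(\exp\{m\alpha_k\}-1)$ is valid, non-asymptotic, and finite, and your small-$\alpha_k$ comparison of $\exp\{m\alpha_k\}-1$ with $\exp\{\alpha_k\}-1$ is a genuine improvement in care over the paper, which at this point invokes Lemma \ref{convexity_lemma_normalized} (stated only for $\hat{\mathbb{C}}$) under the $\mathbb{C}^+$ hypothesis. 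Up to here your argument coincides in substance with the first half of the paper's proof, which derives the same inequality \eqref{fantastic} by contradiction rather than by explicit telescoping.

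However, you stop where the paper's proof really begins. The paper is not content with a constant of the form $RE(Y,X^{\hat{k}})+\mathrm{const}$: via the maximizers $\mathcal{Y}_K$, their limit points $\hat{\mathcal{Y}}$, inequality \eqref{999}, and Lemma \ref{best_response_lemma}, it argues that $\mathsf{C}$ can be chosen uniformly in $\hat{k}$ (and over $Y$). That uniformity is the substantive content and is exactly what the lemma is used for: in Lemma \ref{probability_mass_convergence_7} one fixes $Y=E_i$, writes $\limsup_K X^{K+1}(i)\leq X^{\hat{k}}(i)\exp\{\mathsf{C}\}$, and then lets $\hat{k}\to\infty$ with the \emph{same} $\mathsf{C}$, using $\liminf_{\hat{k}} X^{\hat{k}}(i)=0$ to drive the right-hand side to zero. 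With your constant, $\mathsf{C}(\hat{k}) = RE(E_i,X^{\hat{k}})+\mathrm{const} = -\ln X^{\hat{k}}(i)+\mathrm{const}$, so $X^{\hat{k}}(i)\exp\{\mathsf{C}(\hat{k})\}$ is a fixed positive constant and that limiting argument collapses, precisely in the regime $\liminf_{\hat{k}} X^{\hat{k}}(i)=0$ where the lemma is applied. So relative to the literal quantifiers your proof is complete, but relative to the paper's proof and its downstream use you are missing its key idea: removing the dependence of $\mathsf{C}$ on $\hat{k}$.
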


\begin{proof}
We show first that
\begin{align}
\forall Y \in \mathbb{X}(C) : \sum_{k=\hat{k}}^K \alpha_k (Y - X^k) \cdot CX^k \leq \sum_{k=\hat{k}}^K \alpha_k (\exp\{\alpha_k\}-1) + RE(Y, X^{\hat{k}}).\label{fantastic}
\end{align}
Assume for the sake of contradiction that \eqref{fantastic} does not hold, that is,
\begin{align}
\sum_{k=\hat{k}}^K \alpha_k (Y - X^k) \cdot CX^k > \sum_{k=\hat{k}}^K \alpha_k (\exp\{\alpha_k\}-1) + RE(Y, X^{\hat{k}}).\label{fantastic_red}
\end{align}
Invoking Lemma \ref{convexity_lemma_normalized},
\begin{align*}
RE(Y, X^{k+1}) \leq RE(Y, X^k) - \alpha_k (Y-X^k) \cdot CX^k + \alpha_k (\exp\{\alpha_k\} - 1).
\end{align*}
Summing over $k = \hat{k}, \ldots, K$, we obtain
\begin{align}
RE(Y, X^{K+1}) \leq RE(Y, X^{\hat{k}}) - \sum_{k=\hat{k}}^K \alpha_k (Y - X^k) \cdot CX^k + \sum_{k=\hat{k}}^K \alpha_k (\exp\{\alpha_k\}-1)\label{fantastic_lalala}
\end{align}
Substituting then \eqref{fantastic_red} in \eqref{fantastic_lalala} we obtain
\begin{align*}
RE(Y, X^{K+1}) < RE(Y, X^{\hat{k}}) - RE(Y, X^{\hat{k}}) = 0,
\end{align*}
which is a contradiction. Thus, \eqref{fantastic} follows. Note now that
\begin{align*}
\sum_{k=\hat{k}}^K \alpha_k (Y - X^k) \cdot CX^k \leq \max\left\{ \sum_{k=\hat{k}}^K \alpha_k (Y - X^k) \cdot CX^k \bigg | Y \in \mathbb{X}(C)\right\}.
\end{align*}
Since 
\begin{align*}
\sum_{k = 0}^{\infty} \alpha_k (\exp\{\alpha_k \}-1) < \infty,
\end{align*}
letting
\begin{align*}
\mathcal{Y}_K \in \arg\max\left\{ \sum_{k=\hat{k}}^K \alpha_k (Y - X^k) \cdot CX^k \bigg | Y \in \mathbb{X}(C) \right\}
\end{align*}
it remains to prove that there exists $\mathsf{C} > 0$ such that
\begin{align}
\forall \hat{k} \geq 0 : \limsup\limits_{K \rightarrow \infty} \left\{ RE(\mathcal{Y}_K, X^{\hat{k}}) \right\} < \mathsf{C}.\label{toprove}
\end{align}
To that end, $\mathcal{\hat{Y}}$ be a limit point of the sequence $\{ \mathcal{Y}_K \}_0^{\infty}$. Then \eqref{fantastic_lalala} implies that
\begin{align}
RE(\mathcal{\hat{Y}}, X^{\hat{k}}) \leq RE(\mathcal{\hat{Y}}, X^{0}) - \sum_{k=0}^{\hat{k}} \alpha_k (\mathcal{\hat{Y}} - X^k) \cdot CX^k + \sum_{k=0}^{\hat{k}} \alpha_k (\exp\{\alpha_k\}-1)\label{999}
\end{align}
and note that the right-hand side is clearly finite for all finite $\hat{k}$. Furthermore, taking the limit as $\hat{k} \rightarrow \infty$ and invoking Lemma \ref{best_response_lemma}, which implies that 
\begin{align*}
\limsup\limits_{\hat{k} \rightarrow \infty} \left\{ \sum_{k=0}^{\hat{k}} \alpha_k (\mathcal{\hat{Y}} - X^k) \cdot CX^k \right\} \geq 0,
\end{align*}
we obtain that the right-hand-side of \eqref{999} remains upper bounded as $\hat{k} \rightarrow \infty$. Since $\mathcal{\hat{Y}}$ is an arbitrary limit point of the sequence $\{ \mathcal{Y}_K \}_0^{\infty}$, this proves \eqref{toprove} and the lemma.
\end{proof}

\begin{lemma}
\label{probability_mass_convergence_7}
Under the assumptions of Theorem \ref{asymptotic_convergence_theorem}, if
\begin{align*}
\liminf\limits_{K \rightarrow \infty} X^K(i) = 0,
\end{align*}
then
\begin{align*}
\lim_{K \rightarrow \infty} \bar{X}^K(i) = 0.
\end{align*}
\end{lemma}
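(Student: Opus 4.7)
The plan is to upgrade the hypothesis $\liminf_{K \to \infty} X^K(i) = 0$ to the stronger statement $\lim_{K \to \infty} X^K(i) = 0$, and then invoke a weighted Ces\`aro argument (available because $\alpha_k > 0$ and $A_K \to \infty$ under the standing hypothesis on the learning rate) to conclude $\bar{X}^K(i) \to 0$. The engine driving the upgrade is the pointwise log-update inequality $\ln X^{k+1}(i) - \ln X^k(i) \leq \alpha_k (E_i - X^k) \cdot CX^k$, i.e., the Jensen step in the second proof of Lemma \ref{limejuice_lemma}. Summed from any index $\hat{k}$ up to $K$ and then exponentiated, this gives
\begin{align*}
X^{K+1}(i) \leq X^{\hat{k}}(i) \cdot \exp\left\{ \sum_{k=\hat{k}}^{K} \alpha_k (E_i - X^k) \cdot CX^k \right\}.
\end{align*}

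Next I would appeal to Lemma \ref{limsup_lemma} with $Y = E_i$ to produce a constant $\mathsf{C} < \infty$, taken uniform in $\hat{k}$ as the proof of that lemma extracts via a supremum over limit points of the maximizers $\mathcal{Y}_K$, such that for every $\hat{k}$ and all sufficiently large $K$ one has $\sum_{k=\hat{k}}^{K} \alpha_k (E_i - X^k) \cdot CX^k \leq \mathsf{C} + 1$. Given any $\delta > 0$, the hypothesis $\liminf X^K(i) = 0$ lets me fix $\hat{k}$ with $X^{\hat{k}}(i) \leq \delta \exp\{-\mathsf{C} - 1\}$; the combination of the two displayed inequalities then forces $X^{K+1}(i) \leq \delta$ for all sufficiently large $K$. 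Since $\delta > 0$ was arbitrary, $\limsup_K X^K(i) = 0$, and hence $\lim_K X^K(i) = 0$.

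To close the argument, given $\epsilon > 0$ I would pick $K_0$ with $X^k(i) < \epsilon$ for all $k \geq K_0$ and split the weighted average as
\begin{align*}
\bar{X}^K(i) \leq \frac{A_{K_0 - 1}}{A_K} + \epsilon,
\end{align*}
which (since $A_K \to \infty$) yields $\limsup_K \bar{X}^K(i) \leq \epsilon$, and the claim follows as $\epsilon$ is arbitrary. The main obstacle I anticipate is precisely the uniformity-in-$\hat{k}$ of the bound supplied by Lemma \ref{limsup_lemma}: for $Y = E_i$ the naive intermediate estimate degenerates to the constant $RE(E_i, X^{\hat{k}}) = -\ln X^{\hat{k}}(i)$, which blows up along the very subsequence from which we are picking $\hat{k}$ and would render the bound $X^{K+1}(i) \leq 1$ trivial; the argument therefore hinges on the sharper uniform route through the maximizers $\mathcal{Y}_K$. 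If that uniformity proves elusive, the fallback would be to pair the decay rate of $X^{\hat{k}}(i)$ with a finer quantitative bound on the partial sums, possibly combining the upper log-inequality above with the lower relative-entropy bound of Lemma \ref{lower_relative_entropy_bound} to control the dynamics in the intervening window.
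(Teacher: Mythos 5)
Your proposal is correct and follows essentially the same route as the paper: the exponentiated log-update bound $X^{K+1}(i) \leq X^{\hat{k}}(i)\exp\{\sum_{k=\hat{k}}^{K}\alpha_k(E_i - X^k)\cdot CX^k\}$, the uniform-in-$\hat{k}$ constant from Lemma \ref{limsup_lemma}, the choice of $\hat{k}$ along the subsequence where $X^{\hat{k}}(i)$ is small to force $\lim_K X^K(i)=0$, and a weighted Ces\`aro step (which the paper performs via Stolz--Ces\`aro rather than your explicit splitting of the average). Your observation that the whole argument hinges on the uniformity in $\hat{k}$ of the bound in Lemma \ref{limsup_lemma} --- since the naive estimate $RE(E_i, X^{\hat{k}}) = -\ln X^{\hat{k}}(i)$ degenerates along exactly the subsequence being used --- is precisely the subtlety the paper's detour through the maximizers $\mathcal{Y}_K$ is designed to handle.
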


\begin{proof}
Let $\hat{X} \equiv T(X)$. We then have that
\begin{align*}
\frac{\hat{X}(i)}{X(i)} = \frac{\exp\{ \alpha (CX)_i \}}{\sum_{j=1}^n X(j) \exp\{ \alpha (CX)_j \}}.
\end{align*}
Taking logarithms on both sides, we obtain
\begin{align*}
\ln(\hat{X}(i)) - \ln(X(i)) = \alpha (CX)_i - \ln \left( \sum_{j=1}^n X(j) \exp\{ \alpha (CX)_j \} \right).
\end{align*}
Using Jensen's inequality, we further obtain
\begin{align*}
\ln(\hat{X}(i)) - \ln(X(i)) \leq \alpha ((CX)_i - X \cdot CX).
\end{align*}
We may now write the previous inequality as
\begin{align*}
\ln(X^{k+1}(i)) - \ln(X^k(i)) \leq \alpha_k ((CX^k)_i - X^k \cdot CX^k).
\end{align*}
Summing over $k = \hat{k}, \ldots, K$, we obtain
\begin{align*}
\ln(X^{K+1}(i)) - \ln(X^{\hat{k}}(i)) \leq \sum_{k=\hat{k}}^K \alpha_k (E_i - X^k) \cdot CX^k.
\end{align*}
Exponentiating, we obtain
\begin{align*}
X^{K+1}(i) \leq X^{\hat{k}}(i) \exp \left\{ \sum_{k=\hat{k}}^K \alpha_k (E_i - X^k) \cdot CX^k \right\}.
\end{align*}
Therefore,
\begin{align*}
\limsup\limits_{K \rightarrow \infty} X^{K+1}(i) \leq X^{\hat{k}}(i) \limsup\limits_{K \rightarrow \infty} \left\{ \exp \left\{ \sum_{k=\hat{k}}^{K} \alpha_k (E_i - X^k) \cdot CX^k \right\} \right\}.
\end{align*}
Lemma \ref{limsup_lemma} implies that there exists $\mathsf{C} < \infty$ such that
\begin{align*}
\limsup\limits_{K \rightarrow \infty} \left\{ \sum_{k=\hat{k}}^{K} \alpha_k (E_i - X^k) \cdot CX^k \right\} \leq \mathsf{C},
\end{align*}
which implies
\begin{align}
\limsup\limits_{K \rightarrow \infty} X^{K+1}(i) \leq X^{\hat{k}}(i) \exp \left\{ \mathsf{C} \right\}.\label{ccc}
\end{align}
The simple proof of this latter property, namely, that 
\begin{align*}
\limsup\limits_{K \rightarrow \infty} \left\{ \exp \left\{ \sum_{k=\hat{k}}^{K} \alpha_k (E_i - X^k) \cdot CX^k \right\} \right\} = \exp \left\{ \limsup\limits_{K \rightarrow \infty} \left\{ \sum_{k=\hat{k}}^{K} \alpha_k (E_i - X^k) \cdot CX^k \right\} \right\},
\end{align*}
rests on that the $\limsup$ is the supremum of all sub-sequential limits and that, since the exponential function is a monotone increasing function, sub-sequential limits are preserved under the composition of the exponential function with the elements of the aforementioned sequence of partial sums. Taking the $\liminf$ in the right-hand-side of \eqref{ccc}, we, therefore, obtain
\begin{align*}
\limsup\limits_{K \rightarrow \infty} X^{K+1}(i) \leq \liminf\limits_{\hat{k} \rightarrow \infty} \left\{ X^{\hat{k}}(i) \exp \left\{ \mathsf{C} \right\} \right\}.
\end{align*}
Therefore, since, by assumption,
\begin{align*}
\liminf\limits_{\hat{k} \rightarrow \infty} X^{\hat{k}}(i) = 0
\end{align*}
we obtain that
\begin{align*}
\liminf\limits_{\hat{k} \rightarrow \infty} \left\{ X^{\hat{k}}(i) \exp \left\{ \mathsf{C} \right\} \right\} = 0,
\end{align*}
which implies that
\begin{align*}
\limsup\limits_{K \rightarrow \infty} X^{K+1}(i) = 0
\end{align*}
which further implies
\begin{align*}
\lim_{K \rightarrow \infty} X^{K+1}(i) = 0.
\end{align*}
The final step is to apply the Stolz-Ces\'aro theorem to the sequence
\begin{align*}
\left\{ \frac{1}{A_K} \sum_{k=0}^K \alpha_k X^k(i) \right\}_{K=0}^{\infty},
\end{align*}
which implies $\bar{X}^K(i) \rightarrow 0$ as claimed.
\end{proof}

\begin{proof}[Proof of Theorem \ref{asymptotic_convergence_theorem}]
To prove all limit points of the sequence $\left\{ \bar{X}^K \right\}_{K = 0}^{\infty}$ are necessarily equilibrium strategies, let $\{ \bar{X}^{K_j} \}_{j=0}^{\infty}$ be a subsequence that converges to say $\bar{X}$. Then, since
\begin{align*}
\frac{1}{A_{K_j}} \sum_{k=0}^{K_j} \alpha_k E_i \cdot CX^k = E_i \cdot C \left( \frac{1}{A_{K_j}} \sum_{k=0}^{K_j} \alpha_k X^k \right) = E_i \cdot C \bar{X}^{K_j},
\end{align*}
and $E_i \cdot CX$ is a continuous function of $X$, we obtain that
\begin{align}
\forall i \in \mathcal{K}(C): \lim_{j \rightarrow \infty} \left\{ \frac{1}{A_{K_j}} \sum_{k=0}^{K_j} \alpha_k E_i \cdot CX^k \right\} = E_i \cdot C\bar{X}.\label{xyz2}
\end{align}
Lemma \ref{main_convergence_lemma} implies that if $i \in \mathcal{K}(C)$ is such that
\begin{align*}
\liminf\limits_{K \rightarrow \infty} \left\{ \frac{1}{A_K} \sum_{k=0}^K \alpha_k (E_i - X^k) \cdot CX^k \right\} = 0,
\end{align*}
then
\begin{align}
\lim_{K \rightarrow \infty} \left\{ \frac{1}{A_K} \sum_{k=0}^K \alpha_k (E_i - X^k) \cdot CX^k \right\} = 0.\label{xyz1}
\end{align}
Lemmas \ref{probability_mass_convergence_2} and \ref{probability_mass_convergence_7} imply at least one such $i$ exists. Therefore, \eqref{xyz1} and \eqref{xyz2} together imply that the limit
\begin{align*}
\lim_{j \rightarrow \infty} \left\{ \frac{1}{A_{K_j}} \sum_{k=0}^{K_j} \alpha_k X^k \cdot CX^k \right\}
\end{align*}
exists and that for all $i$ such that \eqref{xyz1} holds
\begin{align*}
E_i \cdot C \bar{X} = \lim_{j \rightarrow \infty} \left\{ \frac{1}{A_{K_j}} \sum_{k=0}^{K_j} \alpha_k X^k \cdot CX^k \right\}.
\end{align*}
Lemma \ref{main_convergence_lemma} further implies that
\begin{align*}
\forall i \in \mathcal{K}(C) : \lim_{j \rightarrow \infty} \left\{ \frac{1}{A_{K_j}} \sum_{k=0}^{K_j} \alpha_k (E_{i} - X^k) \cdot CX^k \right\} \leq 0
\end{align*}
and, thus, that
\begin{align*}
E_i \cdot C \bar{X} \leq \lim_{j \rightarrow \infty} \left\{ \frac{1}{A_{K_j}} \sum_{k=0}^{K_j} \alpha_k X^k \cdot CX^k \right\}.
\end{align*}
Moreover, for all $i' \in \mathcal{K}(C)$, such that
\begin{align*}
\lim_{j \rightarrow \infty} \left\{ \frac{1}{A_{K_j}} \sum_{k=0}^{K_j} \alpha_k (E_{i'} - X^k) \cdot CX^k \right\} < 0
\end{align*}
we obtain that
\begin{align*}
E_{i'} \cdot C \bar{X} < \lim_{j \rightarrow \infty} \left\{ \frac{1}{A_{K_j}} \sum_{k=0}^{K_j} \alpha_k X^k \cdot CX^k \right\}.
\end{align*}
and, by Lemmas \ref{probability_mass_convergence_2} and \ref{probability_mass_convergence_7}, that $\bar{X}(i') = 0$. Thus, $\bar{X}$ is a symmetric equilibrium strategy.
\end{proof}

Note that the sequence $\alpha_0=1$ and $\alpha_k = 1/k, k=1, 2, \ldots$ satisfies the assumptions of Theorem \ref{asymptotic_convergence_theorem}. Invoking the Bolzano-Weierstrass theorem along with Theorem \ref{asymptotic_convergence_theorem},  we obtain as an immediate corollary that every symmetric bimatrix game is equipped with a symmetric equilibrium.\\

The following lemma is useful in the sequel.

\begin{lemma}
\label{average_payoff_lemma}
Under the assumptions of Theorem \ref{asymptotic_convergence_theorem},
\begin{align*}
\lim_{K \rightarrow \infty} \left\{ (C\bar{X}^K)_{\max} - \frac{1}{A_K} \sum_{k=0}^K \alpha_k X^k \cdot CX^k \right\} = 0.
\end{align*}
\end{lemma}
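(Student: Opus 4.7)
\textbf{Proof plan for Lemma \ref{average_payoff_lemma}.} The plan is to sandwich the quantity $\Phi_K \equiv (C\bar{X}^K)_{\max} - \frac{1}{A_K} \sum_{k=0}^K \alpha_k X^k \cdot CX^k$ between a nonpositive $\limsup$ and a nonnegative $\liminf$, using Lemma \ref{main_convergence_lemma} for the upper side and Lemma \ref{best_response_lemma} for the lower side.

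For the upper bound, I first observe that since $(C\bar{X}^K)_{\max} = \max_{i \in \mathcal{K}(C)} E_i \cdot C\bar{X}^K$ and $E_i \cdot C\bar{X}^K = \frac{1}{A_K}\sum_{k=0}^K \alpha_k E_i \cdot CX^k$, we may rewrite
\begin{align*}
\Phi_K = \max_{i \in \mathcal{K}(C)} \left\{ \frac{1}{A_K}\sum_{k=0}^K \alpha_k (E_i - X^k) \cdot CX^k \right\}.
\end{align*}
Lemma \ref{main_convergence_lemma}, applied pointwise with $Y = E_i$, gives that each of the finitely many inner sequences has $\limsup \le 0$. Because $\mathcal{K}(C)$ is finite, the $\limsup$ commutes with the maximum (by picking a subsequence attaining $\limsup_K \max_i$ and then a further subsequence along which the argmax $i$ is constant), so $\limsup_{K \to \infty} \Phi_K \le 0$.

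For the lower bound, Lemma \ref{best_response_lemma} yields
\begin{align*}
X^{K+1} \cdot C\bar{X}^K - \frac{1}{A_K} \sum_{k=0}^K \alpha_k X^k \cdot CX^k \ge \frac{RE(X^{K+1}, X^0)}{A_K} \ge 0.
\end{align*}
Since $X^{K+1} \in \mathbb{X}(C)$, we have $(C\bar{X}^K)_{\max} \ge X^{K+1} \cdot C\bar{X}^K$, and therefore $\Phi_K \ge 0$ for every $K$, giving $\liminf_{K \to \infty} \Phi_K \ge 0$.

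Combining the two bounds forces $\lim_{K\to\infty} \Phi_K = 0$, as required. I expect no serious obstacle here: the only point that merits care is the commutation of $\limsup$ with a finite maximum in the upper-bound step, which is routine but worth stating explicitly so that the pointwise $\limsup \le 0$ from Lemma \ref{main_convergence_lemma} translates cleanly into a $\limsup \le 0$ for $\Phi_K$.
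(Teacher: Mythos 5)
Your proof is correct, and it rests on the same two ingredients as the paper's proof (Lemma \ref{best_response_lemma} for the lower bound, Lemma \ref{main_convergence_lemma} for the upper bound), but the upper bound is executed by a genuinely different, more direct route. The paper argues by contradiction: assuming the $\limsup$ of the deficit is positive, it extracts a subsequence along which the deficit has a positive limit, then a further subsequence along which $\bar{X}^{K_{j_\ell}}$ converges to some $\bar{X}$ (compactness of the simplex), and derives a contradiction with Lemma \ref{main_convergence_lemma} evaluated at that limit point. You instead observe that $(C\bar{X}^K)_{\max} = \max_{i} E_i \cdot C\bar{X}^K$ and, by linearity, rewrite the whole deficit as a finite maximum over pure strategies of the weighted sums $\frac{1}{A_K}\sum_{k=0}^K \alpha_k (E_i - X^k)\cdot CX^k$, then use the (correctly justified, via your pigeonhole-on-the-argmax subsequence argument) fact that the $\limsup$ of a finite maximum is at most the maximum of the $\limsup$s, so that Lemma \ref{main_convergence_lemma} applied only at the finitely many vertices $Y = E_i$ gives $\limsup_K \Phi_K \le 0$ directly. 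Your route avoids any appeal to limit points of the empirical averages and is arguably cleaner and more self-contained; the paper's contradiction-plus-subsequence style has the advantage of matching the machinery it reuses in the surrounding results (e.g., Theorem \ref{payoff_of_average_strategy_theorem}), but nothing in the statement requires it. The lower-bound half of your argument is identical to the paper's.
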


\begin{proof}
Note first that Lemma \ref{best_response_lemma} implies that
\begin{align*}
(C\bar{X}^K)_{\max} - \frac{1}{A_K} \sum_{k=0}^K \alpha_k X^k \cdot CX^k \geq 0,
\end{align*}
which further implies that
\begin{align*}
\liminf\limits_{K \rightarrow \infty} \left\{ (C\bar{X}^K)_{\max} - \frac{1}{A_K} \sum_{k=0}^K \alpha_k X^k \cdot CX^k \right\} \geq 0.
\end{align*}
Let us assume for the sake of contradiction that
\begin{align}
\limsup\limits_{K \rightarrow \infty} \left\{ (C\bar{X}^K)_{\max} - \frac{1}{A_K} \sum_{k=0}^K \alpha_k X^k \cdot CX^k \right\} > 0\label{bra}
\end{align}
and let
\begin{align*}
\left\{ (C\bar{X}^{K_j})_{\max} - \frac{1}{A_{K_j}} \sum_{k=0}^{K_j} \alpha_k X^k \cdot CX^k \right\}_{j = 0}^{\infty}
\end{align*}
be a subsequence such that
\begin{align}
\lim_{j \rightarrow \infty} \left\{ (C\bar{X}^{K_j})_{\max} - \frac{1}{A_{K_j}} \sum_{k=0}^{K_j} \alpha_k X^k \cdot CX^k \right\} > 0.\label{bra2}
\end{align}
Furthermore, let $\{ \bar{X}^{K_{j_{\ell}}} \}_{\ell=0}^{\infty}$ be a subsequence that converges to say $\bar{X}$. Then \eqref{bra2} implies that
\begin{align*}
\lim_{\ell \rightarrow \infty} \left\{ (C\bar{X})_{\max} - \frac{1}{A_{K_{j_{\ell}}}} \sum_{k=0}^{K_{j_{\ell}}} \alpha_k X^k \cdot CX^k \right\} > 0,
\end{align*}
which contradicts Lemma \ref{main_convergence_lemma}. This completes the proof.
\end{proof}

\if(0)

Let us now give a second proof of Theorem \ref{asymptotic_convergence_theorem}. We first need a pair of lemmas.

\begin{lemma}
\label{cat}
Under the assumptions of Theorem \ref{asymptotic_convergence_theorem},
\begin{align*}
\lim_{K \rightarrow \infty} \left\{ (C\bar{X}^K)_{\max} - X^{K+1} \cdot C\bar{X}^K \right\} = 0.
\end{align*}
\end{lemma}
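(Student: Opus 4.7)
The plan is to prove Lemma \ref{cat} by a simple squeeze argument, sandwiching the quantity $(C\bar{X}^K)_{\max} - X^{K+1} \cdot C\bar{X}^K$ between $0$ and a quantity that is already known to vanish by Lemma \ref{average_payoff_lemma}.

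For the lower bound, I would observe that $(C\bar{X}^K)_{\max} = \max\{ Y \cdot C\bar{X}^K \mid Y \in \mathbb{X}(C) \}$, and since $X^{K+1} \in \mathbb{X}(C)$, we immediately get $(C\bar{X}^K)_{\max} - X^{K+1} \cdot C\bar{X}^K \geq 0$. For the upper bound, I would invoke Lemma \ref{best_response_lemma}, which gives
\begin{align*}
X^{K+1} \cdot C\bar{X}^K \geq \frac{1}{A_K} \sum_{k=0}^K \alpha_k X^k \cdot CX^k,
\end{align*}
so that
\begin{align*}
0 \leq (C\bar{X}^K)_{\max} - X^{K+1} \cdot C\bar{X}^K \leq (C\bar{X}^K)_{\max} - \frac{1}{A_K} \sum_{k=0}^K \alpha_k X^k \cdot CX^k.
\end{align*}

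The final step is to note that the right-hand side tends to $0$ as $K \rightarrow \infty$ by Lemma \ref{average_payoff_lemma}, and so the squeeze theorem yields the claim. There is no real obstacle here; all the analytic work has already been done in establishing Lemmas \ref{best_response_lemma} and \ref{average_payoff_lemma}, and the present lemma is essentially a reformulation that replaces the weighted time-average of self-payoffs $\frac{1}{A_K} \sum \alpha_k X^k \cdot CX^k$ by the single-iterate payoff $X^{K+1} \cdot C\bar{X}^K$, justified by the best-response-type inequality from Lemma \ref{best_response_lemma}.
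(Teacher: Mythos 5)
Your proposal is correct and uses the same ingredients as the paper's own argument: Lemma \ref{best_response_lemma} to bound $X^{K+1} \cdot C\bar{X}^K$ from below by the weighted average of self-payoffs, and Lemma \ref{average_payoff_lemma} to make the resulting upper bound vanish. The only difference is presentational: the paper proves the intermediate limit $\lim_{K\to\infty}\{X^{K+1}\cdot C\bar{X}^K - \frac{1}{A_K}\sum_{k=0}^K \alpha_k X^k\cdot CX^k\}=0$ via separate $\liminf$/$\limsup$ bounds and then subtracts limits, whereas your squeeze argument reaches the conclusion a bit more directly.
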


\begin{proof}
Lemma \ref{best_response_lemma} implies that
\begin{align*}
X^{K+1} \cdot C\bar{X}^K - \frac{1}{A^K} \sum_{k=0}^K \alpha^k X^k \cdot CX^k \geq \frac{RE(X^{K+1}, X^0)}{A^K}.
\end{align*}
Since $RE(X^{K+1}, X^0)$ is finite as argued in the proof of Lemma \ref{main_convergence_lemma} and $A^K \rightarrow \infty$ as $K \rightarrow \infty$,
\begin{align*}
\lim_{K \rightarrow \infty} \left\{ \frac{RE(X^{K+1}, X^0)}{A^K} \right\} = 0
\end{align*}
and, therefore,
\begin{align}
\liminf\limits_{K \rightarrow \infty} \left\{ X^{K+1} \cdot C\bar{X}^K - \frac{1}{A^K} \sum_{k=0}^K \alpha^k X^k \cdot CX^k \right\} \geq 0.\label{asdlkfsafjhdhfhfd}
\end{align}
Furthermore,
\begin{align*}
\limsup\limits_{K \rightarrow \infty} \left\{ X^{K+1} \cdot C\bar{X}^K - \frac{1}{A^K} \sum_{k=0}^K \alpha^k X^k \cdot CX^k \right\} &\leq \limsup\limits_{K \rightarrow \infty} \left\{ (C\bar{X}^K)_{\max} - \frac{1}{A^K} \sum_{k=0}^K \alpha^k X^k \cdot CX^k \right\}\\
  &= \lim_{K \rightarrow \infty} \left\{ (C\bar{X}^K)_{\max} - \frac{1}{A^K} \sum_{k=0}^K \alpha^k X^k \cdot CX^k \right\} = 0
\end{align*}
by Lemma \ref{average_payoff_lemma}. That is,
\begin{align}
\limsup\limits_{K \rightarrow \infty} \left\{ X^{K+1} \cdot C\bar{X}^K - \frac{1}{A^K} \sum_{k=0}^K \alpha^k X^k \cdot CX^k \right\} \leq 0\label{asdkflhwuereryfh}
\end{align}
Therefore, combining \eqref{asdlkfsafjhdhfhfd} and \eqref{asdkflhwuereryfh}, we obtain
\begin{align}
\lim_{K \rightarrow \infty} \left\{ X^{K+1} \cdot C\bar{X}^K - \frac{1}{A^K} \sum_{k=0}^K \alpha^k X^k \cdot CX^k \right\} = 0.\label{enaaane}
\end{align}
Furthermore, by Lemma \ref{average_payoff_lemma} 
\begin{align}
\lim_{K \rightarrow \infty} \left\{ (C\bar{X}^K)_{\max} - \frac{1}{A_K} \sum_{k=0}^K \alpha_k X^k \cdot CX^k \right\} = 0.\label{dyyyod}
\end{align}
\eqref{enaaane} and \eqref{dyyyod} imply that
\begin{align*}
\lim_{K \rightarrow \infty} \left\{ (C\bar{X}^K)_{\max} - \frac{1}{A_K} \sum_{k=0}^K \alpha_k X^k \cdot CX^k \right\} - \lim_{K \rightarrow \infty} \left\{ X^{K+1} \cdot C\bar{X}^K - \frac{1}{A^K} \sum_{k=0}^K \alpha^k X^k \cdot CX^k \right\} = 0,
\end{align*}
which further implies
\begin{align*}
&\lim_{K \rightarrow \infty} \left\{ (C\bar{X}^K)_{\max} - X^{K+1} \cdot C\bar{X}^K \right\} =\\
&= \lim_{K \rightarrow \infty} \left\{ \left( (C\bar{X}^K)_{\max} - \frac{1}{A_K} \sum_{k=0}^K \alpha_k X^k \cdot CX^k \right) - \left( X^{K+1} \cdot C\bar{X}^K - \frac{1}{A_K} \sum_{k=0}^K \alpha_k X^k \cdot CX^k \right) \right\}=\\
&= \lim_{K \rightarrow \infty} \left\{ (C\bar{X}^K)_{\max} - \frac{1}{A_K} \sum_{k=0}^K \alpha_k X^k \cdot CX^k \right\} - \lim_{K \rightarrow \infty} \left\{ X^{K+1} \cdot C\bar{X}^K - \frac{1}{A^K} \sum_{k=0}^K \alpha^k X^k \cdot CX^k \right\} = 0,
\end{align*}
and completes the proof.
\end{proof}

\begin{lemma}
\label{kanoni}
Under the assumptions of Theorem \ref{asymptotic_convergence_theorem}, if
\begin{align*}
\limsup\limits_{K \rightarrow \infty} \left\{ (C\bar{X}^K)_{\max} - (C\bar{X}^K)_i \right\} > 0
\end{align*}
then
\begin{align*}
\liminf\limits_{K \rightarrow \infty} X^K(i) = 0.
\end{align*}
\end{lemma}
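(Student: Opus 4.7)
The plan is to reduce Lemma \ref{kanoni} directly to Lemma \ref{probability_mass_convergence_2} using the decomposition afforded by Lemma \ref{average_payoff_lemma}. The key algebraic identity is simply the definition of the weighted average: since $\bar{X}^K = \frac{1}{A_K}\sum_{k=0}^K \alpha_k X^k$ and $E_i \cdot CX$ is linear in $X$, we have
\begin{align*}
(C\bar{X}^K)_i = \frac{1}{A_K}\sum_{k=0}^K \alpha_k E_i \cdot CX^k,
\end{align*}
so rearranging gives
\begin{align*}
(C\bar{X}^K)_{\max} - (C\bar{X}^K)_i = \left[(C\bar{X}^K)_{\max} - \frac{1}{A_K}\sum_{k=0}^K \alpha_k X^k \cdot CX^k\right] - \frac{1}{A_K}\sum_{k=0}^K \alpha_k (E_i - X^k) \cdot CX^k.
\end{align*}

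By Lemma \ref{average_payoff_lemma}, the bracketed term tends to $0$ as $K \to \infty$. Adding a null sequence does not affect the $\limsup$, so
\begin{align*}
\limsup_{K \to \infty} \left\{(C\bar{X}^K)_{\max} - (C\bar{X}^K)_i\right\} = -\liminf_{K \to \infty} \left\{\frac{1}{A_K}\sum_{k=0}^K \alpha_k (E_i - X^k) \cdot CX^k\right\}.
\end{align*}
Under the hypothesis of the lemma, the left-hand side is strictly positive, whence
\begin{align*}
\liminf_{K \to \infty} \left\{\frac{1}{A_K}\sum_{k=0}^K \alpha_k (E_i - X^k) \cdot CX^k\right\} < 0,
\end{align*}
which is precisely assumption \eqref{good_assumption} in Lemma \ref{probability_mass_convergence_2}. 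That lemma immediately yields $\liminf_{K \to \infty} X^K(i) = 0$, as claimed.

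There is no real obstacle here: the work has already been done in the earlier lemmas. The only point that deserves care is verifying that the asymptotic gap between $(C\bar{X}^K)_{\max}$ and $\frac{1}{A_K}\sum_{k=0}^K \alpha_k X^k \cdot CX^k$ actually vanishes (i.e. Lemma \ref{average_payoff_lemma} is applicable), so that the $\limsup$ identity above is justified; everything else is a mechanical reduction.
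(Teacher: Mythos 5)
Your proof is correct, but it follows a genuinely different route from the paper's. You decompose $(C\bar{X}^K)_{\max} - (C\bar{X}^K)_i$ by linearity, kill the bracketed gap with Lemma \ref{average_payoff_lemma}, and thereby translate the hypothesis into condition \eqref{good_assumption}, so that Lemma \ref{probability_mass_convergence_2} finishes the job; each step (the identity $(C\bar{X}^K)_i = \frac{1}{A_K}\sum_k \alpha_k E_i\cdot CX^k$, the invariance of $\limsup$ under addition of a null sequence, and the applicability of the two cited lemmas under the standing assumptions) checks out, and there is no circularity since Lemma \ref{probability_mass_convergence_2} rests only on the lower relative-entropy bound and not on this lemma. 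The paper instead first proves that $(C\bar{X}^K)_{\max} - X^{K+1}\cdot C\bar{X}^K \to 0$ (combining Lemma \ref{best_response_lemma} with Lemma \ref{average_payoff_lemma}), rewrites this as the vanishing of $\sum_j X^{K+1}(j)\left((C\bar{X}^K)_{\max} - (C\bar{X}^K)_j\right)$, a sum of nonnegative terms, concludes that each product $X^{K+1}(j)\left((C\bar{X}^K)_{\max} - (C\bar{X}^K)_j\right)$ tends to zero, and extracts $\liminf_K X^{K+1}(j)\cdot\limsup_K\left\{(C\bar{X}^K)_{\max} - (C\bar{X}^K)_j\right\} = 0$ by a subsequence argument. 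The trade-off: your reduction is shorter and more mechanical, but it leans on Lemma \ref{probability_mass_convergence_2}, which is the centerpiece of the paper's first proof of Theorem \ref{asymptotic_convergence_theorem}; the paper's version of this lemma is designed to feed an alternative (second) proof of that theorem and therefore deliberately avoids that lemma, using only the upper/lower relative-entropy bounds through Lemmas \ref{best_response_lemma} and \ref{average_payoff_lemma}. So your argument is valid as a standalone proof, but it would make the ``second proof'' less independent of the first than the paper intends.
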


\begin{proof}
We have
\begin{align}
\lim_{K \rightarrow \infty} \left\{ (C\bar{X}^K)_{\max} - X^{K+1} \cdot C\bar{X}^K \right\} &= \lim_{K \rightarrow \infty} \left\{ (C\bar{X}^K)_{\max} - \sum_{i=1}^n X^{K+1}(i) (C\bar{X}^K)_i \right\} =\notag\\
  &= \lim_{K \rightarrow \infty} \left\{ \sum_{i=1}^n X^{K+1}(i) \left( (C\bar{X}^K)_{\max} - (C\bar{X}^K)_i \right) \right\} = 0\label{askdfjjdhfhhhfhf}
\end{align}
Note that, for all $j \in \{1, \ldots, n\}$,
\begin{align*}
\sum_{i=1}^n X^{K+1}(i) \left( (C\bar{X}^K)_{\max} - (C\bar{X}^K)_i \right) \geq X^{K+1}(j) \left( (C\bar{X}^K)_{\max} - (C\bar{X}^K)_j \right)
\end{align*}
and, therefore,
\begin{align}
\limsup\limits_{K \rightarrow \infty} \left\{ \sum_{i=1}^n X^{K+1}(i) \left( (C\bar{X}^K)_{\max} - (C\bar{X}^K)_i \right) \right\} \geq \limsup\limits_{K \rightarrow \infty} \left\{ X^{K+1}(j) \left( (C\bar{X}^K)_{\max} - (C\bar{X}^K)_j \right) \right\}\label{xckvbxafuhasfds}
\end{align}
and
\begin{align}
\liminf\limits_{K \rightarrow \infty} \left\{ \sum_{i=1}^n X^{K+1}(i) \left( (C\bar{X}^K)_{\max} - (C\bar{X}^K)_i \right) \right\} \geq \liminf\limits_{K \rightarrow \infty} \left\{ X^{K+1}(j) \left( (C\bar{X}^K)_{\max} - (C\bar{X}^K)_j \right) \right\}\label{qweoiieulfskdcvnj}
\end{align}
Since by \eqref{askdfjjdhfhhhfhf}
\begin{align*}
\limsup\limits_{K \rightarrow \infty} \left\{ \sum_{i=1}^n X^{K+1}(i) \left( (C\bar{X}^K)_{\max} - (C\bar{X}^K)_i \right) \right\} = \liminf\limits_{K \rightarrow \infty} \left\{ \sum_{i=1}^n X^{K+1}(i) \left( (C\bar{X}^K)_{\max} - (C\bar{X}^K)_i \right) \right\} = 0
\end{align*}
we obtain from \eqref{xckvbxafuhasfds} and \eqref{qweoiieulfskdcvnj} that, for all $j \in \{1, \ldots, n \}$,
\begin{align*}
\limsup\limits_{K \rightarrow \infty} \left\{ X^{K+1}(j) \left( (C\bar{X}^K)_{\max} - (C\bar{X}^K)_j \right) \right\} = \liminf\limits_{K \rightarrow \infty} \left\{ \sum_{i=1}^n X^{K+1}(i) \left( (C\bar{X}^K)_{\max} - (C\bar{X}^K)_i \right) \right\} = 0
\end{align*}
and, thus, that
\begin{align*}
\lim_{K \rightarrow \infty} \left\{ X^{K+1}(j) \left( (C\bar{X}^K)_{\max} - (C\bar{X}^K)_j \right) \right\}  = 0.
\end{align*}
Let $\{ \bar{X}^{K_{\ell}}\}$ be a convergent subsequence. Then
\begin{align*}
0 = \lim_{\ell \rightarrow \infty} \left\{ X^{K_{\ell}+1}(j) \left( (C\bar{X}^{K_{\ell}})_{\max} - (C\bar{X}^{K_{\ell}})_j \right) \right\} &= \liminf\limits_{\ell \rightarrow \infty} \left\{ X^{K_{\ell}+1}(j) \right\} \lim_{\ell \rightarrow \infty} \left\{ (C\bar{X}^{K_{\ell}})_{\max} - (C\bar{X}^{K_{\ell}})_j \right\}\\
  &\geq \liminf\limits_{K \rightarrow \infty} \left\{ X^{K+1}(j) \right\} \lim_{\ell \rightarrow \infty} \left\{ (C\bar{X}^{K_{\ell}})_{\max} - (C\bar{X}^{K_{\ell}})_j \right\}
\end{align*}
and since $\{ \bar{X}^{K_{\ell}}\}$ is arbitrary, we obtain
\begin{align*}
\liminf\limits_{K \rightarrow \infty} \left\{ X^{K+1}(j) \right\} \limsup\limits_{K \rightarrow \infty} \left\{ (C\bar{X}^{K_{\ell}})_{\max} - (C\bar{X}^{K_{\ell}})_j \right\} = 0,
\end{align*}
which implies the lemma.
\end{proof}

\begin{proof}[Second proof of Theorem \ref{asymptotic_convergence_theorem}]
Let $\bar{X}$ be a limit point of the sequence $\{ \bar{X}^K \}$, let $\{\bar{X}^{K_j}\}$ be a subsequence that converges to $\bar{X}$. Lemmas \ref{kanoni} and \ref{probability_mass_convergence_7} imply that for all $i \in \mathcal{C}(\bar{X})$, $(C\bar{X})_i = (C\bar{X})_{\max}$. Therefore, $\bar{X}$ is a symmetric equilibrium strategy. Since $\bar{X}$ is arbitrary, this completes the proof.
\end{proof}

\fi

\section{Weighted empirical averages are approximate equilibria}
\label{equilibrium_approximation_VLR}

\if(0)

\begin{lemma}
Let $C \in \mathbb{C}^+$ and $X^k \equiv T^k(X^0)$, where $X^0 \in \mathbb{\mathring{X}}(C)$. Assume the learning rate $\alpha_k > 0$ is chosen from round to round such that
\begin{align*}
\lim_{k \rightarrow \infty} \alpha_k = 0 \mbox{ and } \sum_{k = 0}^{\infty} \alpha_k = +\infty. 
\end{align*}
Then
\begin{align*}
\liminf\limits_{K \rightarrow \infty} \left\{ \frac{1}{A_K} \sum_{k=0}^K \alpha_k (\bar{X}^K - X^k) \cdot CX^k \right\} \leq 0.
\end{align*}
\end{lemma}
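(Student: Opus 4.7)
The plan is to apply the upper relative-entropy bound of Lemma~\ref{convexity_lemma_not_normalized} with an arbitrary comparator $Y$, telescope over $k = 0,\ldots,K$, and then substitute $Y = \bar{X}^K$ \emph{after} the summation. The key observation that makes this substitution permissible is that the resulting right-hand side turns out to be independent of $Y$, so it continues to hold when $Y$ depends on $K$.

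Concretely, I would start from
\begin{align*}
RE(Y, X^{k+1}) \leq RE(Y, X^k) - \alpha_k (Y - X^k) \cdot CX^k + \alpha_k (\exp\{m\alpha_k\} - 1)\bar{C},
\end{align*}
sum over $k = 0,\ldots,K$, divide by $A_K$, drop the nonnegative term $RE(Y, X^{K+1})$ from the left, and rearrange to get
\begin{align*}
\frac{1}{A_K}\sum_{k=0}^K \alpha_k (Y - X^k) \cdot CX^k \leq \frac{RE(Y, X^0)}{A_K} + \frac{1}{A_K}\sum_{k=0}^K \alpha_k (\exp\{m\alpha_k\} - 1)\bar{C}.
\end{align*}
The essential point is that, since $X^0 \in \mathbb{\mathring{X}}(C)$, the bound $RE(Y, X^0) \leq \log(1/X^0(\min))$ holds \emph{uniformly} for $Y \in \mathbb{X}(C)$ (using $\sum_i Y(i) \log Y(i) \leq 0$ and $-\sum_i Y(i) \log X^0(i) \leq \log(1/X^0(\min))$). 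Hence the entire right-hand side is a sequence depending only on $X^0$, $C$, and $\{\alpha_k\}$, and not on $Y$, which is what legitimizes substituting $Y = \bar{X}^K$.

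With that substitution in hand, taking $\limsup$ as $K \to \infty$ finishes the argument: the first term on the right vanishes since $A_K \to \infty$ by $\sum_k \alpha_k = +\infty$, and the second vanishes by Lemma~\ref{SC_lemma} (whose hypotheses are exactly $\alpha_k \to 0$ and $\sum_k \alpha_k = +\infty$, matching the present ones). Since $\liminf \leq \limsup$, the claim follows. I do not anticipate a significant obstacle; the only subtlety is recognizing that the $Y$-independence of the bound is what permits a $K$-dependent comparator, a substitution that is not directly available from the statement of Lemma~\ref{approximation_bound_lemma_weighted_average}.
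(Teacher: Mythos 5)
Your proof is correct, and it is worth noting that the paper does not actually supply a self-contained argument for this statement: its ``proof'' is a one-line pointer to a lemma that does not appear in the compiled text, while the paper's only full treatment of a $K$-dependent comparator $\bar{X}^K$ is Theorem \ref{payoff_of_average_strategy_theorem}, which proves the stronger equality $\lim = 0$ but under the additional summability hypothesis $\sum_k \alpha_k(\exp\{\alpha_k\}-1) < \infty$ and by a contradiction/subsequence argument resting on Lemmas \ref{apotophos}, \ref{probability_mass_convergence_2}, and \ref{probability_mass_convergence_7}. Your route is more elementary: you reuse exactly the machinery of Lemmas \ref{convexity_lemma_not_normalized}, \ref{approximation_bound_lemma_weighted_average}, and \ref{main_convergence_lemma} (telescoping the relative-entropy bound and killing the error terms via Lemma \ref{SC_lemma}), and your key observation --- that after discarding $RE(Y,X^{K+1}) \geq 0$ the right-hand side can be bounded \emph{uniformly} in $Y$ by $\frac{1}{A_K}\log(1/X^0(\min)) + \frac{\bar{C}}{A_K}\sum_{k=0}^K \alpha_k(\exp\{m\alpha_k\}-1)$, so that substituting the $K$-dependent choice $Y=\bar{X}^K$ is legitimate --- is precisely what the paper never states explicitly (its own uniform bound via the Sason--Verd\'u inequality and the simplex diameter $\Delta$ would serve the same purpose). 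What your approach buys: it needs only $\alpha_k \to 0$ and $\sum_k \alpha_k = +\infty$, and it in fact delivers the stronger conclusion $\limsup \leq 0$; what it does not give (and what the paper's heavier subsequence argument is for) is the matching lower bound needed for the full limit in Theorem \ref{payoff_of_average_strategy_theorem}. Two trivial housekeeping points: Lemma \ref{SC_lemma} formally assumes $m>0$, but the case $m=0$ forces $C\equiv 0$ (as $C \in \mathbb{C}^+$) and the claim is then immediate; and the iterates $X^k$ stay interior for all finite $k$, so applying Lemma \ref{convexity_lemma_not_normalized} at each step is valid, as you implicitly use.
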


\begin{proof}
Similar to the proof of Lemma \ref{liminf_weighted_average_lemma}.
\end{proof}

\fi

\begin{theorem}
\label{payoff_of_average_strategy_theorem}
Under the assumptions of Theorem \ref{asymptotic_convergence_theorem},
\begin{align*}
\lim_{K \rightarrow \infty} \left\{ \frac{1}{A_K} \sum_{k=0}^K \alpha_k (\bar{X}^K - X^k) \cdot CX^k \right\} = 0.
\end{align*}
\end{theorem}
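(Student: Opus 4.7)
The key observation is that
\begin{align*}
\frac{1}{A_K} \sum_{k=0}^K \alpha_k \bar{X}^K \cdot CX^k = \bar{X}^K \cdot C \left( \frac{1}{A_K} \sum_{k=0}^K \alpha_k X^k \right) = \bar{X}^K \cdot C\bar{X}^K,
\end{align*}
so the quantity in question is exactly $\bar{X}^K \cdot C\bar{X}^K - \frac{1}{A_K} \sum_{k=0}^K \alpha_k X^k \cdot CX^k$. My plan is to sandwich this between bounds obtained from the already-established machinery.

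\textbf{Upper bound on $\limsup$.} Since $\bar{X}^K \cdot C\bar{X}^K \leq (C\bar{X}^K)_{\max}$, I get immediately that
\begin{align*}
\bar{X}^K \cdot C\bar{X}^K - \frac{1}{A_K} \sum_{k=0}^K \alpha_k X^k \cdot CX^k \leq (C\bar{X}^K)_{\max} - \frac{1}{A_K} \sum_{k=0}^K \alpha_k X^k \cdot CX^k,
\end{align*}
and Lemma \ref{average_payoff_lemma} says the right-hand side vanishes in the limit. Hence the $\limsup$ of the target quantity is at most $0$.

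\textbf{Matching lower bound via subsequences.} The harder direction is showing $\liminf \geq 0$, since a priori $\bar{X}^K \cdot C\bar{X}^K$ could be strictly less than $(C\bar{X}^K)_{\max}$ along a subsequence. I will argue by contradiction: if the $\liminf$ were strictly negative, there would exist $\varepsilon > 0$ and a subsequence $K_j$ along which $\bar{X}^{K_j} \cdot C\bar{X}^{K_j} - \frac{1}{A_{K_j}} \sum_k \alpha_k X^k \cdot CX^k \leq -\varepsilon$. By compactness of the simplex, I extract a further subsequence $K_{j_\ell}$ along which $\bar{X}^{K_{j_\ell}} \to \bar{X}$. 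Theorem \ref{asymptotic_convergence_theorem} guarantees $\bar{X}$ is a symmetric equilibrium strategy, so in particular $\bar{X} \cdot C\bar{X} = (C\bar{X})_{\max}$. Using continuity of $X \mapsto X \cdot CX$ and of $X \mapsto (CX)_{\max}$, the left term satisfies $\bar{X}^{K_{j_\ell}} \cdot C\bar{X}^{K_{j_\ell}} \to \bar{X}\cdot C\bar{X}$ and $(C\bar{X}^{K_{j_\ell}})_{\max} \to (C\bar{X})_{\max} = \bar{X} \cdot C\bar{X}$. Combining the latter with Lemma \ref{average_payoff_lemma} applied along $K_{j_\ell}$ yields $\frac{1}{A_{K_{j_\ell}}} \sum_k \alpha_k X^k \cdot CX^k \to \bar{X}\cdot C\bar{X}$ as well. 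The two limits agree, contradicting the $-\varepsilon$ gap.

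\textbf{Main obstacle.} The delicate point is the lower-bound step, because one cannot substitute $Y = \bar{X}^K$ directly into Lemma \ref{main_convergence_lemma} (the bound there is for each \emph{fixed} $Y$). The subsequence argument is what makes the equilibrium conclusion of Theorem \ref{asymptotic_convergence_theorem} do the work: it is precisely the equilibrium property $\bar{X}\cdot C\bar{X} = (C\bar{X})_{\max}$ at every limit point that lets me close the gap between $\bar{X}^K \cdot C\bar{X}^K$ and $(C\bar{X}^K)_{\max}$ in the limit. Combining the $\limsup \leq 0$ and $\liminf \geq 0$ bounds gives the claimed limit.
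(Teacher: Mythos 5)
Your proposal is correct, and it reaches the conclusion by a somewhat different route than the paper. You split the claim into $\limsup \leq 0$ and $\liminf \geq 0$; for the first you observe $\bar{X}^K \cdot C\bar{X}^K \leq (C\bar{X}^K)_{\max}$ and invoke Lemma \ref{average_payoff_lemma} directly, with no subsequence argument at all, and for the second you extract a convergent subsequence and use the equilibrium property of its limit guaranteed by Theorem \ref{asymptotic_convergence_theorem}, namely $\bar{X} \cdot C\bar{X} = (C\bar{X})_{\max}$, together with continuity of $X \mapsto X \cdot CX$ and $X \mapsto (CX)_{\max}$ and Lemma \ref{average_payoff_lemma} along the subsequence. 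The paper instead handles both directions by the same contradiction template, but the engine is Lemma \ref{apotophos} (for every limit point $\bar{X}$, $\frac{1}{A_K}\sum_{k}\alpha_k(\bar{X}-X^k)\cdot CX^k \to 0$), which is itself proved via Lemma \ref{phos} and the probability-mass Lemmas \ref{probability_mass_convergence_2} and \ref{probability_mass_convergence_7}; your argument bypasses Lemmas \ref{phos} and \ref{apotophos} entirely and treats Theorem \ref{asymptotic_convergence_theorem} and Lemma \ref{average_payoff_lemma} as black boxes. What your route buys is economy: given the results already on the table, it is shorter and the $\limsup$ half is completely elementary; also, you correctly identify and avoid the trap of plugging the $K$-dependent $Y=\bar{X}^K$ into Lemma \ref{main_convergence_lemma}. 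What the paper's route buys is the auxiliary Lemma \ref{apotophos} itself, which is reused later (e.g., in Lemma \ref{limit_points_are_best_responses_to_each_other}), so the extra machinery is not wasted there. There is no circularity in your use of Theorem \ref{asymptotic_convergence_theorem} or Lemma \ref{average_payoff_lemma}, since neither depends on the present theorem.
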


In the next pair of lemmas, we assume that the assumptions of Theorem \ref{asymptotic_convergence_theorem} hold.

\begin{lemma}
\label{phos}
Let $Y \in \mathbb{X}(C)$. If
\begin{align}
\liminf\limits_{K \rightarrow \infty} \left\{ \frac{1}{A_K} \sum_{k=0}^K \alpha_k (Y - X^k) \cdot CX^k \right\} < 0,\label{foor}
\end{align}
then
\begin{align*}
\exists i \in \mathcal{C}(Y) : \lim_{K \rightarrow \infty} \bar{X}^K(i) = 0.
\end{align*}
\end{lemma}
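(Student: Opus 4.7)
The plan is to decompose the weighted average along the support of $Y$ and then isolate a single pure strategy that contributes negatively. Since $Y = \sum_{i \in \mathcal{C}(Y)} Y(i) E_i$, linearity of the inner product gives, for every $K$,
\begin{align*}
\frac{1}{A_K} \sum_{k=0}^K \alpha_k (Y - X^k) \cdot CX^k = \sum_{i \in \mathcal{C}(Y)} Y(i) \cdot a_i(K), \quad a_i(K) \doteq \frac{1}{A_K} \sum_{k=0}^K \alpha_k (E_i - X^k) \cdot CX^k.
\end{align*}

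Next I would pick a subsequence $\{ K_j \}$ along which the left-hand side approaches its $\liminf$, which by assumption \eqref{foor} is strictly negative. Each scalar sequence $\{ a_i(K_j) \}$ is bounded (from below by $-m$ where $m = \max_{ij} C_{ij}$ because $(E_i - X^k) \cdot CX^k \geq -m$, and $\limsup_{K} a_i(K) \leq 0$ by Lemma \ref{main_convergence_lemma}), so by the Bolzano--Weierstrass theorem we may pass to a further subsequence $\{ K_{j_\ell} \}$ along which each $a_i(K_{j_\ell})$ converges simultaneously to some finite limit $a_i^* \leq 0$. Taking $\ell \to \infty$ in the decomposition yields $\sum_{i \in \mathcal{C}(Y)} Y(i) a_i^* < 0$, so since $Y(i) \geq 0$ for each $i$, at least one index $i \in \mathcal{C}(Y)$ must satisfy $a_i^* < 0$.

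For that index, $\liminf_{K \to \infty} a_i(K) \leq a_i^* < 0$, which is precisely the hypothesis of Lemma \ref{probability_mass_convergence_2}. Applying that lemma gives $\liminf_{K \to \infty} X^K(i) = 0$, and then Lemma \ref{probability_mass_convergence_7} upgrades this to $\lim_{K \to \infty} \bar{X}^K(i) = 0$, which is the desired conclusion.

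The main obstacle is the simultaneous subsequence selection: $\liminf$ is not additive in general, so to convert strict negativity of the weighted sum into strict negativity of a single summand I need a single subsequence along which \emph{every} $a_i(K_j)$ converges. The compactness argument above handles this cleanly because $\mathcal{C}(Y)$ is finite and each $a_i(K)$ is a bounded real sequence; everything else is a direct appeal to the two probability-mass convergence lemmas already at hand.
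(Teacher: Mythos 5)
Your proof is correct and follows essentially the same route as the paper: decompose $Y$ over its carrier, use Lemma \ref{main_convergence_lemma} to control the per-coordinate terms, isolate one $i \in \mathcal{C}(Y)$ whose weighted-average payoff gap has strictly negative $\liminf$, and finish with Lemmas \ref{probability_mass_convergence_2} and \ref{probability_mass_convergence_7}. The only difference is cosmetic: where you extract a common convergent subsequence via Bolzano--Weierstrass, the paper argues by contraposition --- if every $i \in \mathcal{C}(Y)$ had $\liminf$ equal to zero, Lemma \ref{main_convergence_lemma} would force each coordinate sequence (and hence the $Y$-weighted sum) to converge to zero, contradicting \eqref{foor} --- which avoids the simultaneous subsequence selection you were concerned about.
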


\begin{proof}
\eqref{foor} implies that
\begin{align*}
\exists i \in \mathcal{C}(Y) : \liminf\limits_{K \rightarrow \infty} \left\{ \frac{1}{A_K} \sum_{k=0}^K \alpha_k (E_i - X^k) \cdot CX^k \right\} < 0
\end{align*}
for if
\begin{align*}
\forall i \in \mathcal{C}(Y) : \liminf\limits_{K \rightarrow \infty} \left\{ \frac{1}{A_K} \sum_{k=0}^K \alpha_k (E_i - X^k) \cdot CX^k \right\} = 0,
\end{align*}
then Lemma \ref{main_convergence_lemma} implies that
\begin{align*}
\forall i \in \mathcal{C}(Y) : \lim_{K \rightarrow \infty} \left\{ \frac{1}{A_K} \sum_{k=0}^K \alpha_k (E_i - X^k) \cdot CX^k \right\} = 0,
\end{align*}
which further implies that
\begin{align*}
\lim_{K \rightarrow \infty} \left\{ \frac{1}{A_K} \sum_{k=0}^K \alpha_k (Y - X^k) \cdot CX^k \right\} = 0.
\end{align*}
Invoking Lemmas \ref{probability_mass_convergence_2} and \ref{probability_mass_convergence_7} completes the proof.
\end{proof}

\begin{lemma}
\label{apotophos}
If $\bar{X}$ is a limit point of $\left\{ \bar{X}^K \right\}$, then
\begin{align*}
\lim_{K \rightarrow \infty} \left\{ \frac{1}{A_K} \sum_{k=0}^K \alpha_k (\bar{X} - X^k) \cdot CX^k \right\} = 0.
\end{align*}
\end{lemma}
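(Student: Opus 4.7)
The plan is to sandwich the expression between a $\limsup \leq 0$ estimate coming from Lemma \ref{main_convergence_lemma} and a $\liminf \geq 0$ estimate obtained by a contradiction argument based on Lemma \ref{phos} together with the defining property of $\bar{X}$ being a limit point of $\{\bar{X}^K\}$.

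First I would apply Lemma \ref{main_convergence_lemma} directly with the choice $Y = \bar{X}$ (which lies in $\mathbb{X}(C)$ since the simplex is closed and $\bar{X}$ is a limit of simplex points). This immediately yields
\begin{align*}
\limsup\limits_{K \rightarrow \infty} \left\{ \frac{1}{A_K} \sum_{k=0}^K \alpha_k (\bar{X} - X^k) \cdot CX^k \right\} \leq 0.
\end{align*}

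Next I would establish the matching lower bound by contradiction. Suppose
\begin{align*}
\liminf\limits_{K \rightarrow \infty} \left\{ \frac{1}{A_K} \sum_{k=0}^K \alpha_k (\bar{X} - X^k) \cdot CX^k \right\} < 0.
\end{align*}
Then Lemma \ref{phos}, applied with $Y = \bar{X}$, produces some index $i \in \mathcal{C}(\bar{X})$ with $\lim_{K \to \infty} \bar{X}^K(i) = 0$. On the other hand, because $\bar{X}$ is a limit point of the sequence $\{\bar{X}^K\}$, there is a subsequence $\{\bar{X}^{K_j}\}_{j=0}^{\infty}$ converging coordinate-wise to $\bar{X}$, whence $\bar{X}^{K_j}(i) \to \bar{X}(i)$. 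But $i \in \mathcal{C}(\bar{X})$ means $\bar{X}(i) > 0$, which is incompatible with $\lim_{K \to \infty} \bar{X}^K(i) = 0$. This contradiction forces
\begin{align*}
\liminf\limits_{K \rightarrow \infty} \left\{ \frac{1}{A_K} \sum_{k=0}^K \alpha_k (\bar{X} - X^k) \cdot CX^k \right\} \geq 0,
\end{align*}
and combining with the $\limsup$ bound above establishes that the limit exists and equals $0$.

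The only subtle step is the contradiction: one must correctly identify that Lemma \ref{phos} produces the vanishing of $\bar{X}^K(i)$ for an index $i$ that necessarily has strictly positive mass in $\bar{X}$ (precisely because Lemma \ref{phos} returns an $i$ in the carrier of the test strategy, and our test strategy is $\bar{X}$ itself). That mismatch between a coordinate being forced to vanish in the limit and simultaneously having a positive value at a limit point of a subsequence is what closes the argument; the rest is a straightforward $\limsup$/$\liminf$ sandwich.
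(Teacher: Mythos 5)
Your argument is correct and matches the paper's own proof: both obtain the upper bound from Lemma \ref{main_convergence_lemma} with $Y = \bar{X}$ and the lower bound by applying Lemma \ref{phos} to $\bar{X}$ and noting that the vanishing of $\bar{X}^K(i)$ for some $i \in \mathcal{C}(\bar{X})$ contradicts $\bar{X}$ being a limit point. Your write-up merely spells out the subsequence step of that contradiction a bit more explicitly than the paper does.
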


\begin{proof}
Lemma \ref{phos} implies that if
\begin{align*}
\liminf\limits_{K \rightarrow \infty} \left\{ \frac{1}{A_K} \sum_{k=0}^K \alpha_k (\bar{X} - X^k) \cdot CX^k \right\} < 0,
\end{align*}
then
\begin{align*}
\exists i \in \mathcal{C}(\bar{X}) : \lim_{K \rightarrow \infty} \bar{X}^K(i) = 0,
\end{align*}
which contradicts the assumption $\bar{X}$ is a limit point. Therefore,
\begin{align*}
\liminf\limits_{K \rightarrow \infty} \left\{ \frac{1}{A_K} \sum_{k=0}^K \alpha_k (\bar{X} - X^k) \cdot CX^k \right\} \geq 0,
\end{align*}
and combining with Lemma \ref{main_convergence_lemma} completes the proof.
\end{proof}

\begin{proof}[Proof of Theorem \ref{payoff_of_average_strategy_theorem}]
Let us assume for the sake of contradiction that
\begin{align*}
\limsup\limits_{K \rightarrow \infty} \left\{ \frac{1}{A_K} \sum_{k=0}^K \alpha_k (\bar{X}^K - X^k) \cdot CX^k \right\} > 0.
\end{align*}
Let
\begin{align*}
\left\{ \frac{1}{A_{K_j}} \sum_{k=0}^{K_j} \alpha_k (\bar{X}^{K_j} - X^k) \cdot CX^k \right\}_{j=0}^{\infty}
\end{align*}
be a subsequence such that
\begin{align}
\lim_{j \rightarrow \infty} \left\{ \frac{1}{A_{K_j}} \sum_{k=0}^{K_j} \alpha_k (\bar{X}^{K_j} - X^k) \cdot CX^k \right\} > 0.\label{flflflx}
\end{align}
Furthermore, let $\left\{ \bar{X}^{K_{j_{\ell}}} \right\}_{\ell = 0}^{\infty}$ be a convergent subsequence. \eqref{flflflx} implies, on one hand, that
\begin{align}
\lim_{\ell \rightarrow \infty} \left\{ \frac{1}{A_{K_{j_{\ell}}}} \sum_{k=0}^{K_{j_{\ell}}} \alpha_k (\bar{X}^{K_{j_{\ell}}} - X^k) \cdot CX^k \right\} > 0.\label{flflfl2x}
\end{align}
On the other hand, let $\bar{X}$ be the limit of $\left\{ \bar{X}^{K_{j_{\ell}}} \right\}_{\ell = 0}^{\infty}$. Then Lemma \ref{apotophos} implies that
\begin{align*}
\lim_{\ell \rightarrow \infty} \left\{ \frac{1}{A_{K_{j_{\ell}}}} \sum_{k=0}^{K_{j_{\ell}}} \alpha_k (\bar{X} - X^k) \cdot CX^k \right\} = 0
\end{align*}
that contradicts \eqref{flflfl2x}. Therefore, 
\begin{align}
\limsup\limits_{K \rightarrow \infty} \left\{ \frac{1}{A_K} \sum_{k=0}^K \alpha_k (\bar{X}^K - X^k) \cdot CX^k \right\} \leq 0.\label{loooovee}
\end{align}
Let us now assume again for the sake of contradiction that
\begin{align*}
\liminf\limits_{K \rightarrow \infty} \left\{ \frac{1}{A_K} \sum_{k=0}^K \alpha_k (\bar{X}^K - X^k) \cdot CX^k \right\} < 0.
\end{align*}
The steps are analogous to the previous case: Let
\begin{align*}
\left\{ \frac{1}{A_{K_j}} \sum_{k=0}^{K_j} \alpha_k (\bar{X}^{K_j} - X^k) \cdot CX^k \right\}_{j=0}^{\infty}
\end{align*}
be a subsequence such that
\begin{align}
\lim_{j \rightarrow \infty} \left\{ \frac{1}{A_{K_j}} \sum_{k=0}^{K_j} \alpha_k (\bar{X}^{K_j} - X^k) \cdot CX^k \right\} < 0.\label{flflfl}
\end{align}
Furthermore, let $\left\{ \bar{X}^{K_{j_{\ell}}} \right\}_{\ell = 0}^{\infty}$ be a convergent subsequence. \eqref{flflfl} implies, on one hand, that
\begin{align}
\lim_{\ell \rightarrow \infty} \left\{ \frac{1}{A_{K_{j_{\ell}}}} \sum_{k=0}^{K_{j_{\ell}}} \alpha_k (\bar{X}^{K_{j_{\ell}}} - X^k) \cdot CX^k \right\} < 0.\label{flflfl2}
\end{align}
On the other hand, let $\bar{X}$ be the limit of $\left\{ \bar{X}^{K_{j_{\ell}}} \right\}_{\ell = 0}^{\infty}$. Then Lemma \ref{apotophos} implies that
\begin{align*}
\lim_{\ell \rightarrow \infty} \left\{ \frac{1}{A_{K_{j_{\ell}}}} \sum_{k=0}^{K_{j_{\ell}}} \alpha_k (\bar{X} - X^k) \cdot CX^k \right\} = 0
\end{align*}
that contradicts \eqref{flflfl2}. Therefore, 
\begin{align*}
\liminf\limits_{K \rightarrow \infty} \left\{ \frac{1}{A_K} \sum_{k=0}^K \alpha_k (\bar{X}^K - X^k) \cdot CX^k \right\} \geq 0
\end{align*}
and combining with \eqref{loooovee} completes the proof.
\end{proof}

\begin{corollary}
\label{approximate_equilibria_corollary}
Under the assumptions of Theorem \ref{asymptotic_convergence_theorem}, for all $\epsilon > 0$, there exists $N > 0$ such that for all $K \geq N$, $\bar{X}^K$ is an $\epsilon$-approximate symmetric equilibrium strategy of $C$.
\end{corollary}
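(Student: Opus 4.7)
\medskip

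\noindent\textbf{Proof proposal.} The plan is to show directly that $(C\bar{X}^K)_{\max} - \bar{X}^K \cdot C\bar{X}^K \to 0$ as $K \to \infty$, since by the characterization recorded in the preliminaries this quantity controls the equilibrium approximation error of $\bar{X}^K$: if it is at most $\epsilon$, then $\bar{X}^K$ is an $\epsilon$-approximate symmetric equilibrium strategy. Given that limit, the corollary then follows immediately by choosing $N$ large enough that the quantity above stays below $\epsilon$ for all $K \geq N$.

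The key identity is to rewrite $\bar{X}^K \cdot C\bar{X}^K$ using the definition of the weighted empirical average on the \emph{left} factor only:
\begin{align*}
\bar{X}^K \cdot C\bar{X}^K \;=\; \frac{1}{A_K}\sum_{k=0}^K \alpha_k X^k \cdot C\bar{X}^K.
\end{align*}
Subtracting this expression from $(C\bar{X}^K)_{\max}$ and then adding and subtracting $\frac{1}{A_K}\sum_k \alpha_k X^k \cdot CX^k$ yields the decomposition
\begin{align*}
(C\bar{X}^K)_{\max} - \bar{X}^K \cdot C\bar{X}^K \;=\; \Bigl[(C\bar{X}^K)_{\max} - \frac{1}{A_K}\sum_{k=0}^K \alpha_k X^k \cdot CX^k\Bigr] \;-\; \frac{1}{A_K}\sum_{k=0}^K \alpha_k (\bar{X}^K - X^k) \cdot CX^k.
\end{align*}

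The first bracket tends to $0$ by Lemma \ref{average_payoff_lemma}, and the second sum tends to $0$ by Theorem \ref{payoff_of_average_strategy_theorem}. Combining these two limits gives $\lim_{K\to\infty}\bigl[(C\bar{X}^K)_{\max} - \bar{X}^K \cdot C\bar{X}^K\bigr] = 0$, which is exactly what we need to conclude the corollary: given $\epsilon > 0$, pick $N$ so that for all $K \geq N$ the quantity is $\leq \epsilon$, and then $\bar{X}^K$ is an $\epsilon$-approximate symmetric equilibrium.

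There is no serious obstacle here beyond choosing the right algebraic decomposition. The one point requiring care is that $C$ is not assumed to be symmetric, so the factor of $\bar{X}^K$ that we expand on the left of $C$ cannot simply be swapped with the one on the right; writing $\bar{X}^K \cdot C\bar{X}^K = \frac{1}{A_K}\sum_k \alpha_k X^k \cdot C\bar{X}^K$ (rather than $\frac{1}{A_K}\sum_k \alpha_k \bar{X}^K \cdot CX^k$) ensures that the residual term is exactly $\frac{1}{A_K}\sum_k \alpha_k (\bar{X}^K - X^k) \cdot CX^k$, matching the hypothesis of Theorem \ref{payoff_of_average_strategy_theorem} verbatim.
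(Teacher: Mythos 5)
Your overall route is the same as the paper's: the paper proves Corollary \ref{approximate_equilibria_corollary} by combining exactly Lemma \ref{average_payoff_lemma} and Theorem \ref{payoff_of_average_strategy_theorem} (with an $\epsilon/2$--$\epsilon/2$ split) to get $(C\bar{X}^K)_{\max} - \bar{X}^K \cdot C\bar{X}^K < \epsilon$ for all large $K$, and your displayed decomposition is a correct identity encoding the same cancellation. However, the derivation you give for it is backwards, and it is precisely the point you flag as ``requiring care.'' The quantity in Theorem \ref{payoff_of_average_strategy_theorem} has $CX^k$ in the right slot, so the expansion that matches it verbatim is the one that keeps $\bar{X}^K$ on the left and averages the \emph{right} factor,
\begin{align*}
\bar{X}^K \cdot C\bar{X}^K \;=\; \frac{1}{A_K}\sum_{k=0}^K \alpha_k\, \bar{X}^K \cdot CX^k,
\end{align*}
which gives $\bar{X}^K \cdot C\bar{X}^K - \frac{1}{A_K}\sum_{k=0}^K \alpha_k X^k \cdot CX^k = \frac{1}{A_K}\sum_{k=0}^K \alpha_k (\bar{X}^K - X^k)\cdot CX^k$, i.e.\ exactly the theorem's quantity. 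The left-factor expansion you prescribe, $\frac{1}{A_K}\sum_{k=0}^K \alpha_k X^k \cdot C\bar{X}^K$, instead produces the residual $\frac{1}{A_K}\sum_{k=0}^K \alpha_k\, X^k \cdot C(\bar{X}^K - X^k)$, which coincides with the theorem's quantity only when $C = C^T$ --- exactly the asymmetry you were trying to guard against. So your caution about $C$ not being symmetric is well placed, but the resolution is the opposite of the one you wrote; with that one-line correction the argument is sound and is essentially the paper's proof.
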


\begin{proof}
Lemma \ref{average_payoff_lemma} implies that
\begin{align}
\forall \theta' > 0 \mbox{ } \exists N'(\theta') > 0 \mbox{ } \forall K \geq N'(\theta') : (C\bar{X}^K)_{\max} - \frac{1}{A_K} \sum_{k=0}^{K} \alpha_k X^k \cdot CX^k < \theta'.\label{uno}
\end{align}
Theorem \ref{payoff_of_average_strategy_theorem} implies that
\begin{align}
\forall \theta'' > 0 \mbox{ } \exists N''(\theta'') > 0 \mbox{ } \forall K \geq N''(\theta'') : \bar{X}^K \cdot C\bar{X}^K - \frac{1}{A_K} \sum_{k=0}^{K} \alpha_k X^k \cdot CX^k > - \theta''.\label{due}
\end{align}
Letting $\theta' = \theta'' = \epsilon/2$ and $N(\epsilon) = \max\{N'(\epsilon/2), N''(\epsilon/2)\}$ in \eqref{uno} and \eqref{due} we obtain that
\begin{align*}
\forall K \geq N(\epsilon) : (C\bar{X}^K)_{\max} - \bar{X}^K \cdot C\bar{X}^K < \epsilon
\end{align*}
and, therefore, that
\begin{align*}
\forall \epsilon > 0 \mbox{ } \exists N(\epsilon) > 0 \mbox{ } \forall K \geq N(\epsilon) : (C\bar{X}^K)_{\max} - \bar{X}^K \cdot C\bar{X}^K < \epsilon
\end{align*}
as claimed.
\end{proof}

\if(0)

\begin{corollary}
Under the assumptions of Theorem \ref{payoff_of_average_strategy_theorem}, if an isolated equilibrium strategy, say $X^*$, is a limit point of the sequence $\left\{ \bar{X}^K \right\}_{K=0}^{\infty}$, then this sequence converges to $X^*$.
\end{corollary}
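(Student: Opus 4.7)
The plan is to combine connectedness of the limit set with the isolation hypothesis on $X^*$. Let $\mathcal{L}$ denote the set of accumulation points of $\left\{ \bar{X}^K \right\}_{K=0}^{\infty}$ in the compact simplex $\mathbb{X}(C)$. By hypothesis $X^* \in \mathcal{L}$, and by Theorem \ref{asymptotic_convergence_theorem} we have $\mathcal{L} \subseteq NE^+(C)$.

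First I would invoke Lemma \ref{consecutive_empirical_averages_lemma} (the successive-distance lemma advertised in the introduction), which gives $\| \bar{X}^{K+1} - \bar{X}^K \| \to 0$, together with the standard topological result (Theorem \ref{limitpoint_connectedness_theorem}) that a bounded sequence in a metric space whose consecutive differences vanish has a compact connected limit set. This shows $\mathcal{L}$ is connected.

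Next I would exploit the isolation hypothesis: choose an open neighborhood $U$ of $X^*$ with $U \cap NE^+(C) = \{X^*\}$. Then $\mathcal{L} \cap U \subseteq NE^+(C) \cap U = \{X^*\}$, so the decomposition
\begin{align*}
\mathcal{L} = \{X^*\} \cup (\mathcal{L} \setminus U)
\end{align*}
is a disjoint union of relatively closed (hence, since each is the complement of the other in $\mathcal{L}$, relatively open) subsets of $\mathcal{L}$. Because $\mathcal{L}$ is connected and $X^* \in \mathcal{L}$, the only way to avoid a nontrivial separation is $\mathcal{L} \setminus U = \emptyset$, giving $\mathcal{L} = \{X^*\}$.

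Finally, $\left\{ \bar{X}^K \right\}$ is a sequence in the compact set $\mathbb{X}(C)$ with unique accumulation point $X^*$: if some subsequence failed to converge to $X^*$, by Bolzano–Weierstrass it would contain a further subsequence converging to a point distinct from $X^*$, producing a second accumulation point and contradicting $\mathcal{L} = \{X^*\}$. Hence $\bar{X}^K \to X^*$. The main obstacle is establishing the two auxiliary inputs (vanishing of $\| \bar{X}^{K+1} - \bar{X}^K \|$ and connectedness of the limit set) cleanly; once these are cited from the forward references already announced in the introduction, the remainder of the argument is purely topological and routine.
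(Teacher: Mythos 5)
Your proposal is correct and takes essentially the route the paper itself takes: although at the point where this statement sits the paper leaves it without a proof, the same claim is later obtained as a consequence of Lemma \ref{consecutive_empirical_averages_lemma} and Theorem \ref{limitpoint_connectedness_theorem} (connectedness of the limit set, all of whose points are equilibria by Theorem \ref{asymptotic_convergence_theorem}), which is exactly your argument. Your isolating-neighborhood separation and the final unique-accumulation-point step correctly supply the topological details the paper treats as ``straightforward.''
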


\begin{proof}
$...$ DON'T KNOW HOW TO PROVE THIS. $...$
\end{proof}

\fi

\section{On the structure of the limit set of weighted empirical averages}
\label{structure_of_limit_set_section}

In this section, we assume that the assumptions of Theorem \ref{asymptotic_convergence_theorem} hold.

\subsection{Some preliminary observations}

\begin{corollary}
\label{elementary_convergence_corollary}
If the sequence $\left\{ \bar{X}^K \right\}$ has a unique limit point, then it converges.
\end{corollary}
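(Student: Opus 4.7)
The plan is to invoke the standard topological fact that a sequence in a compact metric space converges if and only if it has a unique accumulation point. Since $\left\{ \bar{X}^K \right\}$ lies in the probability simplex $\mathbb{X}(C) \subset \mathbb{R}^n$, which is compact, the hypothesis of the corollary puts us precisely in that situation.

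More concretely, I would argue by contradiction. Suppose the unique limit point is $X^*$ but $\bar{X}^K \not\to X^*$. Then there exists $\epsilon > 0$ and a subsequence $\{\bar{X}^{K_j}\}_{j=0}^{\infty}$ with $\|\bar{X}^{K_j} - X^*\| \geq \epsilon$ for every $j$. Because $\{\bar{X}^{K_j}\}$ lies in the compact simplex $\mathbb{X}(C)$, the Bolzano–Weierstrass theorem furnishes a further subsequence $\{\bar{X}^{K_{j_\ell}}\}$ that converges to some $Y^* \in \mathbb{X}(C)$. Continuity of the Euclidean norm gives $\|Y^* - X^*\| \geq \epsilon > 0$, so $Y^* \neq X^*$. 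But $Y^*$ is by construction a limit point of the original sequence $\{\bar{X}^K\}$, contradicting the uniqueness assumption. Hence $\bar{X}^K \to X^*$.

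There is no real obstacle here: the argument is purely topological and does not use any property of Hedge, of the learning-rate schedule, or of the game, beyond the fact that the sequence lives in the compact set $\mathbb{X}(C)$. The only thing worth being careful about is that ``limit point'' in the corollary should be read in the sense of subsequential limit (accumulation point of the sequence as a sequence), which is precisely the notion for which the stated equivalence holds; any limit point of $\{\bar{X}^K\}$ in this sense is, by Theorem~\ref{asymptotic_convergence_theorem}, a symmetric equilibrium strategy, and uniqueness of that point is what drives convergence.
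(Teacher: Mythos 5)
Your argument is correct and is essentially the paper's proof: the paper simply cites the standard fact that a sequence in a compact space with a unique limit point converges, and you supply the routine Bolzano--Weierstrass contradiction argument behind that fact. Nothing further is needed.
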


\begin{proof}
If a sequence in a compact space has a unique limit point, then it converges.
\end{proof}

\begin{corollary}
If $C$ has a unique symmetric equilibrium, say $X^*$, then $\left\{ \bar{X}^K \right\}$ converges to $X^*$.
\end{corollary}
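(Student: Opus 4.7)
The plan is to combine Theorem \ref{asymptotic_convergence_theorem} with Corollary \ref{elementary_convergence_corollary} in essentially one step. Since $\{\bar{X}^K\}$ lies in the simplex $\mathbb{X}(C)$, which is a compact subset of $\mathbb{R}^n$, the Bolzano--Weierstrass theorem guarantees that the sequence has at least one limit point. By Theorem \ref{asymptotic_convergence_theorem}, every such limit point must be an element of $NE^+(C)$.

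By hypothesis, $NE^+(C) = \{X^*\}$, so $X^*$ is the only candidate limit point, and in particular it is attained as a limit point (again by compactness). Hence the set of limit points of $\{\bar{X}^K\}$ is the singleton $\{X^*\}$. Invoking Corollary \ref{elementary_convergence_corollary} then yields $\bar{X}^K \to X^*$, as claimed.

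The argument is essentially two lines: the only subtlety is verifying that the hypotheses of Theorem \ref{asymptotic_convergence_theorem} and Corollary \ref{elementary_convergence_corollary} are inherited unchanged (which they are, since the present corollary is stated under the blanket assumptions of Section \ref{structure_of_limit_set_section}). There is no real obstacle here; this corollary is a direct packaging of the preceding two results and requires no new ideas.
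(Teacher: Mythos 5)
Your argument is correct and is exactly the route the paper takes: Theorem \ref{asymptotic_convergence_theorem} forces every limit point into $NE^+(C)=\{X^*\}$, compactness guarantees a limit point exists, and Corollary \ref{elementary_convergence_corollary} then gives convergence. No gaps; your write-up simply makes explicit the Bolzano--Weierstrass step that the paper leaves implicit.
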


\begin{proof}
Simple implication of Theorem \ref{asymptotic_convergence_theorem} and Corollary \ref{elementary_convergence_corollary}.
\end{proof}

\subsection{The limit set of the sequence of empirical averages is connected}

\begin{lemma}
\label{consecutive_empirical_averages_lemma}
We have that
\begin{align*}
\lim_{K \rightarrow \infty} RE(\bar{X}^{K}, \bar{X}^{K+1}) = 0,
\end{align*}
which further implies that
\begin{align*}
\lim_{K \rightarrow \infty} \| \bar{X}^{K+1} - \bar{X}^{K} \| = 0,
\end{align*}
where $\| \cdot \|$ is the Euclidean norm.
\end{lemma}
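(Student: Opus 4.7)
The plan is to exploit the fact that $\bar{X}^{K+1}$ is a convex combination of $\bar{X}^K$ and the single new iterate $X^{K+1}$. Writing $A_{K+1} = A_K + \alpha_{K+1}$ and setting $\lambda_K := \alpha_{K+1}/A_{K+1}$, a direct calculation from the definition of the weighted average gives
\begin{align*}
\bar{X}^{K+1} = (1-\lambda_K)\,\bar{X}^K + \lambda_K\, X^{K+1}.
\end{align*}
The hypothesis $\alpha_k \to 0$ combined with $A_K \to \infty$ forces $\lambda_K \to 0$. This convex-combination identity is the engine of the whole argument.

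From here I would first verify that $RE(\bar{X}^K, \bar{X}^{K+1})$ is well-defined under the relaxed definition given in the paper: for every $i \in \mathcal{C}(\bar{X}^K)$, $\bar{X}^{K+1}(i) \geq (1-\lambda_K)\bar{X}^K(i) > 0$, so $\mathcal{C}(\bar{X}^K) \subseteq \mathcal{C}(\bar{X}^{K+1})$. Then, for each such $i$,
\begin{align*}
\frac{\bar{X}^K(i)}{\bar{X}^{K+1}(i)} \;=\; \frac{\bar{X}^K(i)}{(1-\lambda_K)\,\bar{X}^K(i) + \lambda_K\, X^{K+1}(i)} \;\leq\; \frac{1}{1-\lambda_K},
\end{align*}
since dropping the nonnegative term $\lambda_K X^{K+1}(i)$ only makes the denominator smaller. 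Summing $\bar{X}^K(i)\ln\bigl(\bar{X}^K(i)/\bar{X}^{K+1}(i)\bigr)$ over $i \in \mathcal{C}(\bar{X}^K)$ and using $\sum_i \bar{X}^K(i) = 1$ yields the clean bound
\begin{align*}
RE(\bar{X}^K, \bar{X}^{K+1}) \leq -\ln(1-\lambda_K).
\end{align*}
Since $\lambda_K \to 0$, the right-hand side tends to zero, proving the first assertion.

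For the Euclidean-distance claim I would invoke property \emph{(ii)} of the relative entropy recalled in the paper, namely $RE(P,Q) \geq \|P-Q\|^2$, which immediately gives
\begin{align*}
\|\bar{X}^{K+1} - \bar{X}^K\|^2 \leq RE(\bar{X}^K, \bar{X}^{K+1}) \leq -\ln(1-\lambda_K) \to 0,
\end{align*}
completing the proof. There is no real obstacle here; the only thing to be careful about is the direction of the relative entropy (it has to be $RE(\bar{X}^K, \bar{X}^{K+1})$, not the reverse, since we need the convex combination to appear in the denominator so that the logarithm is uniformly bounded). One could alternatively bypass $RE$ entirely for the norm bound by observing directly that $\|\bar{X}^{K+1} - \bar{X}^K\| = \lambda_K \|X^{K+1} - \bar{X}^K\|$ is bounded by $\lambda_K \cdot \mathrm{diam}(\mathbb{X}(C)) \to 0$, but the $RE$-route has the advantage of yielding both conclusions from a single estimate.
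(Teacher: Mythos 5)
Your proof is correct, and it takes a genuinely different and substantially more elementary route than the paper. The paper starts from the same convex-combination identity $\bar{X}^{K+1} = \frac{\alpha_{K+1}}{A_{K+1}} X^{K+1} + \frac{A_K}{A_{K+1}} \bar{X}^K$, but then bounds $RE(\bar{X}^{K}, \bar{X}^{K+1}) \leq \frac{\alpha_{K+1}}{A_{K+1}} RE(\bar{X}^K, X^{K+1})$ via convexity of the relative entropy in its second argument; since $RE(\bar{X}^K, X^{K+1})$ is not obviously bounded (the iterate $X^{K+1}$ may approach the boundary of the simplex), the paper must then control $\frac{1}{A_K} RE(\bar{X}^K, X^{K+1})$ through inequality \eqref{lalala} with $Y = \bar{X}^K$, which requires Theorem \ref{payoff_of_average_strategy_theorem}, Lemma \ref{SC_lemma}, and the Sason--Verd\'u bound on $RE(\bar{X}^K, X^0)$. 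You sidestep the potentially unbounded term entirely: dropping the nonnegative contribution $\lambda_K X^{K+1}(i)$ from the denominator gives the pointwise likelihood-ratio bound $\bar{X}^K(i)/\bar{X}^{K+1}(i) \leq 1/(1-\lambda_K)$, hence the deterministic estimate $RE(\bar{X}^K, \bar{X}^{K+1}) \leq -\ln(1-\lambda_K)$ with $\lambda_K = \alpha_{K+1}/A_{K+1} \in (0,1)$, and $\lambda_K \to 0$ follows from $\alpha_{K+1} \to 0$ and $A_{K+1} \geq \alpha_0 > 0$. Your argument therefore uses none of the structure of the Hedge dynamics, no positivity assumption on $C$, and not even the summability condition $\sum_k \alpha_k(\exp\{\alpha_k\}-1) < \infty$: it holds for an arbitrary sequence of points in the simplex under the stated step-size schedule, and it yields an explicit rate $O(\lambda_K)$, whereas the paper's route yields no sharper information for this lemma. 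Your handling of the support condition ($\mathcal{C}(\bar{X}^K) \subseteq \mathcal{C}(\bar{X}^{K+1})$, so the relaxed definition of $RE$ applies) and of the final step via the paper's property \emph{(ii)} (or Pinsker, as the paper uses; the constant is immaterial for the limit) is also fine, as is your remark that the Euclidean claim follows even more directly from $\|\bar{X}^{K+1} - \bar{X}^K\| = \lambda_K \|X^{K+1} - \bar{X}^K\|$.
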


\begin{proof}
Straight algebra gives the following relationship between $\bar{X}^{K+1}$ and $\bar{X}^K$:
\begin{align*}
\bar{X}^{K+1} = \frac{\alpha_{K+1}}{A_{K+1}} X^{K+1} + \frac{A_K}{A_{K+1}} \bar{X}^K.
\end{align*}
Using elementary properties of the relative entropy function, the previous relationship implies that
\begin{align*}
RE(\bar{X}^{K}, \bar{X}^{K+1}) \leq \frac{\alpha_{K+1}}{A_{K+1}} RE(\bar{X}^K, X^{K+1}).
\end{align*}
Furthermore, $A_{K+1} \geq A_K$. Moreover, there exists an iteration $N$ such that for all $K \geq N$, $\alpha_{K+1} < 1$. Thus, for all large enough $K$,
\begin{align}
RE(\bar{X}^{K}, \bar{X}^{K+1}) \leq \frac{1}{A_{K}} RE(\bar{X}^K, X^{K+1}).\label{betweenex}
\end{align}
Now let us recall \eqref{lalala} (in Lemma \ref{approximation_bound_lemma_weighted_average}):
\begin{align*}
\frac{RE(Y, X^{K+1})}{A_K} \leq \frac{RE(Y, X^0)}{A_k} - \frac{1}{A_K} \sum_{k=0}^K \alpha_k (Y - X^k) \cdot CX^k + \frac{1}{A_K} \sum_{k=0}^K \alpha_k (\exp\{m\alpha_k\}-1) \bar{C},
\end{align*}
which holds for any $Y$. Letting $Y = \bar{X}^K$, we obtain
\begin{align*}
\frac{RE(\bar{X}^K, X^{K+1})}{A_K} \leq \frac{RE(\bar{X}^K, X^0)}{A_K} - \frac{1}{A_K} \sum_{k=0}^K \alpha_k (\bar{X}^K - X^k) \cdot CX^k + \frac{1}{A_K} \sum_{k=0}^K \alpha_k (\exp\{m\alpha_k\}-1) \bar{C}.
\end{align*}
Theorem \ref{payoff_of_average_strategy_theorem} gives
\begin{align*}
\lim_{K \rightarrow \infty} \left\{ \frac{1}{A_K} \sum_{k=0}^K \alpha_k (\bar{X}^K - X^k) \cdot CX^k \right\} = 0
\end{align*}
and Lemma \ref{SC_lemma} gives
\begin{align*}
\lim_{K \rightarrow \infty} \frac{1}{A_K} \sum_{k=0}^K \alpha_k (\exp\{m\alpha_k\} - 1) = 0.
\end{align*}
Furthermore, \citep[Inequality (323)]{Sason-Verdu} implies that
\begin{align*}
RE(\bar{X}^K, X^0) \leq \log \left( \frac{1}{X^0(\min)} \right) \| \bar{X}^K - X^0 \|,
\end{align*}
and letting $\Delta$ be the maximum Euclidean distance between any two points on the probability simplex, we further obtain
\begin{align*}
RE(\bar{X}^K, X^0) \leq \log \left( \frac{1}{X^0(\min)} \right) \Delta.
\end{align*}
Therefore,
\begin{align*}
\lim_{K \rightarrow \infty} \frac{1}{A_K} RE(\bar{X}^K, X^0) = 0
\end{align*}
and, thus,
\begin{align*}
\lim_{K \rightarrow \infty} \left\{ \frac{1}{A_{K}} RE(\bar{X}^K, X^{K+1}) \right\} = 0.
\end{align*}
\eqref{betweenex} implies then that
\begin{align*}
\lim_{K \rightarrow \infty} RE(\bar{X}^{K}, \bar{X}^{K+1}) = 0
\end{align*}
as claimed in the first part of the lemma. The second part is a straightforward implication of Pinsker's inequality, yielding
\begin{align*}
RE(\bar{X}^{K}, \bar{X}^{K+1}) \geq \frac{1}{2} \|\bar{X}^{K+1} - \bar{X}^{K}\|^2
\end{align*}
and, thus,
\begin{align*}
\lim_{K \rightarrow \infty} \| \bar{X}^{K+1} - \bar{X}^{K} \| = 0,
\end{align*}
as claimed.
\end{proof}

\begin{theorem}
\label{limitpoint_connectedness_theorem}
The limit set of the sequence $\left\{ \bar{X}^K \right\}$ is connected.
\end{theorem}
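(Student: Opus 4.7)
My plan is to invoke the standard topological principle that a bounded sequence in $\mathbb{R}^n$ whose consecutive-difference norm tends to zero must have a connected set of accumulation points; the sequence $\{\bar{X}^K\}$ satisfies exactly the two hypotheses we need, namely that it lives in the compact simplex $\mathbb{X}(C)$ and that $\|\bar{X}^{K+1}-\bar{X}^K\|\to 0$ by Lemma \ref{consecutive_empirical_averages_lemma}. Writing $L$ for the limit set, I would first note that $L$ is non-empty (Bolzano--Weierstrass) and closed in $\mathbb{X}(C)$ (accumulation sets are always closed), hence compact.

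The argument then proceeds by contradiction. Suppose $L = A \cup B$ with $A,B$ non-empty, disjoint, and (relatively) closed in $L$; since $L$ is compact, both $A$ and $B$ are compact, so $\delta := \operatorname{dist}(A,B) > 0$. Fix $\epsilon = \delta/3$ and let $U_A,U_B$ be the open $\epsilon$-neighbourhoods of $A,B$ in $\mathbb{X}(C)$, which are disjoint by the triangle inequality. Because both $A$ and $B$ contain limit points of $\{\bar{X}^K\}$, the sequence enters each of $U_A$ and $U_B$ infinitely often. By Lemma \ref{consecutive_empirical_averages_lemma} there exists $N$ such that $\|\bar{X}^{K+1}-\bar{X}^K\| < \epsilon$ for all $K \geq N$; a single step therefore cannot carry the iterate from $U_A$ to $U_B$, since if $\bar{X}^K \in U_A$ and $\bar{X}^{K+1} \in U_B$ there would exist $a \in A$ and $b \in B$ with $\|a-b\| \leq \|a-\bar{X}^K\|+\|\bar{X}^K-\bar{X}^{K+1}\|+\|\bar{X}^{K+1}-b\| < 3\epsilon = \delta$, contradicting the definition of $\delta$.

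Consequently, to travel between $U_A$ and $U_B$ infinitely often the sequence must, for infinitely many indices $K \geq N$, satisfy $\bar{X}^K \notin U_A \cup U_B$; let $V := \mathbb{X}(C) \setminus (U_A \cup U_B)$, which is compact. Extracting a convergent subsequence from the infinitely many iterates lying in $V$ produces an accumulation point of $\{\bar{X}^K\}$ contained in $V$, hence in $L \setminus (A \cup B) = \emptyset$, a contradiction. Therefore no such separation of $L$ exists, and $L$ is connected.

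The only real subtlety is the step guaranteeing that the sequence must pass through $V$ infinitely often rather than merely switching neighbourhoods at isolated instants; this is handled cleanly by the small-step lemma above, so I do not anticipate a genuine obstacle. Everything else is standard compactness and triangle-inequality bookkeeping, and the proof is essentially a one-paragraph appeal to the classical topology fact once Lemma \ref{consecutive_empirical_averages_lemma} is available.
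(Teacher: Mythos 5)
Your proof is correct and takes essentially the same route as the paper: argue by contradiction, split the limit set into two compact pieces at positive distance, and use Lemma \ref{consecutive_empirical_averages_lemma} to rule out the jumps the sequence would have to make. Your neighbourhood-and-compactness step showing the sequence would otherwise accumulate in the region strictly between the two pieces is in fact a slightly more careful rendering of the step the paper treats briefly (its claim that the complement $\Omega$ contains only finitely many terms), so there is nothing to fix.
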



\begin{proof}
Let us assume for the sake of contradiction that the limit set, say, $\mathbb{\bar{X}}^*$ is not connected. Then there exist two disjoint closed subsets of $\mathbb{X}(C)$, say $A$ and $B$, such that the complement of their union, call it $\Omega$, does not intersect $\mathbb{\bar{X}}^*$ (that is, $\mathbb{\bar{X}}^* \cap \Omega = \emptyset$). Since $\mathbb{X}(C)$ is compact, $A$ and $B$ are also compact. Therefore, there exists $\epsilon > 0$ such that
\begin{align}
\forall X \in A \mbox{ } \forall Y \in B : \| X - Y \| \geq \epsilon.\label{battleforatlas}
\end{align}
Since $\Omega$ is such that it does not contain any limit point of $\left\{ \bar{X}^K \right\}$ it contains at most a finite number of points of $\left\{ \bar{X}^K \right\}$. Therefore, there exists an infinite set of indices, say $\mathcal{K}$ such that 
\begin{align*}
\forall K \in \mathcal{K} : \bar{X}^K \in A \mbox{ and } \bar{X}^{K+1} \in B.
\end{align*}
But then, for all $K \in \mathcal{K}$, \eqref{battleforatlas} implies that
\begin{align*}
\| \bar{X}^{K+1} - \bar{X}^K \| \geq \epsilon,
\end{align*}
which contradicts, however, Lemma \ref{consecutive_empirical_averages_lemma} and, thus, the assumption $\mathbb{\bar{X}}^*$ is not connected.
\end{proof}

As straightforward implications of Theorem \ref{limitpoint_connectedness_theorem} we have:

\begin{corollary}
If $\left\{ \bar{X}^K \right\}$ has an isolated equilibrium, say $X^*$, as a limit point, it converges to $X^*$.
\end{corollary}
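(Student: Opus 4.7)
The plan is to combine three results already in hand: Theorem \ref{asymptotic_convergence_theorem} (every limit point of $\{\bar{X}^K\}$ lies in $NE^+(C)$), Theorem \ref{limitpoint_connectedness_theorem} (the limit set $\mathbb{\bar{X}}^*$ is connected), and Corollary \ref{elementary_convergence_corollary} (a sequence in the compact simplex with a unique limit point converges). The goal is to show the isolation hypothesis forces $\mathbb{\bar{X}}^* = \{X^*\}$, after which Corollary \ref{elementary_convergence_corollary} finishes the argument.

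First I would unpack the meaning of isolated: since $X^*$ is an isolated equilibrium, there exists an open neighborhood $U \subset \mathbb{X}(C)$ of $X^*$ such that $U \cap NE^+(C) = \{X^*\}$. By Theorem \ref{asymptotic_convergence_theorem}, $\mathbb{\bar{X}}^* \subseteq NE^+(C)$, so $U \cap \mathbb{\bar{X}}^* = \{X^*\}$. This shows that $\{X^*\}$ is relatively open in $\mathbb{\bar{X}}^*$ (relative to the subspace topology inherited from $\mathbb{X}(C)$). Since $\{X^*\}$ is also closed in the Hausdorff space $\mathbb{X}(C)$, it is relatively closed in $\mathbb{\bar{X}}^*$. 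Thus $\{X^*\}$ is a nonempty clopen subset of $\mathbb{\bar{X}}^*$.

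Next I would invoke connectedness: Theorem \ref{limitpoint_connectedness_theorem} asserts $\mathbb{\bar{X}}^*$ is connected, and the only nonempty clopen subset of a connected space is the whole space. Therefore $\mathbb{\bar{X}}^* = \{X^*\}$, i.e., $\{\bar{X}^K\}$ has $X^*$ as its unique limit point.

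The final step is to apply Corollary \ref{elementary_convergence_corollary}, which yields that $\{\bar{X}^K\}$ converges to $X^*$. I do not anticipate a genuine obstacle here, since every ingredient has been established; the only point requiring mild care is to interpret ``isolated equilibrium'' as isolated within $NE^+(C)$ (equivalently, possessing a neighborhood in $\mathbb{X}(C)$ meeting $NE^+(C)$ only at $X^*$), which is the natural reading and is what makes the clopen argument work.
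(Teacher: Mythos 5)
Your argument is correct and is precisely the "straightforward implication" the paper intends: the limit set is contained in $NE^+(C)$ (Theorem \ref{asymptotic_convergence_theorem}) and is connected (Theorem \ref{limitpoint_connectedness_theorem}), so isolation forces the limit set to be the singleton $\{X^*\}$, and Corollary \ref{elementary_convergence_corollary} gives convergence. The clopen-subset formulation is a clean way of making the connectedness step explicit; no gap.
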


\begin{corollary}
If $C$ has a finite number of symmetric equilibria, then the sequence $\left\{ \bar{X}^K \right\}$ converges.
\end{corollary}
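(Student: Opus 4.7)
The plan is to combine the two main structural facts already established about the limit set of $\{ \bar{X}^K \}$: by Theorem \ref{asymptotic_convergence_theorem} every limit point is a symmetric equilibrium strategy of $C$, and by Theorem \ref{limitpoint_connectedness_theorem} the limit set is connected. Under the hypothesis of finitely many symmetric equilibria, these two facts together pin the limit set down to a single point, at which stage Corollary \ref{elementary_convergence_corollary} closes the argument.

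Concretely, I would first let $\mathbb{\bar{X}}^*$ denote the limit set of $\{\bar{X}^K\}$ in $\mathbb{X}(C)$. Theorem \ref{asymptotic_convergence_theorem} gives $\mathbb{\bar{X}}^* \subseteq NE^+(C)$, so the assumption that $|NE^+(C)| < \infty$ forces $\mathbb{\bar{X}}^*$ to be a finite set. Next I would invoke Theorem \ref{limitpoint_connectedness_theorem}, which tells us that $\mathbb{\bar{X}}^*$ is connected as a subspace of the simplex $\mathbb{X}(C)$.

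The key topological observation is then that a finite, nonempty, connected subset of a Hausdorff space must be a singleton: if $\mathbb{\bar{X}}^* = \{X^*_1, \ldots, X^*_r\}$ with $r \geq 2$, one can separate $X^*_1$ from $\{X^*_2, \ldots, X^*_r\}$ by disjoint open neighborhoods in $\mathbb{X}(C)$ (the set $\mathbb{X}(C)$ is a metric, hence Hausdorff, space), and these neighborhoods restricted to $\mathbb{\bar{X}}^*$ would yield a nontrivial clopen partition, contradicting connectedness. (Note that $\mathbb{\bar{X}}^*$ is nonempty by Bolzano-Weierstrass applied in the compact simplex.) Therefore $\mathbb{\bar{X}}^* = \{X^*\}$ for some symmetric equilibrium strategy $X^*$.

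Finally, having shown the sequence has a unique limit point, Corollary \ref{elementary_convergence_corollary} immediately yields that $\bar{X}^K \to X^*$, completing the proof. There is no real obstacle here: the work has already been done in Theorems \ref{asymptotic_convergence_theorem} and \ref{limitpoint_connectedness_theorem}, and the remaining step is a standard elementary fact about finite connected subsets of Hausdorff spaces.
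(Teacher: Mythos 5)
Your argument is correct and is exactly the route the paper intends: the paper states this corollary without proof as a ``straightforward implication'' of Theorem \ref{limitpoint_connectedness_theorem}, relying on Theorem \ref{asymptotic_convergence_theorem} to place the (nonempty, by compactness) limit set inside the finite set $NE^+(C)$, on connectedness to force it to be a singleton, and on Corollary \ref{elementary_convergence_corollary} to conclude convergence. Your explicit spelling-out of the finite-connected-implies-singleton step is a faithful completion of that same argument, not a different approach.
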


Note that every {\em non-degenerate game} has a finite number of equilibria.

\subsection{The limit set of averages lies on a convex equilibrium polytope}

\begin{lemma}
\label{limit_points_are_best_responses_to_each_other}
Every limit point of $\left\{ \bar{X}^K \right\}$ is a best response to every other limit point.
\end{lemma}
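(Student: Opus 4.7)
The plan is to combine Lemma \ref{apotophos} (the ``generalized'' approximation inequality for limit points) with Lemma \ref{average_payoff_lemma} (which identifies the limiting self-payoff with the maximum payoff against the empirical average). Let $\bar{X}$ and $\bar{Y}$ be two limit points of $\{\bar{X}^K\}$, and choose a subsequence $\{\bar{X}^{K_j}\}$ converging to $\bar{Y}$. I would first rewrite the expression in Lemma \ref{apotophos} by pulling $\bar{X}$ out of the sum: since
\begin{align*}
\frac{1}{A_K} \sum_{k=0}^K \alpha_k \bar{X} \cdot CX^k = \bar{X} \cdot C\bar{X}^K,
\end{align*}
Lemma \ref{apotophos} applied to the limit point $\bar{X}$ yields
\begin{align*}
\lim_{K \rightarrow \infty} \left\{ \bar{X} \cdot C\bar{X}^K - \frac{1}{A_K} \sum_{k=0}^K \alpha_k X^k \cdot CX^k \right\} = 0.
\end{align*}

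Next, I would restrict both this limit and the limit of Lemma \ref{average_payoff_lemma} to the subsequence $\{K_j\}$. Continuity of $X \mapsto \bar{X} \cdot CX$ gives $\bar{X} \cdot C\bar{X}^{K_j} \to \bar{X} \cdot C\bar{Y}$, and continuity of $X \mapsto (CX)_{\max}$ gives $(C\bar{X}^{K_j})_{\max} \to (C\bar{Y})_{\max}$. Consequently, the common sequence $\frac{1}{A_{K_j}} \sum_{k=0}^{K_j} \alpha_k X^k \cdot CX^k$ has a limit, and
\begin{align*}
\bar{X} \cdot C\bar{Y} = \lim_{j \rightarrow \infty} \frac{1}{A_{K_j}} \sum_{k=0}^{K_j} \alpha_k X^k \cdot CX^k = (C\bar{Y})_{\max},
\end{align*}
which is precisely the statement that $\bar{X}$ is a best response to $\bar{Y}$.

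Swapping the roles of $\bar{X}$ and $\bar{Y}$ (i.e., selecting a subsequence converging to $\bar{X}$ instead and applying Lemma \ref{apotophos} to $\bar{Y}$) gives that $\bar{Y}$ is a best response to $\bar{X}$. There is no real obstacle here; the nontrivial content has already been packaged into Lemmas \ref{apotophos} and \ref{average_payoff_lemma}, and the present lemma is essentially a continuity argument that ties the two together along a common convergent subsequence. The only thing to be careful about is not to conflate the ``diagonal'' identity of Theorem \ref{payoff_of_average_strategy_theorem} (where the strategy being evaluated moves with $K$) with the ``fixed-strategy'' identity of Lemma \ref{apotophos}; it is the latter that lets us separate $\bar{X}$ from the sequence $\bar{X}^{K_j} \to \bar{Y}$.
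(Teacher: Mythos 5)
Your proposal is correct and follows essentially the same route as the paper: apply Lemma \ref{apotophos} to the fixed limit point along a subsequence of $\{\bar{X}^K\}$ converging to the other limit point, and identify the common limit $\lim_j \frac{1}{A_{K_j}}\sum_{k=0}^{K_j}\alpha_k X^k\cdot CX^k$ with the maximum payoff against that other limit point. The only cosmetic difference is that you obtain this last identification from Lemma \ref{average_payoff_lemma} and continuity of $X\mapsto (CX)_{\max}$, whereas the paper cites Theorem \ref{asymptotic_convergence_theorem} directly; both are valid and interchangeable here.
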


\begin{proof}
Let $\left\{ \bar{X}^{K_j} \right\}_{j=0}^{\infty}$ be a subsequence that converges to $\bar{X}$ and suppose there exists a second limit point $\bar{X}' \neq \bar{X}$. Theorem \ref{asymptotic_convergence_theorem} implies, on one hand, that
\begin{align*}
(C\bar{X})_{\max} = \lim_{j \rightarrow \infty} \left\{ \frac{1}{A_{K_j}} \sum_{k=0}^{K_j} \alpha_k X^k \cdot CX^k \right\}
\end{align*}
and, on the other, Lemma \ref{apotophos} implies that
\begin{align*}
\bar{X}' \cdot C\bar{X} = \lim_{j \rightarrow \infty} \left\{ \frac{1}{A_{K_j}} \sum_{k=0}^{K_j} \alpha_k X^k \cdot CX^k \right\}.
\end{align*}
Combining the previous equalities completes the proof.
\end{proof}

\begin{theorem}
\label{equilibrium_polytope_limitset_theorem}
The limit set of $\left\{ \bar{X}^K \right\}_{K=0}^{\infty}$ is a connected subset of a convex polytope of equilibria.
\end{theorem}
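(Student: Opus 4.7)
The plan is to identify an explicit convex polytope $\mathcal{P} \subseteq NE^+(C)$ that contains the entire limit set, after which connectedness is immediate from Theorem \ref{limitpoint_connectedness_theorem}. Let $\bar{\mathbb{X}}^*$ denote the limit set of $\{\bar{X}^K\}$ and set
\begin{align*}
\mathcal{L} = \bigcup_{\bar{X} \in \bar{\mathbb{X}}^*} \mathcal{C}(\bar{X}).
\end{align*}
The main structural input is Lemma \ref{limit_points_are_best_responses_to_each_other}: any two limit points are best responses to each other. Combined with the standard fact that each pure strategy in the carrier of a best response is itself a best response, this yields the key claim that for every limit point $\bar{X}$ and every $i \in \mathcal{L}$, the pure strategy $E_i$ is a best response to $\bar{X}$, i.e.\ $(C\bar{X})_i = (C\bar{X})_{\max}$.

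Next I would define
\begin{align*}
\mathcal{P} = \bigl\{ X \in \mathbb{X}(C) \ \big|\ X(j) = 0 \text{ for } j \notin \mathcal{L};\ (CX)_i = (CX)_{i'} \ \forall i, i' \in \mathcal{L};\ (CX)_i \geq (CX)_j \ \forall i \in \mathcal{L},\ j \notin \mathcal{L} \bigr\}.
\end{align*}
Since $\mathcal{P}$ is cut out of the simplex by finitely many linear equalities and inequalities, it is a convex polytope. Every $X \in \mathcal{P}$ is a symmetric equilibrium of $C$: all pure strategies in the support of $X$ lie in $\mathcal{L}$ and earn the common value $(CX)_{\max}$, while every pure strategy outside $\mathcal{L}$ earns at most this amount. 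Hence $\mathcal{P} \subseteq NE^+(C)$. The previous paragraph, together with Theorem \ref{asymptotic_convergence_theorem} to handle the deviations outside $\mathcal{L}$, shows that every $\bar{X} \in \bar{\mathbb{X}}^*$ satisfies all defining constraints of $\mathcal{P}$, so $\bar{\mathbb{X}}^* \subseteq \mathcal{P}$.

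Finally, Theorem \ref{limitpoint_connectedness_theorem} gives that $\bar{\mathbb{X}}^*$ is connected, so $\bar{\mathbb{X}}^*$ is a connected subset of the convex equilibrium polytope $\mathcal{P}$, completing the proof. Incidentally, because $\mathcal{P}$ is convex and contains $\bar{\mathbb{X}}^*$, it also contains its convex (indeed affine, restricted to the simplex) hull, recovering the stronger statement alluded to in the introduction via Lemma \ref{equilibrium_set_affine_hull_lemma}.

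The main obstacle is the step showing $(C\bar{X})_i = (C\bar{X})_{\max}$ for every $i \in \mathcal{L}$ and every limit point $\bar{X}$: although this follows from the already established pairwise best-response property among limit points, the subtlety is that $\mathcal{L}$ is defined by aggregating supports across a possibly uncountable limit set, and one must verify that each $i \in \mathcal{L}$ is a best response simultaneously to \emph{every} limit point. Once this is in hand, the polytope description is purely linear-algebraic and the rest of the argument is routine.
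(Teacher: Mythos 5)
Your proof is correct, but it takes a genuinely different route from the paper's. The paper argues abstractly: by Lemma \ref{limit_points_are_best_responses_to_each_other} the limit points are equilibria that are best responses to one another, so by Lemma \ref{equilibrium_set_affine_hull_lemma} their affine hull (restricted to the simplex) is a convex polytope of equilibria containing the limit set, and Theorem \ref{limitpoint_connectedness_theorem} supplies connectedness. You instead build the polytope explicitly: with $\mathcal{L}$ the union of the supports of the limit points, your $\mathcal{P}$ is cut out by finitely many linear constraints (support inside $\mathcal{L}$, payoffs equalized on $\mathcal{L}$ and dominating those outside), and you verify $\mathcal{P} \subseteq NE^+(C)$ and $\bar{\mathbb{X}}^* \subseteq \mathcal{P}$; the key inclusion uses the same inputs (Lemma \ref{limit_points_are_best_responses_to_each_other} plus the standard fact that every pure strategy in the carrier of a best response is itself a best response, together with Theorem \ref{asymptotic_convergence_theorem} for the case $i \in \mathcal{C}(\bar{X})$ itself, since the lemma is phrased for ``every other'' limit point). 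What your version buys: an explicit linear description of the polytope, in the spirit of the equilibrium-subequalizer programs of Section \ref{christmastree}, and it sidesteps the mild leap in the paper's proof where Lemma \ref{equilibrium_set_affine_hull_lemma}, stated for a pair of equilibria, is invoked for the affine hull of a whole (possibly infinite) family. The ``uncountable limit set'' worry you flag is a non-issue: your key claim is checked pointwise, one $i \in \mathcal{L}$ and one limit point at a time. Two small caveats: the appeal to Theorem \ref{asymptotic_convergence_theorem} ``to handle the deviations outside $\mathcal{L}$'' is not needed (once $(C\bar{X})_i = (C\bar{X})_{\max}$ for all $i \in \mathcal{L}$, the outside inequalities are automatic), and your closing aside overreaches --- convexity of $\mathcal{P}$ only gives that it contains the convex hull of the limit set, not its affine hull restricted to the simplex; to recover the affine-hull statement from the introduction you still need Lemma \ref{equilibrium_set_affine_hull_lemma} (or an argument exploiting the equilibrium property of the affine combinations), not convexity alone.
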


\begin{proof}
Lemma \ref{equilibrium_set_affine_hull_lemma} implies that the affine hull of a set of equilibria that are best responses to each other is a convex equilibrium polytope. Thus, Lemma \ref{limit_points_are_best_responses_to_each_other} implies that the limit set of $\left\{ \bar{X}^K \right\}_{K=0}^{\infty}$ is a subset of such equilibrium polytope. Theorem \ref{limitpoint_connectedness_theorem} further implies that this limit set is connected.
\end{proof}

Note that \citep[Lemma 1, Theorem 2]{Jansen} relates our result to {\em maximal Nash subsets}.

\section{Exploiting side information in equilibrium computation}
\label{side_information}

In this final section, we explore further the information the sequence of iterates provides.  We assume the assumptions of Theorem \ref{asymptotic_convergence_theorem} hold.

\subsection{Recurrent clean separation using average probability masses}

The results we have insofar obtained imply that for any approximation error, say $\epsilon$, the empirical average is an $\epsilon$-approximate equilibrium in a finite number of steps. In this section, we take this result further by exploiting side information in the equilibrium computation process. One technique that computes an equilibrium in a finite number of steps can be derived from the following theorem:

\begin{theorem}
\label{recurrent_average_probability_masses}
Let $X^*$ be an arbitrary limit point of the sequence $\left\{ \bar{X}^K \right\}$. Let 
\begin{align*}
\mathcal{K}^+(X^*) = \left\{ i \in \mathcal{K}(C) | X^*(i) > 0 \right\}
\end{align*}
and $m = | \mathcal{K}^+(X^*) |$. If we rank pure strategies according to their respective (average) probability mass (as in $\bar{X}^K$), the top-$m$ strategies identify the pure strategies supporting $X^*$ infinitely often.
\end{theorem}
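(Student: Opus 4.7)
The plan is to prove this statement as a direct consequence of the definition of a limit point together with componentwise convergence on the simplex, with no further machinery beyond Theorem~\ref{asymptotic_convergence_theorem}. The key observation is that ranking by coordinates in $\bar{X}^K$ is a continuous (in fact, locally constant) combinatorial operation away from ties, and the limiting vector $X^*$ separates its supporting coordinates from the rest by a strictly positive gap.

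First I would fix a subsequence $\{\bar{X}^{K_j}\}_{j \geq 0}$ converging to $X^*$; such a subsequence exists because $X^*$ is assumed to be a limit point. By Theorem~\ref{asymptotic_convergence_theorem}, $X^* \in NE^+(C)$, though that is not actually needed for the argument. Next I would introduce the separation constant
\begin{align*}
\mu \;\doteq\; \min_{i \in \mathcal{K}^+(X^*)} X^*(i) \;>\; 0,
\end{align*}
which is well-defined and strictly positive since $\mathcal{K}^+(X^*)$ is finite and consists of coordinates on which $X^*$ is strictly positive.

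Then I would use componentwise convergence $\bar{X}^{K_j}(i) \to X^*(i)$ for every $i \in \mathcal{K}(C)$. Applying this with the tolerance $\mu/2$, there exists $J$ such that for all $j \geq J$ and all $i \in \mathcal{K}(C)$ we have $|\bar{X}^{K_j}(i) - X^*(i)| < \mu/2$. Splitting by cases:
\begin{align*}
i \in \mathcal{K}^+(X^*) &\;\Longrightarrow\; \bar{X}^{K_j}(i) > X^*(i) - \mu/2 \geq \mu/2, \\
i \notin \mathcal{K}^+(X^*) &\;\Longrightarrow\; \bar{X}^{K_j}(i) < X^*(i) + \mu/2 = \mu/2.
\end{align*}
Hence for every $j \geq J$, every strategy in $\mathcal{K}^+(X^*)$ strictly exceeds in $\bar{X}^{K_j}$-mass every strategy outside $\mathcal{K}^+(X^*)$, so the top-$m$ strategies ranked by probability mass in $\bar{X}^{K_j}$ coincide exactly with $\mathcal{K}^+(X^*)$. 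Since the infinite set $\{K_j : j \geq J\}$ produces the same top-$m$ set $\mathcal{K}^+(X^*)$, this identification recurs infinitely often along $\{\bar{X}^K\}$, which is the desired conclusion.

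There is no real obstacle here: the statement is a separation argument and the only subtlety is the potentially tied ordering among equal-mass strategies, which is handled by the strict gap $\mu/2$ above (ties could at worst occur within $\mathcal{K}^+(X^*)$ itself, but any tie-breaking rule returns the same unordered top-$m$ set). Note that the theorem does not assert convergence of the sequence of top-$m$ sets, only that $\mathcal{K}^+(X^*)$ arises as the top-$m$ set for infinitely many indices; this weaker "recurrent" conclusion is exactly what our subsequence argument yields, and it is consistent with the fact that the limit set of $\{\bar{X}^K\}$ may contain multiple equilibria with distinct supports (cf.\ Theorem~\ref{equilibrium_polytope_limitset_theorem}).
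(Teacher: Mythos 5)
Your proposal is correct and follows essentially the same route as the paper's own proof: pick a subsequence converging to $X^*$, use componentwise convergence to separate the supporting coordinates from the rest, and conclude the top-$m$ ranking identifies $\mathcal{C}(X^*)$ infinitely often. The only difference is that you make the separation threshold $\mu/2$ explicit where the paper says ``$\epsilon$ sufficiently small,'' which is a matter of presentation rather than substance.
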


\begin{proof}
Let $\left\{ \bar{X}^{K_j} \right\}$ be a subsequence that converges to $X^*$. If $i \not\in \mathcal{K}^+(X^*)$, then $\left\{ \bar{X}^{K_j}(i) \right\}$ converges to zero. Therefore, for all $\epsilon > 0$, there exists $\ell > 0$ such that for all $j \geq \ell$, $\bar{X}^{K_j}(i) < \epsilon$. Furthermore, if $i \in \mathcal{K}^+(X^*)$, then for all $\epsilon > 0$, there exists $\ell > 0$ such that for all $j \geq \ell$, $\bar{X}^{K_j}(i) > X^*(i) - \epsilon$. If $\epsilon$ is sufficiently small and $\ell$ is sufficiently large, $X^*$'s support is at the top of the ranking.
\end{proof}

As noted earlier we can check in polynomial time whether a subset of pure strategies support an equilibrium. Thus, waiting for the sequence of empirical averages to evolve far enough, an (equilibrium) limit point can be computed efficiently. An interesting question then is, starting from an element of the convex polytope of equilibria where the limit set of $\left\{ \bar{X}^K \right\}$ lies, whether the entire polytope can be computed in polynomial time. We leave this as an open question for the future.

\subsection{Recurrent clean separation using payoffs}

\begin{lemma}
\label{equalizer_supporting_lemma}
Let $X^*$ be a limit point of $\left\{ \bar{X}^K \right\}$ and let the subsequence $\left\{ \bar{X}^{K_j} \right\}$ be such that 
\begin{align}
\lim_{j \rightarrow \infty} \bar{X}^{K_j} = X^*.\label{firestone1}
\end{align}
Then
\begin{align}
\lim_{j \rightarrow \infty} \left\{ \frac{1}{A_{K_j}} \sum_{k=0}^{K_j} \alpha_k (E_i - X^k) \cdot CX^k \right\} = 0\label{firestone}
\end{align}
implies that $E_i$ is a best response to $X^*$.
\end{lemma}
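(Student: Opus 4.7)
The plan is to chase the limit through two applications of continuity plus one previously-established asymptotic identity, converting the hypothesis into the equation $E_i \cdot CX^* = (CX^*)_{\max}$.

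First, I would rewrite the weighted average of payoffs to $E_i$ as a payoff against the empirical average:
\begin{align*}
\frac{1}{A_{K_j}} \sum_{k=0}^{K_j} \alpha_k E_i \cdot CX^k = E_i \cdot C \bar{X}^{K_j}.
\end{align*}
Since $E_i \cdot CX$ is continuous in $X$, the hypothesis \eqref{firestone1} gives $\lim_{j \to \infty} E_i \cdot C \bar{X}^{K_j} = E_i \cdot C X^*$. The hypothesis \eqref{firestone} then forces
\begin{align*}
\lim_{j \to \infty} \frac{1}{A_{K_j}} \sum_{k=0}^{K_j} \alpha_k X^k \cdot CX^k = E_i \cdot C X^*.
\end{align*}

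Second, I would invoke Lemma \ref{average_payoff_lemma}, which asserts
\begin{align*}
\lim_{K \to \infty} \Bigl\{ (C\bar{X}^K)_{\max} - \frac{1}{A_K} \sum_{k=0}^{K} \alpha_k X^k \cdot CX^k \Bigr\} = 0,
\end{align*}
and specialize it to the subsequence $K_j$. Since $(CX)_{\max} = \max_Y Y \cdot CX$ is continuous in $X$ (the maximum of finitely many continuous linear functionals), $(C\bar{X}^{K_j})_{\max} \to (CX^*)_{\max}$. Combining with the identification of the limit of the weighted self-payoffs from the previous step yields $E_i \cdot C X^* = (C X^*)_{\max}$, which is precisely the statement that $E_i$ is a best response to $X^*$.

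I do not expect any real obstacle: the work is purely bookkeeping with limits, and both continuity facts are elementary. The one thing to be careful about is that Lemma \ref{average_payoff_lemma} delivers a full-sequence limit (not merely a $\liminf$), so its value along the chosen subsequence is unambiguously $(CX^*)_{\max}$; together with continuity of $(C\,\cdot\,)_{\max}$ this makes the two appearances of $\lim_{j \to \infty} \tfrac{1}{A_{K_j}} \sum_k \alpha_k X^k \cdot CX^k$ consistent and forces the desired equality.
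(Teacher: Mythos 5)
Your proof is correct and follows the same overall line as the paper's: both arguments first use \eqref{firestone1}, \eqref{firestone}, and continuity of $E_i \cdot CX$ in $X$ (via the identity $\frac{1}{A_{K_j}}\sum_{k=0}^{K_j}\alpha_k E_i \cdot CX^k = E_i \cdot C\bar{X}^{K_j}$) to conclude that $\lim_{j}\frac{1}{A_{K_j}}\sum_{k=0}^{K_j}\alpha_k X^k \cdot CX^k$ exists and equals $(CX^*)_i$, and then identify this common limit with $(CX^*)_{\max}$. The only place you diverge is in how that second identification is justified: the paper asserts that Theorem \ref{asymptotic_convergence_theorem} implies $(CX^*)_{\max} = \lim_{j}\frac{1}{A_{K_j}}\sum_{k}\alpha_k X^k \cdot CX^k$, a fact that really comes from the internals of that theorem's proof (or from a companion result) rather than from its statement, which only says that $X^*$ is an equilibrium; you instead invoke Lemma \ref{average_payoff_lemma} restricted to the subsequence $\{K_j\}$ together with continuity of $X \mapsto (CX)_{\max}$, so that $(C\bar{X}^{K_j})_{\max} \to (CX^*)_{\max}$ pins down the limit. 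Your route is, if anything, slightly more self-contained and explicit, since Lemma \ref{average_payoff_lemma} supplies exactly the full-sequence limit you need and nothing has to be unpacked from the proof of Theorem \ref{asymptotic_convergence_theorem}. There are no gaps in your argument.
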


\begin{proof}
\eqref{firestone1} and \eqref{firestone} imply that
\begin{align*}
(CX^*)_i = \lim_{j \rightarrow \infty} \left\{ \frac{1}{A_{K_{j}}} \sum_{k=0}^{K_{j}} \alpha_k X^k \cdot CX^k \right\}
\end{align*}
and Theorem \ref{asymptotic_convergence_theorem} further implies that
\begin{align*}
(CX^*)_{\max} = \lim_{j \rightarrow \infty} \left\{ \frac{1}{A_{K_{j}}} \sum_{k=0}^{K_{j}} \alpha_k X^k \cdot CX^k \right\}.
\end{align*}
Therefore, $(CX^*)_i = (CX^*)_{\max}$, and, thus, $E_i$ is a best response to $X^*$ as claimed.
\end{proof}

\begin{theorem}
\label{equalizer_separation_theorem}
Let $X^*$ be a limit point of $\left\{ \bar{X}^K \right\}$. Let us rank pure strategies according to their respective payoff $(C\bar{X}^K)_i, i =1, \ldots, n$. Then there exists $m \in \{1, \ldots, n\}$ such that the top-$m$ pure strategies form a carrier where $X^*$ is an equilibrium subequalizer infinitely often.
\end{theorem}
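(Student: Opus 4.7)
The plan is to let the set of pure best responses to $X^*$ itself serve as the carrier, and then to use continuity of the payoff map to move from $X^*$ to the approximating averages $\bar{X}^{K_j}$. Specifically, set
\begin{align*}
\mathcal{B} \equiv \{ i \in \mathcal{K}(C) : (CX^*)_i = (CX^*)_{\max} \}
\end{align*}
and $m = |\mathcal{B}|$. Because Theorem \ref{asymptotic_convergence_theorem} guarantees that $X^*$ is a symmetric equilibrium of $C$, we automatically have $\mathcal{C}(X^*) \subseteq \mathcal{B}$.

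First I would verify that, with this choice of carrier, $X^*$ is an equilibrium subequalizer of $C$ given $\mathcal{B}$ in the sense of Section \ref{christmastree}. Let $C'$ be the subgame whose carrier is $\mathcal{B}$. Since $X^*$ is supported on $\mathcal{B}$, it belongs to $\mathbb{X}(C')$, and $(C'X^*)_i = (CX^*)_i$ for every $i \in \mathcal{B}$. By the very definition of $\mathcal{B}$, all these quantities equal $(CX^*)_{\max}$, so $X^*$ is an equalizer of $C'$; combined with its being a Nash equilibrium of $C$, this makes it an equilibrium subequalizer of $C$ given $\mathcal{B}$.

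For the ``infinitely often'' conclusion I would appeal to continuity of $X \mapsto CX$. Pick any subsequence $\{\bar{X}^{K_j}\}$ with $\bar{X}^{K_j} \to X^*$ (which exists by hypothesis). Then $(C\bar{X}^{K_j})_i \to (CX^*)_i$ componentwise, and the gap
\begin{align*}
\delta \equiv \min\{ (CX^*)_{\max} - (CX^*)_i : i \notin \mathcal{B} \}
\end{align*}
is strictly positive (if $\mathcal{B} = \mathcal{K}(C)$ then $m=n$ and the claim is immediate, so we may assume $\mathcal{B} \subsetneq \mathcal{K}(C)$). Hence, for all $j$ sufficiently large, every payoff $(C\bar{X}^{K_j})_i$ with $i \in \mathcal{B}$ strictly exceeds every payoff $(C\bar{X}^{K_j})_{i'}$ with $i' \notin \mathcal{B}$, so the top-$m$ pure strategies in the ranking by $(C\bar{X}^{K_j})_{\cdot}$ are exactly $\mathcal{B}$. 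Since this holds for all large $j$, it holds for infinitely many $K$.

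The one real subtlety is the choice of carrier. Taking $m = |\mathcal{C}(X^*)|$ instead of $|\mathcal{B}|$ would not work in general: any pure best response to $X^*$ that lies outside $\mathcal{C}(X^*)$ ties in payoff with the in-support ones, so the payoff ranking cannot isolate $\mathcal{C}(X^*)$. Using $\mathcal{B}$ sidesteps this and also makes the equalizer verification automatic, since $X^*$ remains supported inside $\mathcal{B}$ and the restriction of $CX^*$ to $\mathcal{B}$ agrees with $C'X^*$. After this observation, both the subequalizer check and the payoff separation are short continuity arguments.
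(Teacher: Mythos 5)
Your proof is correct, and it reaches the same carrier as the paper does, but by a noticeably more direct route. The paper's proof stays inside its regret-sum machinery: it splits the pure strategies according to whether $\lim_j \frac{1}{A_{K_j}} \sum_{k=0}^{K_j} \alpha_k (E_i - X^k)\cdot CX^k$ is zero or strictly negative along the convergent subsequence, uses Lemmas \ref{probability_mass_convergence_2} and \ref{probability_mass_convergence_7} to conclude the negative-limit strategies lie outside $\mathcal{C}(X^*)$, uses Lemma \ref{equalizer_supporting_lemma} to show the zero-limit strategies are best responses to $X^*$ (hence the subequalizer property), and gets the recurrent clean separation from an $\epsilon$-argument on those averaged sums. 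Since $\frac{1}{A_{K_j}} \sum_k \alpha_k (E_i - X^k)\cdot CX^k = (C\bar{X}^{K_j})_i - \frac{1}{A_{K_j}}\sum_k \alpha_k X^k\cdot CX^k$ and the subtracted term is independent of $i$, the paper's ranking criterion coincides with yours, and its zero-limit set is exactly your $\mathcal{B} = \{i : (CX^*)_i = (CX^*)_{\max}\}$. What your version buys is economy: you only need Theorem \ref{asymptotic_convergence_theorem} (so that $X^*$ is an equilibrium, giving $\mathcal{C}(X^*)\subseteq\mathcal{B}$ and the subequalizer verification for free), continuity of the linear map $X \mapsto CX$, and the strict gap $\delta>0$ to $\mathcal{K}(C)\setminus\mathcal{B}$; you correctly dispose of the degenerate case $\mathcal{B}=\mathcal{K}(C)$, and the strict separation for large $j$ makes the top-$m$ set unambiguous despite possible ties inside $\mathcal{B}$ or inside its complement. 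What the paper's phrasing buys is reuse of the same limit quantities that drive its other side-information results (e.g., Lemmas \ref{love1}--\ref{lovestar}), at the cost of invoking several auxiliary lemmas where your argument needs only continuity.
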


\begin{proof}
Let $\left\{ \bar{X}^{K_j} \right\}$ be a subsequence that converges to $X^*$. Lemma \ref{main_convergence_lemma} implies that
\begin{align*}
\forall i \in \mathcal{K}(C) : \lim_{j \rightarrow \infty} \left\{ \frac{1}{A_{K_j}} \sum_{k=0}^{K_j} \alpha_k (E_i - X^k) \cdot CX^k \right\} \leq 0.
\end{align*}
Let $i \in \mathcal{K}(C)$ be such that
\begin{align}
\lim_{j \rightarrow \infty} \left\{ \frac{1}{A_{K_j}} \sum_{k=0}^{K_j} \alpha_k (E_i - X^k) \cdot CX^k \right\} = 0.\label{eenaa}
\end{align}
Let $i' \in \mathcal{K}(C)$ be such that
\begin{align}
\lim_{j \rightarrow \infty} \left\{ \frac{1}{A_{K_j}} \sum_{k=0}^{K_j} \alpha_k (E_{i'} - X^k) \cdot CX^k \right\} = \epsilon_{i'} < 0.\label{ddyoo}
\end{align}
\eqref{ddyoo} implies $X^*(i') = 0$ (cf. Lemmas \ref{probability_mass_convergence_2} and \ref{probability_mass_convergence_7}). Therefore, the pure strategies that satisfy \eqref{eenaa} include the carrier of $X^*$. Lemma \ref{equalizer_supporting_lemma} further implies that all pure strategies that satisfy \eqref{eenaa} are best responses to $X^*$. Therefore, $X^*$ is an equilibrium subequalizer of the pure strategies that satisfy \eqref{eenaa}. Let $i$ be such that \eqref{eenaa} is satisfied. Then, for all $\epsilon > 0$, there exists $N_j$ such that for all $K_j \geq N_j$,
\begin{align*}
\frac{1}{A_{K_j}} \sum_{k=0}^{K_j} \alpha_k (E_i - X^k) \cdot CX^k > -\epsilon.
\end{align*}
Let $i'$ be such that \eqref{ddyoo} is satisfied. Then, for all $\epsilon > 0$, there exists $N_j$ such that for all $K_j \geq N_j$,
\begin{align*}
\frac{1}{A_{K_j}} \sum_{k=0}^{K_j} \alpha_k (E_{i'} - X^k) \cdot CX^k < \epsilon_{i'} + \epsilon.
\end{align*}
Choosing $\epsilon$ small enough and $N_j$ large enough we have the promised recurring clean separation.
\end{proof}

As noted earlier (in Section \ref{christmastree}), given a set of pure strategies, a corresponding equilibrium subequalizer (if it exists) can be computed in polynomial time using linear programming.

\subsection{Recurrent separation using the probability masses of the iterates}

\begin{lemma}
\label{joint_evolution_lemma}
If $X^0$ is the uniform strategy, $\forall i, j \in \mathcal{K}(C)$, $(C \bar{X}^K)_i - (C \bar{X}^K)_j$ and $X^{K+1}(i) - X^{K+1}(j)$ are either both zero or have the same sign.
\end{lemma}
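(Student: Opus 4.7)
The plan is to unroll the Hedge recursion to obtain a closed-form expression for $X^{K+1}(i)$ and then exploit the fact that a uniform prior cancels from every coordinate, leaving a clean monotone relationship between the iterate and the cumulative payoff. Iterating the update \eqref{main_exp} gives, by straightforward induction on $K$,
\begin{align*}
X^{K+1}(i) \;=\; \frac{X^0(i)\,\exp\!\left\{\sum_{k=0}^{K} \alpha_k (CX^k)_i\right\}}{Z_K}, \qquad Z_K \;=\; \sum_{\ell=1}^{n} X^0(\ell)\,\exp\!\left\{\sum_{k=0}^{K} \alpha_k (CX^k)_\ell\right\},
\end{align*}
where $Z_K$ is the common normalization constant (this is the standard telescoping of multiplicative weights, using that $T$ multiplies the probability of pure strategy $i$ by $\exp\{\alpha_k(CX^k)_i\}$ divided by a coordinate-independent factor).

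Next I would pull the weighted average of payoffs through the inner product: by linearity of $C$ and by definition of $\bar{X}^K$,
\begin{align*}
\sum_{k=0}^{K} \alpha_k (CX^k)_i \;=\; E_i \cdot C\!\left(\sum_{k=0}^{K} \alpha_k X^k\right) \;=\; A_K\,(C\bar{X}^K)_i.
\end{align*}
Under the assumption that $X^0$ is the uniform strategy, the factor $X^0(i) = 1/n$ cancels from numerator and denominator, so
\begin{align*}
\frac{X^{K+1}(i)}{X^{K+1}(j)} \;=\; \exp\!\left\{A_K\bigl[(C\bar{X}^K)_i - (C\bar{X}^K)_j\bigr]\right\}.
\end{align*}

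To finish, observe that $A_K > 0$ and the exponential is strictly increasing, so the ratio above is $>1$, $=1$, or $<1$ exactly when $(C\bar{X}^K)_i - (C\bar{X}^K)_j$ is positive, zero, or negative, respectively. Since $X^{K+1}(i), X^{K+1}(j) > 0$ (Hedge remains in the interior in finite time), this means $X^{K+1}(i) - X^{K+1}(j)$ has the same sign as $(C\bar{X}^K)_i - (C\bar{X}^K)_j$, and the two quantities vanish simultaneously. There is no serious obstacle here: the only thing to be careful about is the accounting that turns the sum of payoff vectors into $A_K(C\bar{X}^K)$, after which the conclusion is immediate from monotonicity of $\exp$.
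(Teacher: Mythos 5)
Your proof is correct and is essentially the paper's argument: the paper telescopes the same per-step identity in logarithmic form, $\ln(X^{k+1}(i)/X^{k+1}(j)) = \ln(X^k(i)/X^k(j)) + \alpha_k((CX^k)_i-(CX^k)_j)$, sums over $k$, cancels the uniform prior, and divides by $A_K$ to read off $(E_i-E_j)\cdot C\bar{X}^K$, which is exactly your unrolled-product computation written additively. The sign conclusion then follows in both cases from strict positivity of the iterates and monotonicity of $\exp$ (equivalently $\ln$).
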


\begin{proof}
Let $T(X) \equiv \hat{X}$. Then
\begin{align*}
\frac{\hat{X}(i)}{\hat{X}(j)} = \frac{X(i)}{X(j)} \exp\{\alpha ((CX)_i - (CX)_j)\}
\end{align*}
and taking logarithms on both sides
\begin{align*}
\ln \left( \frac{\hat{X}(i)}{\hat{X}(j)} \right) = \ln \left( \frac{X(i)}{X(j)} \right) + \alpha ((CX)_i - (CX)_j).
\end{align*}
We may rewrite the previous equation as
\begin{align*}
\ln \left( \frac{X^{k+1}(i)}{X^{k+1}(j)} \right) = \ln \left( \frac{X^k(i)}{X^k(j)} \right) + \alpha_k ((CX^k)_i - (CX^k)_j).
\end{align*}
Summing over $k = 0, \ldots K$, we obtain
\begin{align*}
\ln \left( \frac{X^{K+1}(i)}{X^{K+1}(j)} \right) = \ln \left( \frac{X^0(i)}{X^0(j)} \right) + \sum_{k=0}^K \alpha_k ((CX^k)_i - (CX^k)_j)
\end{align*}
Assuming iterations start from the uniform distribution, we obtain
\begin{align*}
\ln \left( \frac{X^{K+1}(i)}{X^{K+1}(j)} \right) = \sum_{k=0}^K \alpha_k ((CX^k)_i - (CX^k)_j)
\end{align*}
and, dividing by $A_K$, we further obtain
\begin{align*}
\frac{1}{A_K} \ln \left( \frac{X^{K+1}(i)}{X^{K+1}(j)} \right) = (E_i - E_j) \cdot C\bar{X}^K.
\end{align*}
The statement of the lemma follows immediately from the latter equation.
\end{proof}

\begin{corollary}
\label{probability_mass_ranking_corollary}
Let $X^0$ be the uniform strategy and $X^*$ a limit point of $\left\{ \bar{X}^K \right\}$. Let us rank pure strategies according to their respective probability masses $X^K(i), i =1, \ldots, n$. Then there exists $m \in \{1, \ldots, n\}$ such that the top-$m$ pure strategies form a carrier where $X^*$ is an equilibrium subequalizer infinitely often.
\end{corollary}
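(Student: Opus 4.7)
The plan is to deduce Corollary \ref{probability_mass_ranking_corollary} from Theorem \ref{equalizer_separation_theorem} by using Lemma \ref{joint_evolution_lemma} as a ``change of ranking'' device. Lemma \ref{joint_evolution_lemma} guarantees that, when the iterates are initialized at the uniform distribution, for every pair $i, j \in \mathcal{K}(C)$ and every $K \geq 0$ the quantities $(C\bar{X}^K)_i - (C\bar{X}^K)_j$ and $X^{K+1}(i) - X^{K+1}(j)$ are either simultaneously zero or share a common sign. Consequently, the ranking of pure strategies induced by the payoff vector $C\bar{X}^K$ is identical to the ranking induced by the probability-mass vector $X^{K+1}$.

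Next, I would apply Theorem \ref{equalizer_separation_theorem} to the limit point $X^*$ to obtain an integer $m \in \{1, \ldots, n\}$ and an infinite set of indices along which the top-$m$ pure strategies under the payoff ranking $(C\bar{X}^K)_i$ coincide with a fixed carrier on which $X^*$ is an equilibrium subequalizer. Call this index set $\{K_j\}$. By the coincidence of rankings established in the previous paragraph, the same set of $m$ pure strategies also occupies the top-$m$ positions under the probability-mass ranking of the subsequent iterate $X^{K_j + 1}$. Hence, along the shifted index set $\{K_j + 1\}$, ranking pure strategies by the probability masses of the iterate $X^K$ identifies precisely the desired carrier, which yields the recurring clean separation claimed by the corollary.

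The argument is essentially routine once the two ingredients are in hand, so the main issue is merely bookkeeping: the one-step index shift from $\bar{X}^{K_j}$ to $X^{K_j+1}$ is harmless because the conclusion is recurrent rather than eventual, and Lemma \ref{joint_evolution_lemma} couples the two indices exactly. The only point that warrants care is to confirm that the ``infinitely often'' clause in Theorem \ref{equalizer_separation_theorem} does supply an infinite set of iterations (rather than only asymptotic information about payoffs), which is immediate from the $\epsilon$-separation step at the end of that theorem's proof.
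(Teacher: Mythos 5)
Your proposal is correct and matches the paper's own argument, which derives the corollary as an immediate consequence of Theorem \ref{equalizer_separation_theorem} and Lemma \ref{joint_evolution_lemma}; your handling of the one-step index shift is a harmless elaboration of the same idea.
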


\begin{proof}
Immediate implication of Theorem \ref{equalizer_separation_theorem} and Lemma \ref{joint_evolution_lemma}.
\end{proof}

\subsection{Persistent clean separation using best responses}

We first show a ``weak recurrence'' phenomenon involving best responses as Theorem \ref{recurring_best_responses_phenomenon} below. In the proof of this theorem, we need the following lemma:

\begin{lemma}
\label{damaskino}
Let $C \in \mathbb{C}$, $X^k \equiv T^k(X^0)$, $k = 0, 1, \ldots$, and $X^0 \in \mathbb{\mathring{X}(C)}$. Then,
\begin{align*}
\forall i \in \mathcal{K}(C) \mbox{ } \exists c > 0 \mbox{ } \forall K \geq 0 : (C\bar{X}^K)_i - (C\bar{X}^K)_{\max} \geq \frac{1}{A_K} \ln \left( c \right) + \frac{1}{A_K} \ln \left( X^{K+1}(i) \right)
\end{align*}
\end{lemma}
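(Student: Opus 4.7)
The plan is to exploit the closed-form of Hedge \eqref{main_exp} by taking logarithms and differencing the identity for coordinate $i$ against the identity for the argmax coordinate $j^*$, so that the (coordinate-independent) normalizing factors cancel exactly. This is cleaner than going through an inequality like the one in Lemma \ref{limejuice_lemma}, because here we need a lower bound on $(C\bar{X}^K)_i-(C\bar{X}^K)_{\max}$ rather than an upper bound on $(C\bar{X}^K)_i-X^k\cdot CX^k$.

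Concretely, I would start from the exact update
\begin{align*}
\ln(X^{k+1}(i)) - \ln(X^k(i)) = \alpha_k (CX^k)_i - \ln\!\left(\textstyle\sum_{j} X^k(j)\exp\{\alpha_k (CX^k)_j\}\right),
\end{align*}
and sum over $k=0,\dots,K$. The normalization $\ln(\sum_j X^k(j)\exp\{\alpha_k(CX^k)_j\})$ does not depend on $i$, so subtracting the analogous identity for any other coordinate $j$ kills that term and leaves
\begin{align*}
(C\bar{X}^K)_i - (C\bar{X}^K)_j \;=\; \frac{1}{A_K}\bigl[\ln(X^{K+1}(i))-\ln(X^{K+1}(j))\bigr] - \frac{1}{A_K}\ln\!\bigl(X^0(i)/X^0(j)\bigr).
\end{align*}

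Now I would choose $j=j^*(K)\in\arg\max_{j}(C\bar{X}^K)_j$, which gives an \emph{equality} for $(C\bar{X}^K)_i-(C\bar{X}^K)_{\max}$. To turn it into the desired lower bound I drop the only term whose sign I cannot control, namely $-\frac{1}{A_K}\ln(X^{K+1}(j^*))\ge 0$ (because $X^{K+1}(j^*)\in(0,1]$). This leaves
\begin{align*}
(C\bar{X}^K)_i - (C\bar{X}^K)_{\max} \;\ge\; \frac{1}{A_K}\ln(X^{K+1}(i)) + \frac{1}{A_K}\ln\!\bigl(X^0(j^*)/X^0(i)\bigr).
\end{align*}
Finally, since $j^*$ varies with $K$ over the finite set $\mathcal{K}(C)$, I uniformize by taking $c \equiv \min_{j\in\mathcal{K}(C)} X^0(j)/X^0(i)$, which is strictly positive because $X^0\in\mathbb{\mathring{X}}(C)$; then $\ln(X^0(j^*)/X^0(i))\ge\ln(c)$ for every $K$, yielding the claim.

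The only subtlety is making sure each step is valid for every $K\ge 0$: the telescoped identity requires that $X^k(i)>0$ and $X^k(j^*)>0$ for all $k\le K+1$, which is guaranteed by the standard fact (used throughout the paper) that Hedge keeps interior initial conditions in the interior for every finite number of iterations. Beyond that, the argument is essentially algebraic and I do not expect any real obstacle; the main step to get right is the direction of the inequality when discarding the $\ln(X^{K+1}(j^*))$ term, which is what forces the use of the $\arg\max$ coordinate (as opposed to, say, the $\arg\min$).
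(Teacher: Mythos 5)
Your proposal is correct and follows essentially the same route as the paper: both cancel the normalizing factor by differencing (equivalently, taking the ratio $\hat X(i)/\hat X(j)$), telescope, divide by $A_K$, discard the $-\frac{1}{A_K}\ln(X^{K+1}(j))\ge 0$ term, and uniformize the initial-condition ratio into a constant $c>0$ (the paper uses $c=X^0(\min)/X^0(\max)$, uniform over $i$ as well, but your per-$i$ choice is equally valid since the lemma quantifies $c$ after $i$).
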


\begin{proof}
Let $\hat{X} \equiv T(X)$. We then have
\begin{align*}
\frac{\hat{X}(i)}{\hat{X}(j)} = \frac{X(i)}{X(j)} \exp\{\alpha ((CX)_i - (CX)_j)\}.
\end{align*}
Taking logarithms,
\begin{align*}
\ln \left( \frac{\hat{X}(i)}{\hat{X}(j)} \right) = \ln \left( \frac{X(i)}{X(j)} \right) + \alpha ((CX)_i - (CX)_j).
\end{align*}
We may rewrite the previous equation as
\begin{align*}
\ln \left( \frac{X^{k+1}(i)}{X^{k+1}(j)} \right) = \ln \left( \frac{X^k(i)}{X^k(j)} \right) + \alpha_k ((CX^k)_i - (CX^k)_j).
\end{align*}
Summing over $k = 0, \ldots K$, we obtain
\begin{align*}
\ln \left( \frac{X^{K+1}(i)}{X^{K+1}(j)} \right) = \ln \left( \frac{X^0(i)}{X^0(j)} \right) + \sum_{k=0}^K \alpha_k ((CX^k)_i - (CX^k)_j).
\end{align*}
Dividing by $A_K$, we thus obtain
\begin{align*}
\frac{1}{A_K} \ln \left( \frac{X^{K+1}(i)}{X^{K+1}(j)} \right) = \frac{1}{A_K} \ln \left( \frac{X^0(i)}{X^0(j)} \right) + (E_i - E_j) \cdot C\bar{X}^K.
\end{align*}
The previous equation implies that
\begin{align*}
(E_i - E_j) \cdot C\bar{X}^K \geq \frac{1}{A_K} \ln \left( \frac{X^0(j)}{X^0(i)} \right) + \frac{1}{A_K} \ln \left( X^{K+1}(i) \right),
\end{align*}
which further implies
\begin{align*}
(E_i - E_j) \cdot C\bar{X}^K \geq \frac{1}{A_K} \ln \left( \frac{X^0(\min)}{X^0(\max)} \right) + \frac{1}{A_K} \ln \left( X^{K+1}(i) \right).
\end{align*}
Letting
\begin{align*}
c = \frac{X^0(\min)}{X^0(\max)},
\end{align*}
since $j$ is arbitrary, we obtain
\begin{align*}
(C\bar{X}^K)_i - (C\bar{X}^K)_{\max} \geq \frac{1}{A_K} \ln \left( c \right) + \frac{1}{A_K} \ln \left( X^{K+1}(i) \right)
\end{align*}
as claimed.
\end{proof}

\begin{theorem}
\label{recurring_best_responses_phenomenon}
If the pure strategy $E_i$ supports a limit point of the sequence $\left\{ \bar{X}^K \right\}_{K=0}^{\infty}$, then, for all $\epsilon > 0$, $E_i$ is an $\epsilon$-approximate best response to $\bar{X}^K, K = 0, 1, \ldots$ infinitely often.
\end{theorem}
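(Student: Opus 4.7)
The plan is to combine Lemma \ref{damaskino}, which already gives a pointwise bound on $(C\bar{X}^K)_{\max} - (C\bar{X}^K)_i$ in terms of $\ln(X^{K+1}(i))/A_K$, with a lower bound on $X^{K+1}(i)$ derived from the fact that $E_i$ supports a limit point of the empirical averages. Unwinding Lemma \ref{damaskino} we get
\begin{align*}
(C\bar{X}^K)_{\max} - (C\bar{X}^K)_i \;\leq\; -\frac{1}{A_K}\ln(c) \;-\; \frac{1}{A_K}\ln\left(X^{K+1}(i)\right),
\end{align*}
so it suffices to show that $\ln(X^{K+1}(i))/A_K \to 0$ along (at least) a subsequence whose indices are unbounded, since the term $-\ln(c)/A_K$ already tends to zero because $A_K\to\infty$ and $c>0$ depends only on $X^0$.

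The central step is the following observation. Suppose $X^*$ is a limit point of $\{\bar{X}^K\}$ with $X^*(i)>0$. Then along some subsequence $\bar{X}^{K_j}(i)\to X^*(i)>0$, so $\bar{X}^K(i)$ does not converge to $0$. Lemma \ref{probability_mass_convergence_7} states that $\liminf_{K\to\infty} X^K(i)=0$ forces $\lim_{K\to\infty}\bar{X}^K(i)=0$; taking the contrapositive yields $\liminf_{K\to\infty} X^K(i) > 0$. Consequently there exist $\delta>0$ and $K_0$ such that $X^K(i)\geq\delta$ for all $K\geq K_0$.

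Plugging this into the bound from Lemma \ref{damaskino}, for every $K\geq K_0-1$,
\begin{align*}
(C\bar{X}^K)_{\max} - (C\bar{X}^K)_i \;\leq\; -\,\frac{\ln(c)+\ln(\delta)}{A_K}.
\end{align*}
Since $A_K\to\infty$, the right-hand side tends to $0$. Hence for every $\epsilon>0$ there exists $N$ such that $(C\bar{X}^K)_{\max} - (C\bar{X}^K)_i\leq\epsilon$ for all $K\geq N$, which is in fact stronger than the claimed ``infinitely often'' conclusion of the theorem.

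The step I expect to require the most care is invoking the contrapositive of Lemma \ref{probability_mass_convergence_7} correctly: one has to observe that $\bar{X}^K(i)\geq 0$ and that supporting a limit point means $\limsup_K \bar{X}^K(i)>0$, which rules out $\lim_K \bar{X}^K(i)=0$ and thus (by Lemma \ref{probability_mass_convergence_7}) rules out $\liminf_K X^K(i)=0$. After that, the finish is purely mechanical because Lemma \ref{damaskino} was designed to produce exactly the kind of $1/A_K$ decay that we need.
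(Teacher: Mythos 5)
Your proof is correct, but it follows a genuinely different route from the paper's. The paper argues by contradiction and stays lightweight: assuming $E_i$ fails to be an $\epsilon$-approximate best response for all large $K$, Lemma \ref{damaskino} gives the exponential decay $X^{K+1}(i) \leq c\exp\{-\epsilon A_K\}$, hence $X^K(i) \to 0$, and then the Stolz--Ces\`aro theorem forces $\bar{X}^K(i) \to 0$, contradicting that $E_i$ supports a limit point; no appeal to Lemma \ref{probability_mass_convergence_7} is needed. You instead argue directly: from the support hypothesis you take the contrapositive of Lemma \ref{probability_mass_convergence_7} (correctly, since $\limsup_K \bar{X}^K(i) \geq X^*(i) > 0$ rules out $\lim_K \bar{X}^K(i)=0$, and nonnegativity upgrades $\liminf_K X^K(i) \neq 0$ to $\liminf_K X^K(i) > 0$), obtain a uniform lower bound $X^K(i) \geq \delta$ for large $K$, and feed it back into Lemma \ref{damaskino} to get $(C\bar{X}^K)_{\max} - (C\bar{X}^K)_i \leq -(\ln c + \ln\delta)/A_K \to 0$. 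What this buys you is a strictly stronger conclusion --- the ``for all $K \geq N$'' (persistent) version, which the paper only establishes later as Theorem \ref{persistent_best_responses} via Lemma \ref{average_payoff_lemma} and the probability-mass lemmas --- at the cost of importing heavier machinery: Lemma \ref{probability_mass_convergence_7} rests on Lemma \ref{limsup_lemma} and hence on the full summability assumption $\sum_k \alpha_k(\exp\{\alpha_k\}-1) < \infty$ of Theorem \ref{asymptotic_convergence_theorem}, whereas the paper's contradiction argument for this particular theorem needs essentially only Lemma \ref{damaskino}, $A_K \to \infty$, and Stolz--Ces\`aro. Since the section explicitly assumes the hypotheses of Theorem \ref{asymptotic_convergence_theorem}, your reliance on that lemma is legitimate, and the argument stands.
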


\begin{proof}
Let us assume that there is $\epsilon > 0$ such that $E_i$ is an $\epsilon$-approximate best response to $\bar{X}^K, K = 0, 1, \ldots$ only a finite number of times. Then 
\begin{align*}
\exists N \geq 0 \mbox{ } \forall K \geq N : (C\bar{X}^K)_i - (C\bar{X}^K)_{\max} < - \epsilon.
\end{align*}
Lemma \ref{damaskino} implies then that, for some $c > 0$,
\begin{align*}
X^{K+1}(i) \leq c \exp\left\{ - \epsilon A_K \right\},
\end{align*}
which further implies
\begin{align*}
\limsup\limits_{K \rightarrow \infty} X^{K+1}(i) \leq c \exp\left\{ - \epsilon \left( \lim_{K \rightarrow \infty} A_K \right) \right\},
\end{align*}
Since $A_K \rightarrow \infty$ as $K \rightarrow \infty$, we obtain $X^{K+1}(i) \rightarrow 0$ as $K \rightarrow \infty$ and the Stolz-Ces\'aro theorem implies $\bar{X}^{K}(i) \rightarrow 0$ as $K \rightarrow \infty$. Therefore, $E_i$ cannot support a limit point of $\left\{ \bar{X}^K \right\}_{K=0}^{\infty}$.
\end{proof}

In the sequel, we refine and strengthen Theorem \ref{recurring_best_responses_phenomenon}.

\begin{lemma}
\label{love1}
If
\begin{align}
\limsup\limits_{K \rightarrow \infty} \left\{ \frac{1}{A_K} \sum_{k = 0}^K \alpha_k (E_i - X^k) \cdot CX^k \right\} = - \epsilon' < 0,\label{blackfriday}
\end{align}
then $\exists \hat{\epsilon} > 0$ and $N > 0$ such that, $\forall K \geq N$, $E_i$ is not a $\hat{\epsilon}$-approximate best response to $\bar{X}^K$. 
\end{lemma}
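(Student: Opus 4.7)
The plan is to establish a uniform gap $(C\bar{X}^K)_{\max} - (C\bar{X}^K)_i \geq \hat{\epsilon}$ for all sufficiently large $K$. Since ``$E_i$ is a $\hat{\epsilon}$-approximate best response to $\bar{X}^K$'' means exactly $(C\bar{X}^K)_{\max} - (C\bar{X}^K)_i \leq \hat{\epsilon}$, proving the lemma reduces to bounding this payoff gap away from zero eventually.

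First I would exploit the linearity $(C\bar{X}^K)_i = E_i \cdot C\bar{X}^K = \frac{1}{A_K}\sum_{k=0}^K \alpha_k\, E_i \cdot CX^k$ and rewrite
\begin{align*}
(C\bar{X}^K)_{\max} - (C\bar{X}^K)_i
= \underbrace{\left[(C\bar{X}^K)_{\max} - \frac{1}{A_K}\sum_{k=0}^K \alpha_k\, X^k \cdot CX^k\right]}_{\mathrm{I}_K} - \underbrace{\frac{1}{A_K}\sum_{k=0}^K \alpha_k\, (E_i - X^k) \cdot CX^k}_{\mathrm{II}_K}.
\end{align*}
Lemma \ref{average_payoff_lemma} says $\mathrm{I}_K \to 0$ as $K \to \infty$, while the hypothesis \eqref{blackfriday} says $\limsup_{K\to\infty} \mathrm{II}_K = -\epsilon'$, equivalently $\liminf_{K\to\infty} (-\mathrm{II}_K) = \epsilon'$.

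Now I would pass to a tail bound: for any $\delta \in (0,\epsilon')$, there exists $N$ such that $K \geq N$ implies simultaneously $|\mathrm{I}_K| < \delta$ (by $\mathrm{I}_K \to 0$) and $-\mathrm{II}_K > \epsilon' - \delta$ (by the $\limsup$-to-tail inequality applied to $\mathrm{II}_K$). Choosing $\delta = \epsilon'/4$ gives $(C\bar{X}^K)_{\max} - (C\bar{X}^K)_i > \epsilon'/2$ for every $K \geq N$, so the choice $\hat{\epsilon} := \epsilon'/2$ witnesses the claim.

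There is no substantive obstacle here: the decomposition is algebraic, Lemma \ref{average_payoff_lemma} already packages the asymptotic equality of $(C\bar{X}^K)_{\max}$ with the time-weighted self-payoff, and the only technical care needed is the standard translation of a $\limsup$ assumption into an eventually-valid upper bound (as opposed to one holding merely along a subsequence). The slight subtlety is that $\mathrm{I}_K$ could a priori be negative, so one must keep both terms when choosing $\delta$, rather than simply discarding $\mathrm{I}_K$.
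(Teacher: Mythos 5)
Your proof is correct, and it rests on the same algebraic decomposition of the payoff gap that the paper uses, but it controls the first term by a different (and stronger) lemma. Where you invoke Lemma \ref{average_payoff_lemma} to get
$(C\bar{X}^K)_{\max} - \frac{1}{A_K}\sum_{k=0}^K \alpha_k X^k \cdot CX^k \to 0$
and therefore must budget a $\delta$ against the possibility that this term dips below zero, the paper instead uses only the one-sided bound from Lemma \ref{best_response_lemma}, namely
$(C\bar{X}^K)_{\max} \geq X^{K+1}\cdot C\bar{X}^K \geq \frac{1}{A_K}\sum_{k=0}^K \alpha_k X^k \cdot CX^k$,
which holds for \emph{every} $K$, not merely asymptotically. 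This makes the ``slight subtlety'' you flag --- that your term $\mathrm{I}_K$ could a priori be negative --- a non-issue: it is nonnegative for all $K$. The paper's route is thus a little more economical (it needs only nonnegativity of $\mathrm{I}_K$, not its convergence, and Lemma \ref{average_payoff_lemma} is itself derived from Lemma \ref{best_response_lemma} plus Lemma \ref{main_convergence_lemma}), and it lets one take $\hat{\epsilon}$ arbitrarily close to $\epsilon'$ rather than $\epsilon'/2$; your version buys nothing extra here but is perfectly sound, since the translation of the $\limsup$ hypothesis into an eventual tail bound on $\mathrm{II}_K$ --- the genuinely load-bearing step --- is handled correctly in both arguments.
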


\begin{proof}
\eqref{blackfriday} implies that there exists there exists $\epsilon < \epsilon'$ and an iteration $N$ such that, for all $K \geq N$,
\begin{align}
(C\bar{X}^K)_i - \frac{1}{A_K} \sum_{k = 0}^K \alpha_k X^k \cdot CX^k < \epsilon - \epsilon'.\label{saturday1}
\end{align}
By Lemma \ref{best_response_lemma}, we have that, for all $K \geq 0$,
\begin{align*}
X^{K+1} \cdot C\bar{X}^K \geq \frac{1}{A_K} \sum_{k=0}^K \alpha_k X^k \cdot CX^k,
\end{align*}
and, therefore, that
\begin{align}
(C\bar{X}^K)_{\max} \geq \frac{1}{A_K} \sum_{k=0}^K \alpha_k X^k \cdot CX^k.\label{saturday2}
\end{align}
Combining \eqref{saturday1} and \eqref{saturday2} we obtain
\begin{align*}
(C\bar{X}^K)_i - (C\bar{X}^K)_{\max} < \epsilon - \epsilon' \equiv - \hat{\epsilon}.
\end{align*}
Thus, $\exists N$ such that $\forall K \geq N$, $E_i$ is not a $\hat{\epsilon}$-approximate best response to $\bar{X}^K$ as claimed.
\end{proof}

\begin{lemma}
\label{love2}
If
\begin{align*}
\limsup\limits_{K \rightarrow \infty} \left\{ \frac{1}{A_K} \sum_{k=0}^K \alpha_k (E_i - X^k) \cdot CX^k \right\} = 0,
\end{align*}
then there exists a limit point, say $\bar{X}$, of $\left\{ \bar{X}^K \right\}$ such that $E_i$ is a best response to $\bar{X}$.
\end{lemma}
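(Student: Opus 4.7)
The plan is to combine the definition of $\limsup$ with compactness of the simplex and then invoke Lemma \ref{equalizer_supporting_lemma}. First, by the hypothesis that
\[
\limsup_{K \to \infty} \left\{ \frac{1}{A_K} \sum_{k=0}^K \alpha_k (E_i - X^k) \cdot CX^k \right\} = 0,
\]
combined with Lemma \ref{main_convergence_lemma} (which ensures the $\limsup$ is not greater than $0$), I would extract a subsequence of indices $\{K_j\}_{j=0}^{\infty}$ along which the averaged payoff gap attains this $\limsup$, i.e.,
\[
\lim_{j \to \infty} \left\{ \frac{1}{A_{K_j}} \sum_{k=0}^{K_j} \alpha_k (E_i - X^k) \cdot CX^k \right\} = 0.
\]

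Next, since the sequence $\{\bar{X}^{K_j}\}_{j=0}^{\infty}$ lies in the compact simplex $\mathbb{X}(C)$, I would invoke the Bolzano--Weierstrass theorem to pass to a further convergent sub-subsequence $\{\bar{X}^{K_{j_{\ell}}}\}_{\ell=0}^{\infty}$ with some limit $\bar{X} \in \mathbb{X}(C)$. Of course $\bar{X}$ is then a limit point of the full sequence $\{\bar{X}^K\}$. Along this sub-subsequence the averaged payoff gap still tends to $0$, since taking a subsequence of a convergent sequence preserves the limit:
\[
\lim_{\ell \to \infty} \left\{ \frac{1}{A_{K_{j_{\ell}}}} \sum_{k=0}^{K_{j_{\ell}}} \alpha_k (E_i - X^k) \cdot CX^k \right\} = 0.
\]

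Finally, with the two convergence statements $\bar{X}^{K_{j_{\ell}}} \to \bar{X}$ and the averaged payoff gap tending to $0$ for the pure strategy $E_i$, I would directly apply Lemma \ref{equalizer_supporting_lemma}, whose hypotheses match exactly, to conclude that $E_i$ is a best response to $\bar{X}$. This completes the argument. There is no real obstacle here beyond getting the two convergences to hold along the same index sequence, which is why the extraction is done in two stages (first the $\limsup$-attaining subsequence, then a convergent sub-subsequence); the heavy lifting — relating the payoff gap limit to the best-response property — was already done in Lemma \ref{equalizer_supporting_lemma}, so this lemma is essentially a compactness wrapper around it.
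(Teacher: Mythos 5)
Your proposal is correct and follows essentially the same route as the paper: extract a subsequence along which the averaged payoff gap tends to $0$ (possible since the $\limsup$ equals $0$), pass to a convergent sub-subsequence of $\{\bar{X}^{K_j}\}$ by compactness, and conclude that $E_i$ is a best response to the limit. The only cosmetic difference is that you delegate the final step to Lemma \ref{equalizer_supporting_lemma}, whereas the paper re-derives that step inline via Theorem \ref{asymptotic_convergence_theorem}; the content is identical, and your citation of Lemma \ref{main_convergence_lemma} is harmless but unnecessary since the hypothesis already fixes the $\limsup$ at $0$.
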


\begin{proof}
Let
\begin{align*}
\left\{ \frac{1}{A_{K_j}} \sum_{k=0}^{K_j} \alpha_k (E_i - X^k) \cdot CX^k \right\}_{j=0}^{\infty}
\end{align*}
be a subsequence such that
\begin{align}
\lim_{j \rightarrow \infty} \left\{ \frac{1}{A_{K_j}} \sum_{k=0}^{K_j} \alpha_k (E_i - X^k) \cdot CX^k \right\} = 0.\label{huaway}
\end{align}
Furthermore, let $\left\{ \bar{X}^{K_{j_{\ell}}} \right\}_{\ell = 0}^{\infty}$ be a convergent subsequence. Then \eqref{huaway} implies that
\begin{align}
\lim_{\ell \rightarrow \infty} \left\{ \frac{1}{A_{K_{j_{\ell}}}} \sum_{k=0}^{K_{j_{\ell}}} \alpha_k (E_i - X^k) \cdot CX^k \right\} = 0.\label{huaway2}
\end{align}
Let $\bar{X}$ be the limit of $\left\{ \bar{X}^{K_{j_{\ell}}} \right\}_{\ell = 0}^{\infty}$. \eqref{huaway2} implies that
\begin{align*}
(C\bar{X})_i = \lim_{\ell \rightarrow \infty} \left\{ \frac{1}{A_{K_{j_{\ell}}}} \sum_{k=0}^{K_{j_{\ell}}} \alpha_k X^k \cdot CX^k \right\}
\end{align*}
and Theorem \ref{asymptotic_convergence_theorem} implies that
\begin{align*}
(C\bar{X})_i = \bar{X} \cdot C\bar{X}.
\end{align*}
Therefore, since $\bar{X}$ is an equilibrium strategy, $E_i$ is a best response to $\bar{X}$.
\end{proof}

\begin{lemma}
\label{lovestar}
$E_i$ is a best response to some limit point $\bar{X}$ of $\left\{ \bar{X}^K \right\}$ if and only if
\begin{align*}
\limsup\limits_{K \rightarrow \infty} \left\{ \frac{1}{A_K} \sum_{k=0}^K \alpha_k (E_i - X^k) \cdot CX^k \right\} = 0.
\end{align*}
\end{lemma}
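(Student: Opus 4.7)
The plan is to assemble the two preceding lemmas (Lemma~\ref{love1} and Lemma~\ref{love2}) with the already-established one-sided bound in Lemma~\ref{main_convergence_lemma} to cover both directions of the biconditional.

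The $(\Leftarrow)$ direction is immediate from Lemma~\ref{love2}: if
\begin{align*}
\limsup\limits_{K \rightarrow \infty} \left\{ \frac{1}{A_K} \sum_{k=0}^K \alpha_k (E_i - X^k) \cdot CX^k \right\} = 0,
\end{align*}
then Lemma~\ref{love2} directly produces a limit point $\bar{X}$ of $\left\{ \bar{X}^K \right\}$ to which $E_i$ is a best response, which is exactly what is needed.

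For the $(\Rightarrow)$ direction I would argue by contradiction. Suppose $E_i$ is a best response to some limit point $\bar{X}$, and let $\left\{ \bar{X}^{K_j} \right\}$ be a subsequence converging to $\bar{X}$. Lemma~\ref{main_convergence_lemma} applied with $Y = E_i$ guarantees
\begin{align*}
\limsup\limits_{K \rightarrow \infty} \left\{ \frac{1}{A_K} \sum_{k=0}^K \alpha_k (E_i - X^k) \cdot CX^k \right\} \leq 0,
\end{align*}
so the only way the desired equality can fail is that this $\limsup$ is strictly negative. Assume for contradiction that it equals some $-\epsilon' < 0$. Lemma~\ref{love1} then yields $\hat{\epsilon} > 0$ and an iteration $N$ such that for every $K \geq N$,
\begin{align*}
(C\bar{X}^K)_{\max} - (C\bar{X}^K)_i \geq \hat{\epsilon}.
\end{align*}
In particular this inequality holds along the subsequence $\left\{ \bar{X}^{K_j} \right\}$ for all $j$ large enough.

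The final step is a continuity argument: the maps $X \mapsto (CX)_i$ and $X \mapsto (CX)_{\max}$ are both continuous on $\mathbb{X}(C)$ (the latter as the maximum of finitely many linear functions), so passing to the limit along $\left\{ \bar{X}^{K_j} \right\}$ gives
\begin{align*}
(C\bar{X})_{\max} - (C\bar{X})_i \geq \hat{\epsilon} > 0.
\end{align*}
This contradicts the assumption that $E_i$ is a best response to $\bar{X}$, which forces $(C\bar{X})_i = (C\bar{X})_{\max}$. Hence the $\limsup$ must equal $0$, completing the proof. There is no real obstacle here; the content of the lemma is genuinely a packaging of Lemmas~\ref{main_convergence_lemma}, \ref{love1}, and~\ref{love2}, with the only substantive observation being that continuity of $(C\,\cdot\,)_{\max}$ lets one transfer the uniform gap supplied by Lemma~\ref{love1} to a strict gap at the limit point.
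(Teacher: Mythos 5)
Your proof is correct and follows essentially the same route as the paper: the paper also obtains the backward direction directly from Lemma \ref{love2} and the forward direction (in contrapositive form) from Lemma \ref{love1}, with your explicit continuity step merely spelling out what the paper leaves implicit when it passes from the uniform gap at $\bar{X}^K$ to the limit points.
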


\begin{proof}
If
\begin{align*}
\limsup\limits_{K \rightarrow \infty} \left\{ \frac{1}{A_K} \sum_{k=0}^K \alpha_k (E_i - X^k) \cdot CX^k \right\} = 0,
\end{align*}
Lemma \ref{love2} implies there exists a limit point, say $\bar{X}$, of $\left\{ \bar{X}^K \right\}$ such that $E_i$ is a best response to $\bar{X}$. If
\begin{align*}
\limsup\limits_{K \rightarrow \infty} \left\{ \frac{1}{A_K} \sum_{k=0}^K \alpha_k (E_i - X^k) \cdot CX^k \right\} < 0,
\end{align*}
Lemma \ref{love1} implies that, for some $\hat{\epsilon} > 0$ and $N > 0$, $E_i$ is not a $\hat{\epsilon}$-approximate best response to $\bar{X}^K$ for all $K \geq N$. Thus, $E_i$ is not a $\hat{\epsilon}$-approximate best response to any of the limit points of $\bar{X}^K$.
\end{proof}

\if(0)

\begin{theorem}
\label{love3}
If, for some $N > 0$, $E_i$ is a best response to $\bar{X}^K$ for all $K \geq N$, then it is a best response to a limit point of the sequence $\left\{ \bar{X}^K \right\}$.
\end{theorem}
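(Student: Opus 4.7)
The plan is to reduce the theorem, via Lemma \ref{lovestar}, to proving that
\[
\limsup_{K \to \infty} \left\{ \frac{1}{A_K} \sum_{k=0}^K \alpha_k (E_i - X^k) \cdot CX^k \right\} = 0.
\]
Half of this equality is already given by Lemma \ref{main_convergence_lemma}, which supplies the ``$\leq 0$'' direction for every mixed (hence pure) strategy. The entire proof effort therefore reduces to showing the reverse inequality under the persistent best-response hypothesis.

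The key will be to compare the payoff $(C\bar{X}^K)_i$, which the hypothesis pins to $(C\bar{X}^K)_{\max}$ for $K \geq N$, against the running ``quadratic'' average $\frac{1}{A_K}\sum_{k=0}^K \alpha_k X^k \cdot CX^k$. By linearity, $(C\bar{X}^K)_i = \frac{1}{A_K}\sum_{k=0}^K \alpha_k E_i \cdot CX^k$. On the other hand, Lemma \ref{best_response_lemma} states $X^{K+1} \cdot C\bar{X}^K \geq \frac{1}{A_K}\sum_{k=0}^K \alpha_k X^k \cdot CX^k$, and dominating the left-hand side by $(C\bar{X}^K)_{\max}$ gives
\[
(C\bar{X}^K)_{\max} \geq \frac{1}{A_K} \sum_{k=0}^K \alpha_k X^k \cdot CX^k.
\]
The hypothesis $(C\bar{X}^K)_i = (C\bar{X}^K)_{\max}$ for all $K \geq N$ then forces
\[
\frac{1}{A_K} \sum_{k=0}^K \alpha_k (E_i - X^k) \cdot CX^k \geq 0 \qquad \text{for all } K \geq N.
\]

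Passing to $\liminf$ yields $\liminf \geq 0$, which together with Lemma \ref{main_convergence_lemma} forces the $\limsup$ to be exactly $0$; Lemma \ref{lovestar} then furnishes a limit point of $\{\bar{X}^K\}$ to which $E_i$ is a best response. I do not anticipate a genuine obstacle here: the argument is essentially bookkeeping, organised around the quadratic average $\frac{1}{A_K}\sum_k \alpha_k X^k \cdot CX^k$ that Lemma \ref{best_response_lemma} was tailor-made to control. The one point requiring care is that the hypothesis must be applied for \emph{all} $K \geq N$, not merely infinitely often, so that the bound really yields $\liminf \geq 0$ rather than only $\limsup \geq 0$; otherwise, Lemma \ref{main_convergence_lemma} alone would not close the argument.
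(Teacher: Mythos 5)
Your proposal is correct and follows essentially the same route as the paper: the paper's proof combines Lemma \ref{main_convergence_lemma} (for $\limsup \leq 0$) with the contrapositive of Lemma \ref{love1} (for $\limsup \geq 0$) and then invokes Lemma \ref{love2}, which is exactly the content of your appeal to Lemma \ref{lovestar}. The only cosmetic difference is that you establish the lower bound directly from Lemma \ref{best_response_lemma} and the identity $(C\bar{X}^K)_i = \frac{1}{A_K}\sum_{k=0}^K \alpha_k E_i \cdot CX^k$, rather than citing Lemma \ref{love1} by contraposition --- but that is precisely the computation inside Lemma \ref{love1}'s proof, so the two arguments coincide.
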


\begin{proof}
By Lemma \ref{main_convergence_lemma},
\begin{align*}
\limsup\limits_{K \rightarrow \infty} \left\{ \frac{1}{A_K} \sum_{k=0}^K \alpha_k (E_i - X^k) \cdot CX^k \right\} \leq 0.
\end{align*}
Lemma \ref{love1} implies that if, for some $N > 0$, $E_i$ is a best response to $\bar{X}^K$ for all $K \geq N$, then
\begin{align*}
\limsup\limits_{K \rightarrow \infty} \left\{ \frac{1}{A_K} \sum_{k=0}^K \alpha_k (E_i - X^k) \cdot CX^k \right\} = 0.
\end{align*}
Then invoking Lemma \ref{love2} completes the proof.
\end{proof}

\fi

\if(0)

\begin{lemma}
\label{some_nice_lemma}
\begin{align}
\limsup\limits_{K \rightarrow \infty} \left\{ \frac{1}{A_K} \sum_{k=0}^K \alpha_k (E_i - X^k) \cdot CX^k \right\} = 0\label{you}
\end{align}
implies that
\begin{align}
\limsup\limits_{K \rightarrow \infty} \left\{ \frac{1}{A_K} \left( RE(E_i, X^0) - RE(E_i, X^{K+1}) \right) \right\} \geq 0.\label{you2}
\end{align}
\end{lemma}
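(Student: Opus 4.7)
The plan is to derive the desired $\limsup$ inequality by running the upper relative-entropy bound from Lemma \ref{convexity_lemma_not_normalized} against the given $\limsup$ hypothesis. The basic observation is that Lemma \ref{convexity_lemma_not_normalized}, applied with $Y = E_i$ (which is permissible because every iterate $X^k$ is interior and therefore $RE(E_i, X^k) = -\ln X^k(i)$ is well defined), produces an inequality whose telescoping sum precisely relates $RE(E_i, X^0) - RE(E_i, X^{K+1})$ to $\sum_{k=0}^K \alpha_k (E_i - X^k) \cdot CX^k$ up to an additive error controlled by Lemma \ref{SC_lemma}.

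Concretely, I would first rearrange
\begin{align*}
\alpha_k (E_i - X^k) \cdot CX^k \leq RE(E_i, X^k) - RE(E_i, X^{k+1}) + \alpha_k (\exp\{m\alpha_k\} - 1) \bar{C},
\end{align*}
sum over $k = 0, \ldots, K$, and divide by $A_K$, obtaining
\begin{align*}
\frac{1}{A_K} \sum_{k=0}^K \alpha_k (E_i - X^k) \cdot CX^k \leq \frac{1}{A_K} \bigl( RE(E_i, X^0) - RE(E_i, X^{K+1}) \bigr) + \frac{\bar{C}}{A_K} \sum_{k=0}^K \alpha_k (\exp\{m\alpha_k\} - 1).
\end{align*}
The last term on the right vanishes in the limit by Lemma \ref{SC_lemma} (using the learning-rate assumptions inherited from Theorem \ref{asymptotic_convergence_theorem}). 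Taking $\limsup$ on both sides, and using that $\limsup (u_K + v_K) = \limsup u_K + \lim v_K$ whenever $\{ v_K \}$ converges, the right-hand side reduces to $\limsup_K \frac{1}{A_K} ( RE(E_i, X^0) - RE(E_i, X^{K+1}) )$, while by assumption \eqref{you} the left-hand side equals $0$. This yields precisely \eqref{you2}.

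Since every ingredient is already proved in the paper, there is no real obstacle; the only mild care needed is to justify the $\limsup$ splitting above (which follows from the additivity of $\limsup$ with a convergent summand) and to note that $RE(E_i, X^{k+1})$ makes sense despite $E_i$ lying on the boundary of $\mathbb{X}(C)$, which holds because Hedge keeps every iterate in $\mathbb{\mathring{X}}(C)$ for all finite $k$.
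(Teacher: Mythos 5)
Your proposal is correct and follows essentially the same route as the paper's own proof: apply Lemma \ref{convexity_lemma_not_normalized} with $Y = E_i$, telescope, divide by $A_K$, kill the error term via Lemma \ref{SC_lemma}, and compare $\limsup$'s using hypothesis \eqref{you}. The only difference is cosmetic (you rearrange the inequality before summing and explicitly justify the $\limsup$ splitting and the well-definedness of $RE(E_i, X^k)$, which the paper leaves implicit).
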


\begin{proof}
Invoking Lemma \ref{convexity_lemma_not_normalized},
\begin{align*}
RE(E_i, X^{k+1}) \leq RE(E_i, X^k) - \alpha (E_i-X^k) \cdot CX^k + \alpha (\exp\{m\alpha\} - 1) \bar{C}.
\end{align*}
Summing over $k = 0, \ldots, K$, we obtain
\begin{align*}
RE(E_i, X^{K+1}) \leq RE(E_i, X^0) - \sum_{k=0}^K \alpha_k (E_i - X^k) \cdot CX^k + \sum_{k=0}^K \alpha_k (\exp\{m\alpha_k\}-1) \bar{C}
\end{align*}
and, dividing by $A_K$, we further obtain
\begin{align}
\frac{RE(E_i, X^{K+1})}{A_K} \leq \frac{RE(E_i, X^0)}{A_K} - \frac{1}{A_K} \sum_{k=0}^K \alpha_k (E_i - X^k) \cdot CX^k + \frac{1}{A_K} \sum_{k=0}^K \alpha_k (\exp\{m\alpha_k\}-1) \bar{C}.\label{youu}
\end{align}
By Lemma \ref{SC_lemma}, we have
\begin{align}
\lim_{K \rightarrow \infty} \frac{1}{A_K} \sum_{k=0}^K \alpha_k (\exp\{m\alpha_k\} - 1) = 0.\label{youuu}
\end{align}
Then \eqref{you}, \eqref{youu}, and \eqref{youuu} imply \eqref{you2}.
\end{proof}

\fi

\begin{theorem}
\label{persistent_best_responses}
For all $\epsilon > 0$, there exists an iteration $N$ such that, for all $K \geq N$, every pure strategy that supports a limit point of $\left\{ \bar{X}^K \right\}$ is an $\epsilon$-approximate best response to $\bar{X}^K$ and every limit point of $\left\{ \bar{X}^K \right\}$ is a well-supported $\epsilon$-approximate best response to $\bar{X}^K$.
\end{theorem}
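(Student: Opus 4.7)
The plan is to reduce both assertions of the theorem to a single uniform statement: for every pure strategy $E_i$ supporting some limit point of $\{\bar{X}^K\}$ and every $\epsilon > 0$, there is a threshold $N_i(\epsilon)$ past which $E_i$ is an $\epsilon$-approximate best response to $\bar{X}^K$. Because $\mathcal{K}(C)$ is finite, so is the set of such $i$, and setting $N(\epsilon) := \max_i N_i(\epsilon)$ yields the first assertion uniformly. The second assertion is then immediate from the definition of a well-supported $\epsilon$-approximate best response: for any limit point $\bar{X}$, every $i \in \mathcal{C}(\bar{X})$ is one of the pure strategies covered by the first assertion, so $E_i$ is an $\epsilon$-approximate best response to $\bar{X}^K$ whenever $K \geq N(\epsilon)$, which is exactly what the definition demands.

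To prove the reduced statement, I would fix $\epsilon > 0$ and a pure strategy $E_i$ with $\tilde{X}(i) > 0$ for some limit point $\tilde{X}$, then suppose toward contradiction that $(C\bar{X}^{K_j})_i < (C\bar{X}^{K_j})_{\max} - \epsilon$ along an infinite subsequence $\{K_j\}$. Compactness of $\mathbb{X}(C)$ lets me pass to a further sub-subsequence along which $\bar{X}^{K_j}$ converges to some point $X^*$, which is by construction a limit point of $\{\bar{X}^K\}$. Continuity in $X$ of both $(CX)_i$ and $(CX)_{\max} = \max_j E_j \cdot CX$ (the latter as the maximum of finitely many linear functionals) transports the strict deficit to the limit, yielding $(CX^*)_i \leq (CX^*)_{\max} - \epsilon$. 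On the other hand, $\tilde{X} \cdot CX^* = (CX^*)_{\max}$: when $\tilde{X} \neq X^*$ this is exactly Lemma \ref{limit_points_are_best_responses_to_each_other}, and when $\tilde{X} = X^*$ it is the equilibrium property of $\tilde{X}$ supplied by Theorem \ref{asymptotic_convergence_theorem}. Because $(CX^*)_j \leq (CX^*)_{\max}$ for every $j$, that equality forces $(CX^*)_j = (CX^*)_{\max}$ for every $j \in \mathcal{C}(\tilde{X})$, in particular for $j = i$, contradicting the strict deficit.

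I do not anticipate a serious obstacle: Lemma \ref{limit_points_are_best_responses_to_each_other} does essentially all the heavy lifting once the strict deficit has been transported to the limit by continuity, and the uniformity across $i$ is handled by the finiteness reduction at the outset. The only point requiring mild care is the case split between $\tilde{X} = X^*$ and $\tilde{X} \neq X^*$, since Lemma \ref{limit_points_are_best_responses_to_each_other} is formally stated for distinct limit points; but the coincident case is trivially covered by the equilibrium property of $\tilde{X}$ already established in Theorem \ref{asymptotic_convergence_theorem}, so no new analysis is required.
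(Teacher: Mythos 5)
Your proof is correct, but it takes a genuinely different route from the paper's. The paper argues quantitatively through the time-averaged payoff $\frac{1}{A_K}\sum_{k=0}^K \alpha_k X^k\cdot CX^k$: Lemma \ref{average_payoff_lemma} pins $(C\bar{X}^K)_{\max}$ to that average from above, while for any supporting pure strategy $E_i$ the combination of Lemmas \ref{main_convergence_lemma}, \ref{probability_mass_convergence_2} and \ref{probability_mass_convergence_7} forces $\frac{1}{A_K}\sum_{k=0}^K\alpha_k(E_i-X^k)\cdot CX^k\to 0$, pinning $(C\bar{X}^K)_i$ to the same average from below; subtracting gives the uniform bound. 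You instead run a soft compactness argument: assume the deficit $(C\bar{X}^{K_j})_{\max}-(C\bar{X}^{K_j})_i>\epsilon$ recurs, extract a convergent sub-subsequence with limit $X^*$, transport the deficit by continuity, and contradict the fact that the limit point $\tilde{X}$ with $\tilde{X}(i)>0$ is a best response to $X^*$ (Lemma \ref{limit_points_are_best_responses_to_each_other} when $\tilde{X}\neq X^*$, Theorem \ref{asymptotic_convergence_theorem} when they coincide — your case split is handled correctly, and $\tilde{X}\cdot CX^*=(CX^*)_{\max}$ together with $\tilde{X}(i)>0$ does force $(CX^*)_i=(CX^*)_{\max}$). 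Your version is shorter and exposes the theorem as essentially a corollary of the structure of the limit set, at the cost of being purely qualitative and of leaning on Lemma \ref{limit_points_are_best_responses_to_each_other}, which itself is built on the same low-level machinery the paper invokes directly; the paper's route keeps the estimate explicit in terms of the averaged payoff, which is what its neighboring results (e.g., Corollary \ref{approximate_equilibria_corollary}) also use. One small remark: for the second assertion you read the definition of a well-supported $\epsilon$-approximate best response as requiring only that the carrier strategies be $\epsilon$-approximate best responses; if one also wants the limit point itself to be an $\epsilon$-approximate best response, that follows by taking the convex combination $\sum_i X^*(i)\left((C\bar{X}^K)_{\max}-(C\bar{X}^K)_i\right)<\epsilon$, which is the one-line step the paper's proof spells out.
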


\begin{proof}
Lemma \ref{average_payoff_lemma} implies on one hand that
\begin{align}
\forall \epsilon > 0 \mbox{ } \exists N > 0 \mbox{ } \forall K \geq N : (C\bar{X}^K)_{\max} - \frac{1}{A_K} \sum_{k=0}^{K} \alpha_k X^k \cdot CX^k < \epsilon.\label{bet1}
\end{align}
On the other hand, let $E_i$ be such that it supports a limit point of $\left\{ \bar{X}^K \right\}$. Then
\begin{align*}
\liminf\limits_{K \rightarrow \infty} \left\{ \frac{1}{A_K} \sum_{k=0}^K \alpha_k (E_i - X^k) \cdot CX^k \right\} = 0
\end{align*}
for otherwise Lemmas \ref{probability_mass_convergence_2} and \ref{probability_mass_convergence_7} imply that $\bar{X}^K(i) \rightarrow 0$. Therefore, combining with Lemma \ref{main_convergence_lemma}, we obtain
\begin{align*}
\lim_{K \rightarrow \infty} \left\{ \frac{1}{A_K} \sum_{k=0}^K \alpha_k (E_i - X^k) \cdot CX^k \right\} = 0,
\end{align*}
which implies that
\begin{align}
\forall \epsilon > 0 \mbox{ } \exists N > 0 \mbox{ } \forall K \geq N : (C\bar{X}^K)_i - \frac{1}{A_K} \sum_{k=0}^K \alpha_k X^k \cdot CX^k > - \epsilon.\label{bet2}
\end{align}
Combining \eqref{bet1} and \eqref{bet2}, we obtain
\begin{align}
\forall \epsilon > 0 \mbox{ } \exists N > 0 \mbox{ } \forall K \geq N : (C\bar{X}^K)_{\max} - (C\bar{X}^K)_i < \epsilon.\label{joinbed}
\end{align}
This proves the first part of the lemma. To prove the second part, let $X^*$ be a limit point of $\left\{ \bar{X}^K \right\}$ and observe that \eqref{joinbed} implies
\begin{align*}
\forall i \in \mathcal{C}(X^*) \mbox{ } \forall \epsilon > 0 \mbox{ } \exists N > 0 \mbox{ } \forall K \geq N : (C\bar{X}^K)_{\max} - (C\bar{X}^K)_i < \epsilon.
\end{align*}
Multiplying the previous inequality by $X^*(i), i =1, \ldots, n$ and summing over $i$, we obtain
\begin{align*}
\forall \epsilon > 0 \mbox{ } \exists N > 0 \mbox{ } \forall K \geq N : (C\bar{X}^K)_{\max} - X^* \cdot C\bar{X}^K < \epsilon.
\end{align*}
That is, $X^*$ is a well-supported approximate best response to $\bar{X}^K$ as claimed.
\end{proof}

\if(0)

We finally obtain the following ``clean separation'' phenomenon based on best responses:

\begin{corollary}
\label{best_response_separation_corollary}
There exists $\hat{\epsilon} > 0$ such that, for all $\epsilon < \hat{\epsilon}$, there exists an iteration $N$ such that for all $K \geq N$ every pure strategy that is a best response to a limit point of $\left\{ \bar{X}^K \right\}$ is an $\epsilon$-approximate best response to $\bar{X}^K$ and the remaining pure strategies are not.
\end{corollary}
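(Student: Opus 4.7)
The plan is to use Lemma \ref{lovestar} to split the pure strategies $\{E_1, \ldots, E_n\}$ into two finite classes: $\mathcal{B}$ containing those $E_i$ with $\limsup_{K \to \infty} \frac{1}{A_K} \sum_{k=0}^K \alpha_k (E_i - X^k) \cdot CX^k = 0$ (the best responses to at least one limit point of $\{\bar{X}^K\}$), and $\mathcal{B}^c$ containing the remaining pure strategies, for which the $\limsup$ is strictly negative.

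Handling $\mathcal{B}^c$ is immediate from Lemma \ref{love1}: for each $i \in \mathcal{B}^c$ the lemma supplies a positive $\hat{\epsilon}_i$ and an iteration $N_i$ past which $E_i$ is not a $\hat{\epsilon}_i$-approximate best response to $\bar{X}^K$. Setting $\hat{\epsilon} = \min_{i \in \mathcal{B}^c} \hat{\epsilon}_i > 0$ and $N_- = \max_{i \in \mathcal{B}^c} N_i$ (both finite since $|\mathcal{B}^c| \leq n$), for every $\epsilon < \hat{\epsilon}$ and $K \geq N_-$ no strategy in $\mathcal{B}^c$ is an $\epsilon$-approximate best response to $\bar{X}^K$. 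This delivers the ``remaining pure strategies are not'' half of the statement.

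For $\mathcal{B}$, the target is to show $\lim_{K \to \infty} [(C\bar{X}^K)_{\max} - (C\bar{X}^K)_i] = 0$, from which the $\epsilon$-approximate best response conclusion is automatic for every $\epsilon > 0$. The route is compactness and continuity: assume for contradiction that the nonnegative gap has positive $\limsup$ along some subsequence; a Bolzano-Weierstrass extraction on the compact simplex yields a convergent sub-subsequence $\bar{X}^{K_{j_\ell}} \to \bar{X}$, and continuity of the linear operator $C$ and of the $\max$ functional forces $(C\bar{X})_{\max} - (C\bar{X})_i > 0$, i.e., $E_i$ is not a best response to the limit point $\bar{X}$. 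Taking $N = \max(N_-, N_+)$ with $N_+$ chosen so that the $\mathcal{B}$-side gap falls below $\epsilon$ then completes the argument.

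\textbf{The main obstacle} is that the compactness step above demands $E_i$ to be a best response to \emph{every} limit point, whereas Lemma \ref{lovestar} only furnishes \emph{some} limit point. The corollary is thus cleanest under the universal reading of ``a best response to a limit point of $\{\bar{X}^K\}$''---a reading automatically enforced when the limit set is a singleton, e.g.~via Corollary \ref{elementary_convergence_corollary}. Under the merely existential reading I expect the subtlety to require an extra ingredient: I would appeal to Theorem \ref{equilibrium_polytope_limitset_theorem} together with the carrier analysis from the proof of Lemma \ref{equilibrium_set_affine_hull_lemma}, which singles out the subgame whose pure strategies are simultaneously best responses to every element of the limit equilibrium polytope, and try to argue that the pure strategies in $\mathcal{B}$ must lie in this subgame; this is the step I expect to consume most of the technical effort.
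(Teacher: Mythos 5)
Your partition of the pure strategies by the sign of $\limsup_{K}\left\{\frac{1}{A_K}\sum_{k=0}^K\alpha_k(E_i-X^k)\cdot CX^k\right\}$ is exactly the paper's (its sets $\mathcal{K}^+$ and $\mathcal{K}^-$ are your $\mathcal{B}$ and $\mathcal{B}^c$), and your treatment of $\mathcal{B}^c$ via Lemma \ref{love1} --- taking $\hat{\epsilon}=\min_i\hat{\epsilon}_i$ and $N_-=\max_i N_i$ over a finite index set --- coincides with the paper's and is correct. The difference, and the problem, is on the $\mathcal{B}$ side. You are right that your compactness argument breaks down: it would need $E_i$ to be a best response to \emph{every} limit point, while Lemma \ref{lovestar} certifies only \emph{some} limit point; you flag this step and leave it open, so the proposal is incomplete as written. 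Your suggested repair via Theorem \ref{equilibrium_polytope_limitset_theorem} does not obviously close it either: Lemmas \ref{limit_points_are_best_responses_to_each_other} and \ref{equilibrium_set_affine_hull_lemma} guarantee that pure strategies in the union of the carriers of the limit points are best responses to every point of the limit polytope, but a pure strategy outside that union can in principle be a best response to one limit point and not to another, and nothing in the paper excludes this.

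For comparison, the paper's own proof handles $\mathcal{K}^+$ by citing Theorem \ref{persistent_best_responses}, but that theorem covers only pure strategies that \emph{support} a limit point: for those, $\liminf_{K}\left\{\frac{1}{A_K}\sum_{k=0}^K\alpha_k(E_i-X^k)\cdot CX^k\right\}=0$ (via Lemmas \ref{probability_mass_convergence_2} and \ref{probability_mass_convergence_7}), hence the full limit is $0$ by Lemma \ref{main_convergence_lemma}, and combining with Lemma \ref{average_payoff_lemma} gives the eventual $\epsilon$-approximate best-response property. For a member of $\mathcal{K}^+$ that is a best response to some limit point without lying in its carrier, only $\limsup=0$ is available; the gap $(C\bar{X}^K)_{\max}-(C\bar{X}^K)_i$ then has $\liminf$ equal to $0$ but possibly positive $\limsup$, which yields only the recurrent (``infinitely often'') conclusion of Theorem \ref{recurring_best_responses_phenomenon}, not the ``for all $K\geq N$'' conclusion the corollary asserts. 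So the obstacle you identified is genuine, it is present in the paper's own argument as well (which is presumably why the corollary is disabled in the source and followed by a caveat to this effect), and neither your route nor the paper's closes it.
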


\begin{proof}
Let
\begin{align*}
\mathcal{K}^+ = \left\{ i \in \mathcal{K}(C) \bigg| \limsup\limits_{K \rightarrow \infty} \left\{ \frac{1}{A_K} \sum_{k=0}^K \alpha_k (E_i - X^k) \cdot CX^k \right\} = 0  \right\}
\end{align*}
and
\begin{align*}
\mathcal{K}^- = \left\{ j \in \mathcal{K}(C) \bigg| \limsup\limits_{K \rightarrow \infty} \left\{ \frac{1}{A_K} \sum_{k=0}^K \alpha_k (E_j - X^k) \cdot CX^k \right\} < 0  \right\}.
\end{align*}
Lemma \ref{lovestar} implies that the pure strategies that are best responses to a limit point of $\left\{ \bar{X}^K \right\}$ are precisely those strategies in $\mathcal{K}^+$. Furthermore, Theorem \ref{persistent_best_responses} implies that for all $\epsilon > 0$ there exists an iteration $N$ such that for all $K \geq N$, $i \in \mathcal{K}^+$ implies that $E_i$ is an $\epsilon$-approximate best response to $\bar{X}^K$. Moreover, Lemma \ref{love1} implies that if $i \in \mathcal{K}^-$, there exists $\hat{\epsilon}_i > 0$ and $N_i > 0$ such that, $\forall K \geq N_i$, $E_i$ is not a $\hat{\epsilon}_i$-approximate best response to $\bar{X}^K$. Together these observations imply the corollary.
\end{proof}

The previous results do not eliminate the possibility that there exists a pure strategy that is a best response to a limit point of $\bar{X}^K$ only in the limit as $K \rightarrow \infty$. Demonstrating the existence or eliminating the possibility of such a phenomenon is an interesting question for future work.

\fi

\section*{Acknowledgments}

I would like to thank Robert Vanderbei for a helpful discussion and for reading an earlier version of this manuscript. I would also like to thank YouTube for being a true companion in my research.

\bibliographystyle{abbrvnat}
\bibliography{real}

\end{document}